\newtheorem{thm}{\bf Theorem}
\newtheorem{lem}{{ Lemma}}
\newtheorem{prop}[thm]{{\it Proposition}}
\newtheorem{defn}{Definition}
\newtheorem{remark}[thm]{Remark}
\newtheorem{eg}{Example}[section]
\title{Multi-group ML Decodable Collocated and Distributed Space Time Block Codes}
\author{G. Susinder Rajan and B. Sundar Rajan
\thanks{This work was supported partly by the DRDO-IISc program on Advanced Research in Mathematical Engineering, and partly by the Council of Scientific \& Industrial Research (CSIR, India) Research Grant (22(0365)/04/EMR-II). The material in this paper was presented in part at the IEEE Information Theory Workshop, Chengdu, October 22-26, 2006 \cite{RaR1,RaR2}, in part at the IEEE Wireless Communications and Networking Conference, Hong Kong, March 11-15, 2007 \cite{RTR}, in part at the IEEE International Symposium on Information Theory, Nice, June 24-29, 2007 \cite{RaR3,RaR4} and also in part at the IEEE International Conference on Communications 2008 \cite{RaR5}, Beijing, May 2008. The authors are with the Department of Electrical Communication Engineering, Indian Institute of Science, Bangalore-560012, India. Email:\{susinder,bsrajan\}@ece.iisc.ernet.in.}}
\begin{document}
\maketitle
\begin{abstract}
In this paper, collocated and distributed space-time block codes (DSTBCs) which admit multi-group maximum likelihood (ML) decoding are studied. First the collocated case is considered and the problem of constructing space-time block codes (STBCs) which optimally tradeoff rate and ML decoding complexity is posed. Recently, sufficient conditions for multi-group ML decodability have been provided in the literature and codes meeting these sufficient conditions were called Clifford Unitary Weight (CUW) STBCs. An algebraic framework based on extended Clifford algebras is proposed to study CUW STBCs and using this framework, the optimal tradeoff between rate and ML decoding complexity of CUW STBCs is obtained for few specific cases. Code constructions meeting this tradeoff optimally are also provided. The paper then focuses on multi-group ML decodable DSTBCs for application in synchronous wireless relay networks and three constructions of four-group ML decodable DSTBCs are provided. Finally, the OFDM based Alamouti space-time coded scheme proposed by Li-Xia for a $2$ relay asynchronous relay network is extended to a more general transmission scheme that can achieve full asynchronous cooperative diversity for arbitrary number of relays. It is then shown how differential encoding at the source can be combined with the proposed transmission scheme to arrive at a new transmission scheme that can achieve full cooperative diversity in asynchronous wireless relay networks with no channel information and also no timing error knowledge at the destination node. Four-group decodable DSTBCs applicable in the proposed OFDM based transmission scheme are also given.
\end{abstract}
\begin{keywords}
\begin{center}
Asynchronous communication, Clifford algebra, cooperative diversity, decoding complexity, distributed space time codes, OFDM, space-time codes.
\end{center}
\end{keywords}
\section{Introduction}
\label{sec1}
Space-Time coding for Multiple Input Multiple Output (MIMO) systems has seen a lot of progress in the last decade. Starting from orthogonal designs \cite{TJC,TiH1,Lia} and quasi orthogonal designs \cite{Jaf,TBH,YGT,WWX,DYTGT}, several space-time block code (STBC) constructions have been proposed in the literature including the recently proposed space-time block codes from division algebras \cite{SRS1}, crossed product algebras \cite{SRS2}, co-ordinate interleaved orthogonal designs \cite{KhR} and Clifford algebras \cite{KaR1,KaR2,KaR3,KaR4}. Several aspects of space-time block codes (STBCs) have been studied in the literature. In the high SNR regime, two main aspects which dictate the error performance are diversity gain and coding gain. Of these two aspects, diversity gain has been well studied and presently many high rate, full diversity STBC constructions are available in the literature. An important class of such codes is the ones from division algebras \cite{SRS1}. Coding gain has remained an open problem not only for MIMO channels but also for Single Input Single Output channels and the AWGN channel. Later few more aspects such as the information lossless property \cite{HaH} and the diversity-multiplexing gain tradeoff \cite{ZhT,TaV} were introduced. Explicit STBCs satisfying these additional requirements were also obtained from division algebras \cite{KiR1,ORBV,EKPK}. However, there are still other important issues that need to be addressed. One such important issue is the Maximum Likelihood (ML) decoding complexity of STBCs. The lattice decoder or sphere decoder \cite{ViB,DCB} is known to be an efficient ML decoder. However, the complexity of a sphere decoder \cite{HaV,JaO} is also prohibitively large for high rate STBCs such as those from division algebras. For example, decoding a $4\times 4$ STBC from cyclic division algebras is equivalent to decoding a $32$ dimensional real lattice and performing a simulation to obtain an error performance curve can easily take several weeks. Thus it is not practically feasible to implement ML decoding for the `good' performing codes in the literature. It is well known \cite{YGT,WWX,KhR} that STBCs obtained from orthogonal designs (ODs) using QAM constellation admit single real symbol decoding and give full diversity. But for $4$ Tx antennas, an OD which provides a transmission rate of $1$ complex symbol per channel use does not exist \cite{TJC,TiH1,Lia}. However, it was shown in \cite{YGT,WWX,KhR,KaR1} that a single complex symbol decodable ($2$ real symbol decodable) full diversity STBC for $4$ transmit antennas can be constructed. Later in \cite{DYTGT,KaR2,KaR3,KaR4}, the general framework of multi-symbol decodable or multi-group decodable STBCs was introduced to improve the transmission rate. Multi-symbol or multi-group decodable STBCs admit ML decoding to be done separately for groups of symbols rather than all the symbols together thus reducing the ML decoding complexity. The class of STBCs from ODs correspond to the case of one real symbol per group. Thus it is clear that there is a tradeoff involving rate, ML decoding complexity and number of transmit antennas for full diversity STBCs. In the first part of this paper, measures of rate and ML decoding complexity are given and the problem of optimally trading off rate for ML decoding complexity within the framework of multi-group decodable STBCs is formally posed. A partial solution to this general problem is provided by characterizing this tradeoff for a certain specific class of STBCs called as Clifford Unitary Weight (CUW) STBCs \cite{KaR1,KaR2,KaR3,KaR4}. An algebraic framework based on extended Clifford algebras is introduced to study CUW STBCs. This framework is used to obtain the optimal rate-ML decoding complexity tradeoff and also to construct CUW STBCs meeting this tradeoff optimally. Recently in \cite{BHV,PGA,SaF}, a $2\times 2$ high rate, information lossless STBC with low ML decoding complexity and non-vanishing determinants has been discussed. This $2\times 2$ STBC is not a multi-group decodable STBC and such STBCs are not considered in this paper.

The second part of the paper focuses on constructing distributed space-time block codes (DSTBCs) with low ML decoding complexity for the Jing-Hassibi protocol \cite{JiH}. Distributed space-time coding \cite{LaW,JiH} is a coding technique for exploiting cooperative diversity in wireless relay networks wherein each relay is made to transmit a column of a space-time code thereby imitating a multiple antenna system. There are mainly two types of processing at the relay nodes that are widely discussed in the literature: (1) amplify and forward and (2) decode and forward. Throughout this paper, we focus only on amplify and forward based protocols for three reasons: (1) relay nodes are not required to decode and re-encode, (2) relay nodes do not require the channel knowledge for processing (this feature can permit a possible extension of the protocol to a completely non-coherent strategy) and (3) simpler processing at the relay nodes. In \cite{JiH}, Jing and Hassibi have proposed an amplify and forward based two phase transmission protocol for achieving cooperative diversity in wireless relay networks. This protocol essentially employs STBCs satisfying certain additional conditions to take care of the distributed nature. We call such codes satisfying certain additional conditions as DSTBCs to distinguish them from collocated STBCs. Analogous to the case of collocated STBCs, for large number of relays, the ML decoding complexity of DSTBCs becomes too prohibitive at the destination and thus is an important issue that needs to be addressed. Most of the previous works on DSTBCs \cite{OgH,EOK,EVAK} fail to address this issue. In \cite{KiR2}, full diversity, two-group ML decodable DSTBCs were constructed using division algebras. In \cite{JiJ1}, quasi-orthogonal STBCs were proposed for use as DSTBCs for the specific case of $4$ relays. In the second part of this paper, using the algebraic framework of extended Clifford algebras introduced in the first part, three new classes of four-group decodable full diversity DSTBCs for any number of relays are constructed.

The Jing-Hassibi protocol assumes that there is perfect symbol synchronization amongst the relay nodes and that the signals transmitted from the relays arrive at the same time at the destination. But achieving symbol synchronization among the geographically distributed relay nodes is a challenging and difficult task in practice. Several works in the literature \cite{LiX,GuX,WGV,XLi,YLX,ShX,MHSD,LZX} have recognized this as a major bottleneck and have proposed many coding and transmission techniques to mitigate the effects of symbol asynchronism. Most of the works based on amplify and forward propose methods to achieve full cooperative diversity in asynchronous wireless relay networks, but however fail to address the ML decoding complexity issues. In \cite{LiX}, a OFDM based Alamouti transmission scheme is proposed to combat the effects of symbol asynchronism. The Li-Xia transmission scheme is particularly interesting because of its associated simplicity and low decoding complexity. In this scheme, OFDM is implemented at the source node and time reversal/conjugation is performed at the relay nodes on the received OFDM symbols from the source node. The received signals at the destination after OFDM demodulation are shown to have the Alamouti STBC structure and hence single symbol maximum likelihood (ML) decoding can be performed. However, the Alamouti code is applicable only for the case of two relay nodes and for larger number of relays, the authors of \cite{LiX} propose to cluster the relay nodes and employ Alamouti code in each cluster. But this clustering technique provides diversity order of only two and fails to exploit the diversity available in the network. Motivated by the results of \cite{LiX}, in the third part of this paper it is shown that the DSTBCs proposed in this paper can be used along with OFDM to achieve full asynchronous cooperative diversity for any number of relays along with low ML decoding complexity.

Finally it is shown how differential encoding at the source node can be combined with the proposed OFDM based transmission scheme to arrive at a new transmission scheme that provides full cooperative diversity in asynchronous relay networks with no channel information and also no timing error knowledge at any of the nodes.

The main contributions of this paper can be summarized as follows:
\begin{itemize}
\item A new measure of rate of an STBC is defined and the problem of optimal tradeoff between rate and ML decoding complexity within the framework of multi-group ML decodable STBCs is posed. An algebraic framework based on extended Clifford algebras is introduced for studying CUW STBCs. Using this algebraic framework and tools from representation theory of groups, the optimal tradeoff between rate and ML decoding complexity of CUW STBCs is characterized for certain specific cases.
\item Constructions of CUW STBCs meeting this optimal tradeoff for the specific cases are also provided. The ABBA construction first proposed in \cite{TBH} is shown to be a certain specific matrix representation of extended Clifford algebras and hence they fall under the class of CUW STBCs. The contributions on multi-group ML decodable collocated STBCs are described in Section \ref{sec2}.
\item The Jing-Hassibi protocol \cite{JiH} is generalized to allow non-unitary matrices at the relays. The necessary and sufficient conditions needed for DSTBCs to be multi-group ML decodable are identified and three new classes of four group ML decodable DSTBCs which achieve full cooperative diversity for any number of relays are also provided. To the knowledge of the authors, these are the first known DSTBCs that achieve the least possible ML decoding complexity compared to all other DSTBC constructions, having the same transmission rate in complex symbols per channel use in the literature. This contribution is detailed in Section \ref{sec3}.
\item The OFDM based Alamouti transmission scheme for $2$ relays in \cite{LiX} is extended to a more general transmission scheme that can achieve full asynchronous cooperative diversity for any number of relays. Sufficient conditions for a DSTBC to be compatible with the requirements of this OFDM based transmission scheme are provided and the four-group decodable DSTBCs in this paper are shown to satisfy these sufficient conditions.
\item It is shown how differential encoding at the source node can be combined with the proposed OFDM based transmission scheme to arrive at a new transmission scheme that provides full cooperative diversity in asynchronous relay networks with no channel information and also no timing error knowledge at any of the nodes. All the results based on OFDM for asynchronous relay networks comprise Section \ref{sec4}.
\end{itemize}

\subsection{Notation}
Vectors and matrices are denoted by lowercase bold letters and uppercase bold letters respectively. $\mathbf{I_m}$, $\mathbf{0_m}$ denote an $m\times m$ identity matrix and $m\times m$ all zero matrix respectively. $\mathbf{I}$ and $\mathbf{0}$ are used to denote an identity matrix and an all zero matrix respectively having an appropriate size depending on the context. For a set $A$, the cardinality of $A$ is denoted by $|A|$. A null set is denoted by $\phi$. For a matrix, $(.)^T$, $(.)^*$ and $(.)^H$ denote transposition, conjugation and conjugate transpose operations respectively. For a complex matrix $\mathbf{X}$, the matrices $\mathbf{X_I}$ and $\mathbf{X_Q}$ denote the matrices obtained by taking the real and imaginary parts of $\mathbf{X}$ respectively. If $B$ is a module over a base ring $R$, then $End_{R}B$ denotes the set of all $R$ linear maps from $B$ to $B$. For sets $A_1$ and $A_2$, the Cartesian product of $A_1$ and $A_2$ is denoted by $A_1\times A_2$. For groups $G_1$ and $G_2$, the direct product of $G_1$ and $G_2$ is denoted by $G_1\times G_2$. For vector spaces $V_1$ and $V_2$, the tensor product of $V_1$ and $V_2$ is denoted by $V_1\otimes V_2$. For a vector space $V$, $GL(V)$ is used to denote the set of invertible linear maps from $V$ to $V$.

\section{Multi-group ML Decodable Collocated STBCs}
\label{sec2}
In this section, multi-group ML decodable collocated STBCs are discussed. In Subsection \ref{subsec2_1}, a relation between STBCs and linear space-time designs is given and using this relation, measures of rate and ML decoding complexity of STBCs are defined. In Subsection \ref{subsec2_2}, linear space-time designs are classified based upon the classification done in \cite{KaR1} for single complex symbol decodable codes. An algebraic framework based on extended Clifford algebras is introduced to study a class of linear space-time designs called Clifford unitary weight designs. Using this algebraic framework, the optimal tradeoff between rate and ML decoding complexity of STBCs from Clifford unitary weight designs is characterized under some conditions in Subsection \ref{subsec2_3}.

\subsection{STBCs and Linear Space-Time Designs}
\label{subsec2_1}

In this subsection, a connection between STBCs and linear space-time designs is established. Using this relation, measures of rate and ML decoding complexity of a STBC are then defined.

\begin{defn}
\label{defn_stbc}
A STBC $\mathscr{C}$ of size $T \times N_T$  is a finite set of $T\times N_T$ complex matrices.
\end{defn}

Let $N_T$ denote the number of transmit antennas, $N_R$ denote the number of receive antennas and $T$ denote the number of channel uses consumed for transmitting a space-time codeword. Then the rate of transmission in bits per channel use (bpcu) of a STBC as in Definition \ref{defn_stbc} is given by $\frac{\log_2 |\mathscr{C}|}{T}$ bpcu. In this paper, we use a different measure of rate which is motivated by basic concepts of dimension in linear algebra. This measure is also indicative of the coding gain of the STBC and several examples of STBCs in the literature are discussed to illustrate the significance of the new measure of rate introduced in this paper.

Note that the set of all $T\times N_T$ complex matrices is a vector space over the field of real numbers $\mathbb{R}$ and has a dimension of $2TN_T$ over $\mathbb{R}$. Consider the subspace $\langle\mathscr{C}\rangle$ spanned by the codewords, i.e., the elements of $\mathscr{C}$. Let $K$ denote the dimension of $\langle\mathscr{C}\rangle$ over $\mathbb{R}$ and let $\mathbf{A_i}, i=1,\dots,K \in \mathbb{C}^{T\times N_T}$ be a basis for $\langle\mathscr{C}\rangle$. Then every element of $\mathscr{C}$ can be expressed as $\sum_{i=1}^{K}x_i\mathbf{A_i}$ for some $x_i, i=1,\dots,K \in \mathbb{R}$. If we think of the $x_i$'s as real variables and $\mathbf{S}(\mathbf{s}=\left[\begin{array}{cccc}x_1 & x_2 & \dots & x_K\end{array}\right]^T)=\sum_{i=1}^{K}x_i\mathbf{A_i}$ as a matrix whose entries are complex linear functions of the real variables, then the STBC $\mathscr{C}$ can be expressed as
\begin{equation}
\label{eqn_lstd}
\mathscr{C}=\left\{S(\mathbf{s})|\mathbf{s}\in\mathscr{A}\right\}
\end{equation}
for some finite subset $\mathscr{A}\subset\mathbb{R}^K$.

\begin{defn}
\label{defn_lstd}
A linear space-time design (LSTD) $\mathbf{S}(\left[\begin{array}{cccc}x_1 & x_2 & \dots & x_K\end{array}\right]^T)$ of size $T\times N_T$ in real variables $x_1,x_2,\dots,x_K$ is a $T\times N_T$ matrix which can be expressed as $\sum_{i=1}^{K}x_i\mathbf{A_i}$ for some\\ $\mathbf{A_i}, i=1,2,\dots, K\in \mathbb{C}^{T\times N_T}$ which are linearly independent over the field of real numbers.
\end{defn}

The notion of linear independence of weight matrices of a LSTD over $\mathbb{R}$ has not been stressed or mentioned explicitly in most previous works though it has been implicitly assumed.

Notice that \eqref{eqn_lstd} specifies a way to describe STBCs using linear space-time designs (LSTDs) and also explicitly provides a method to encode STBCs. From an encoding perspective, the real variables can be thought of as modulating the matrices $\mathbf{A_i}, i=1,\dots,K$. Hence we call the matrices $\mathbf{A_i}, i=1,\dots,K$ as basis matrices or modulation matrices or weight matrices. The vector of real variables $\mathbf{s}$ takes values from $\mathscr{A}\subset\mathbb{R}^K$. We call $\mathscr{A}$ as the signal set. The connection between STBCs and LSTDs is pictorially depicted in Fig. \ref{fig_stbc_lstd}.

\begin{figure}[htp]
\centering
\input{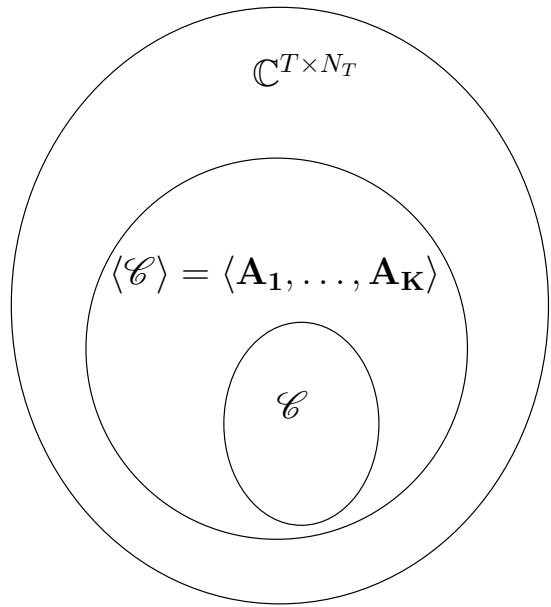}
\caption{STBCs and linear space-time designs}
\label{fig_stbc_lstd}
\end{figure}

\begin{remark}
Note that for a given STBC $\mathscr{C}$ the set of basis matrices $\mathbf{A_i}, i=1,\dots,K$ along with the associated signal set $\mathscr{A}$ is not unique, i.e., there may exist another set of basis matrices with some other associated signal set that results in the same STBC $\mathscr{C}$. Note also that it is not necessary that the basis matrices have to be codewords. We shall see in the sequel that the choice of basis matrices and signal set controls the encoding as well as decoding complexity. However, it is important to note that $K$ is unique to the STBC $\mathscr{C}$.
\end{remark}

Thus a STBC can be thought of as a subset of a subspace of dimension $K$. Thus designing a STBC can be done in two steps: First choose a subspace of dimension $K$ (choose a LSTD) and then choose a subset of required cardinality (choose the signal set $\mathscr{A}$) within the chosen subspace.

\subsubsection{Measure of Rate}
\label{subsubsec2_1_1}
~\\
In this paper, we use the following definition of rate of a STBC.

\begin{defn}
\label{defn_rate}
Rate of a STBC $\mathscr{C}=\frac{\mathrm{dimension}(\langle\mathscr{C}\rangle)}{T}=\frac{K}{T}$ dimensions per channel use.
\end{defn}

Note that the unit of rate of a STBC according to Definition \ref{defn_rate} is dimensions per channel use (dpcu). Since there are $K$ real variables which modulate $K$ modulation matrices, we can view it as though we are sending $K$ real symbols (one on each dimension) in $T$ channel uses. Alternatively, we can pair two real variables at a time and view it as $\frac{K}{2}$ complex symbols being transmitted in $T$ channel uses. We would like to mention that most of the previous works on STBCs follow the convention of measuring rate in complex symbols per channel use which in our case is $\frac{K}{2T}$ complex symbols per channel use and is simply proportional to rate as per Definition \ref{defn_rate}. Though the terminology of basis matrices and rate have been used previously in the literature (for example see \cite{HTW}), to the knowledge of the authors, rate of a STBC has not been defined explicitly and clearly as in Definition \ref{defn_rate} although many works in the literature may be measuring rate in a similar way. Note that if linear independence of basis matrices is not retained and if rate were to be measured by simply counting the number of complex variables in the LSTD, then one can claim to have any arbitrary rate of transmission which can be quite deceptive at times. The notion of linear independence makes things clear and avoids such confusions. Definition \ref{defn_rate} is particularly useful because it essentially allows to define rate of a LSTD, hence allowing us to separate the study of LSTDs from STBCs.  Also, we argue that rate as per Definition \ref{defn_rate} is a first order indicative of coding gain and hence is a parameter which has to be maximized. Intuitively, the higher the dimension, the more efficiently we can pack codewords in it optimizing some criteria. One of the criteria of interest is to maximize the coding gain which is given by $\min_{\mathbf{C_1},\mathbf{C_2}\in\mathscr{C}}\mathrm{det}\left((\mathbf{C_1}-\mathbf{C_2})^H(\mathbf{C_1}-\mathbf{C_2})\right)$.

Recall that even in the case of classical linear error correcting codes over finite fields, rate was defined as the ratio of the dimension of the subspace spanned by the codewords to the number of channel uses. In the case of classical linear error correcting codes, the code itself is a subspace whereas in the case of STBCs, the code is a subset of a subspace. The following examples of existing STBCs reinforce the statement that rate as per Definition \ref{defn_rate} is a first order indicative of coding gain.

\begin{eg}
\label{eg_al_gc}
Let us consider the Alamouti code \cite{Ala} and the Golden code \cite{BRV} which are given by:
$\left[\begin{array}{cc}
x_1+ix_2 &  -x_3+ix_4\\
x_3+ix_4 &  x_1-ix_2
\end{array}\right]$ and\\
$\left[\begin{array}{cc}
(x_1+ix_2)\alpha+(x_3+ix_4)\alpha\theta & (x_5+ix_6)\alpha+(x_7+ix_8)\alpha\theta\\
i((x_5+ix_6)\bar{\alpha}+(x_7+ix_8)\bar{\alpha}\bar{\theta}) & (x_1+ix_2)\bar{\alpha}+(x_3+ix_4)\bar{\alpha}\bar{\theta}
\end{array}\right]$ respectively where, $\theta=\frac{1+\sqrt{5}}{2}$, $\bar{\theta}=\frac{1-\sqrt{5}}{2}$, $\alpha=1+i(1-\theta)$ and $\bar{\alpha}=1+i(1-\bar{\theta})$. In both cases, the real variables are allowed to take values independently from a finite subset of $\mathbb{Z}$. It can be checked that there are $4$ basis matrices for the Alamouti code and $8$ basis matrices for the Golden code. Thus the rate of Alamouti code and Golden code are $2$ dpcu and $4$ dpcu respectively and it is well known \cite{BRV} that the Golden code outperforms the Alamouti code when they are both compared with the same transmission rate in bpcu.
\end{eg}

\begin{eg}
\label{eg_od_qod}
Let us consider the $4\times 4$ OD and the $4\times 4$ quasi orthogonal design. They are given by:
$\left[\begin{array}{cccc}
x_1+ix_2 &  -x_3+ix_4 & -x_5+ix_6 & 0\\
x_3+ix_4 &  x_1-ix_2   &  0 & -x_5+ix_6\\
x_5+ix_6 & 0 & x_1-ix_2 & x_3-ix_4\\
0 & x_5+ix_6 & -x_3-ix_4 & x_1+ix_2
\end{array}\right]$ and\\ $\left[\begin{array}{cccc}
x_1+ix_2 &  -x_3+ix_4 & x_5+ix_6 &  -x_7+ix_8\\
x_3+ix_4 &  x_1-ix_2 & x_7+ix_8 &  x_5-ix_6\\
x_5+ix_6 &  -x_7+ix_8 & x_1+ix_2 &  -x_3+ix_4\\
x_7+ix_8 &  x_5-ix_6 & x_3+ix_4 &  x_1-ix_2
\end{array}\right]$ respectively. Their respective rates can be verified to be $\frac{3}{2}$ dpcu and $2$ dpcu respectively. STBCs from quasi orthogonal designs are known to outperform STBCs from ODs \cite{Jaf,TBH} for the same transmission rate in bpcu.
\end{eg}

The above examples show that given two STBCs having the same number of codewords, the one having higher rate as per Definition \ref{defn_rate} outperforms the other in most cases, thus providing a good motivation for Definition \ref{defn_rate}.

\subsubsection{Measure of ML decoding complexity}
\label{subsubsec2_1_2}
~\\
Towards defining a measure for ML decoding complexity, let us first define a measure of encoding complexity. If we use \eqref{eqn_lstd} for encoding a STBC using LSTDs, we see that in general one needs to choose an element from $\mathscr{A}$ and then substitute for the real variables $x_1,x_2,\dots,x_K$ in the LSTD. This method of encoding clearly requires a lookup table (memory) with $|\mathscr{A}|$ entries. However, if the signal set $\mathscr{A}$ is a Cartesian product of $g$ smaller signal sets in dimension $\frac{K}{g}$, then the complexity can be reduced. To be precise, if $\mathscr{A}=\mathscr{A}_1\times\mathscr{A}_2\times\dots\times\mathscr{A}_g$ where each $\mathscr{A}_i\subset \mathbb{R}^{\frac{K}{g}}$ with cardinality $|\mathscr{A}|^{\frac{1}{g}}$, then the STBC $\mathscr{C}$ itself decomposes as a sum of $g$ different STBCs, which is shown below. Let $K=g\lambda$. Then by appropriately reordering/relabeling the real variables we can assume without loss of generality \footnote{Here we have assumed that the first $\lambda$ real variables belong to first group and the second $\lambda$ real variables belong to the second group and the last $\lambda$ real variables belong to the $g$-th group. In general, the partitioning of real variables into $g$-groups can be quite arbitrary.} that $\mathbf{S}(\mathbf{s})=\sum_{i=1}^{K}x_i\mathbf{A_i}=\mathbf{S_1}(\mathbf{s_1})+\mathbf{S_2}(\mathbf{s_2})+\dots+\mathbf{S_g}(\mathbf{s_g})$ where, $\mathbf{S_i}(\mathbf{s_i})=\sum_{j=(i-1)\lambda+1}^{i\lambda}x_j\mathbf{A_j}$ and\\ $\mathbf{s_i}=\left[\begin{array}{cccc}x_{(i-1)\lambda+1} & x_{(i-1)\lambda+2} & \dots x_{i\lambda}\end{array}\right]^T$. Hence the STBC decomposes as $\mathscr{C}=\sum_{i=1}^{g}\mathscr{C}_i$ where,
$$
\begin{array}{ccc}
\mathscr{C}_1&=&\left\{\mathbf{S_1}(\mathbf{s_1})|\mathbf{s_1}\in\mathscr{A}_1\right\}\\
\mathscr{C}_2&=&\left\{\mathbf{S_2}(\mathbf{s_2})|\mathbf{s_2}\in\mathscr{A}_2\right\}\\
&\vdots\\
\mathscr{C}_g&=&\left\{\mathbf{S_g}(\mathbf{s_g})|\mathbf{s_g}\in\mathscr{A}_g\right\}
\end{array}
$$

\begin{defn}\cite{RaR6}
\label{defn_g_enc}
A STBC $\mathscr{C}=\left\{S(\mathbf{s})|\mathbf{s}\in\mathscr{A}\subset\mathbb{R}^K\right\}$ is said to $g$-group encodable or $\frac{K}{g}$ real symbol encodable (or $\frac{K}{2g}$ complex symbol decodable) if $\mathscr{A}=\mathscr{A}_1\times\mathscr{A}_2\times\dots\times\mathscr{A}_g$ where each $\mathscr{A}_i\subset \mathbb{R}^{\frac{K}{g}}$ with cardinality $|\mathscr{A}|^{\frac{1}{g}}$.
\end{defn}

The encoding of a $g$-group encodable STBC is pictorially shown in Fig. \ref{fig_g_enc}. Thus the encoding complexity of $g$-group encodable STBCs is $g(|\mathscr{C}|^{\frac{1}{g}})$. Note that in addition if $\mathscr{A}_1=\mathscr{A}_2=\dots=\mathscr{A}_g$ then the memory required for encoding is also minimized.

\begin{figure}[h]
\centering
\input{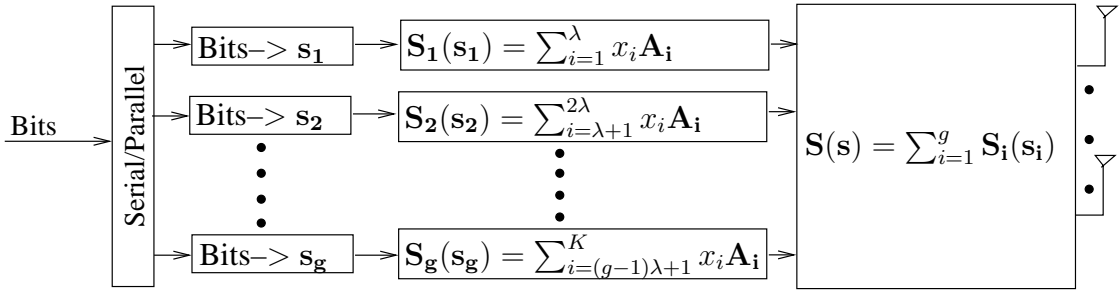}
\caption{Encoding for a $g$-group encodable STBC}
\label{fig_g_enc}
\end{figure}

\begin{eg}
Consider the example of the Golden code which was discussed in Example \ref{eg_al_gc}. As per Definition \ref{defn_g_enc}, the Golden code is $8$-group encodable or single real symbol encodable.
\end{eg}

Thus we have seen how a $g$-group encodable STBC $\mathscr{C}$ decomposes into a sum of $g$ STBCs\\ $\mathscr{C}_i, i=1,\dots,g$ and thus admits independent encoding of the $\mathscr{C}_i$'s. A natural question that follows is: Under what conditions does  a $g$-group encodable STBC $\mathscr{C}$ admit independent decoding of the constituent $\mathscr{C}_i$'s? Towards that end, let us look at the ML decoding metric. Let $\mathbf{X}$ be the transmitted codeword of size $T\times N_T$, $\mathbf{H}$ be the $N_T\times N_R$ channel matrix and $\mathbf{Y}$ be the received matrix of size $T\times N_R$. Then, the ML decoder is given by

\begin{equation}
\label{eqn_ml_dec}
\mathbf{\hat{X}}=\arg\min_{\mathbf{X}\in\mathscr{C}}\parallel \mathbf{Y}-\mathbf{X}\mathbf{H}\parallel_F^2.
\end{equation}

For a $g$-group encodable STBC $\mathscr{C}$, $\mathbf{X}=\sum_{i=1}^{g}\mathbf{X_i}$ for some $\mathbf{X_i}\in\mathscr{C}_i$. It can be shown \cite{KhR,KaR1,KaR2} that if the basis matrices $\mathbf{A_i}, i=1,\dots,K$ satisfy the condition

\begin{equation}
\label{eqn_g_dec_cond}
\mathbf{A_i}^H\mathbf{A_j}+\mathbf{A_j}^H\mathbf{A_i}=\mathbf{0}~\mathrm{whenever}~\mathbf{A_i}\in \langle\mathscr{C}_k\rangle, \mathbf{A_j}\in\langle\mathscr{C}_l\rangle,~k\neq l
\end{equation}

\noindent then the ML decoder decomposes as

\begin{equation}
\mathbf{\hat{X}}=\sum_{i=1}^{g}\arg\min_{\mathbf{X_i}\in\mathscr{C}_i}\parallel \mathbf{Y}-\mathbf{X_iH}\parallel_F^2.
\end{equation}

In other words, the component STBCs $\mathscr{C}_i$'s can then be decoded independently. It can also be shown \cite{KhR,KaR1,KaR2} that \eqref{eqn_g_dec_cond} is a necessary condition for this to happen.

\begin{remark}
Note that the subspaces $\langle\mathscr{C}_i\rangle, i=1,\dots,K$ intersect trivially, i.e., $\langle\mathscr{C}_k\rangle\cap\langle\mathscr{C}_l\rangle=0$. Thus $\langle\mathscr{C}\rangle=\langle\mathscr{C}_1\rangle\oplus\langle\mathscr{C}_2\rangle\oplus\dots\oplus\langle\mathscr{C}_g\rangle$. If the condition in \eqref{eqn_g_dec_cond} is satisfied for the basis matrices, then it implies that $\mathbf{A}^H\mathbf{B}+\mathbf{B}^H\mathbf{A}=\mathbf{0}, \forall~ \mathbf{A}\in\langle\mathscr{C}_k\rangle, \mathbf{B}\in\langle\mathscr{C}_l\rangle, k\neq l$. In other words, this becomes a property of the two subspaces $\langle\mathscr{C}_k\rangle$ and $\langle\mathscr{C}_l\rangle$.
\end{remark}

\begin{defn}
\label{defn_g_dec}\cite{RaR6}
A STBC $\mathscr{C}=\left\{S(\mathbf{s})|\mathbf{s}\in\mathscr{A}\subset\mathbb{R}^K\right\}$ is said to $g$-group decodable or $\frac{K}{g}$ real symbol decodable (or $\frac{K}{2g}$ complex symbol decodable) if $\mathscr{C}$ is $g$-group encodable and if the associated basis matrices satisfy \eqref{eqn_g_dec_cond}.
\end{defn}

\begin{eg}
All STBCs obtained from ODs are single real symbol decodable if every real variable in the OD takes values independently from a PAM (Pulse Amplitude Modulation) signal set. As an example, consider the Alamouti code that was previously discussed in Example \ref{eg_al_gc}. The associated basis matrices are $\mathbf{A_1}=\left[\begin{array}{cc}1 & 0\\0 & 1\end{array}\right]$, $\mathbf{A_2}=\left[\begin{array}{cc}i & 0\\0 & -i\end{array}\right]$, $\mathbf{A_3}=\left[\begin{array}{cc}0 & -1\\1 & 0\end{array}\right]$ and $\mathbf{A_4}=\left[\begin{array}{cc}0 & i\\ i & 0\end{array}\right]$. It can be checked that they satisfy the condition in \eqref{eqn_g_dec_cond} for $g=4$. In this case, $\mathbf{S_1}(\mathbf{s_1})=x_1\mathbf{A_1}$, $\mathbf{S_2}(\mathbf{s_2})=x_2\mathbf{A_2}$, $\mathbf{S_3}(\mathbf{s_3})=x_3\mathbf{A_3}$ and
$\mathbf{S_4}(\mathbf{s_4})=x_4\mathbf{A_4}$. Hence the Alamouti code is single real symbol decodable.
\end{eg}

\begin{figure}[htp]
\centering
\input{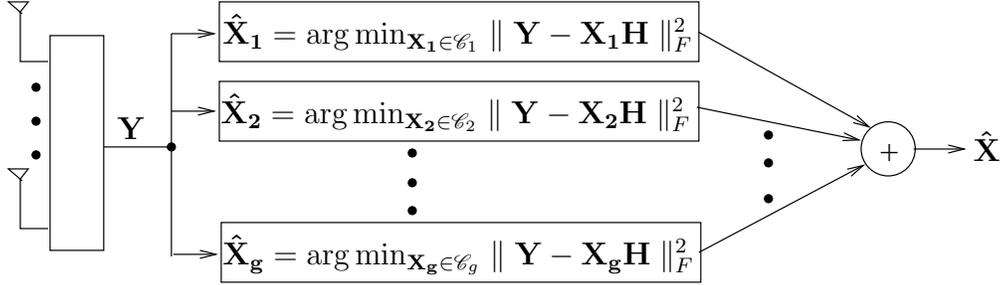}
\caption{ML decoding for a $g$-group ML decodable STBC}
\label{fig_g_dec}
\end{figure}

The ML decoding for a $g$-group decodable code is illustrated pictorially in Fig. \ref{fig_g_dec}. It is clear that the decoding complexity is reduced for $g$-group decodable STBCs from $|\mathscr{C}|$ computations to $g|\mathscr{C}|^{\frac{1}{g}}$ computations. Further, we know that the sphere decoder \cite{ViB,DCB} is an efficient ML decoder if vector $\mathbf{s}$ takes values from a lattice constellation. Moreover, it has been shown  \cite{HaV,JaO} that the average complexity of a sphere decoder depends on the dimension of the equivalent lattice \cite{DCB} and more or less independent of the size of the code. Thus, we can take the dimension of the corresponding equivalent lattice as a measure of the sphere decoder complexity. For a general STBC, this dimension is equal to $K$ whereas for $g$-group ML decodable STBCs, it is $\frac{K}{g}=\lambda$. Thus the expected as well as the worst case ML decoding complexity is lesser for $g$-group ML decodable STBCs.

\subsubsection{Full diversity}
\label{subsubsec2_1_3}
~\\
Apart from rate and ML decoding complexity, yet another important aspect of STBCs is the diversity gain. Diversity gain is a measure of the slope of the error probability versus the SNR when plotted on a log-log scale and this is given by $N_R(\min_{\mathbf{C_1},\mathbf{C_2}\in\mathscr{C}} \mathrm{rank}(\mathbf{C_1}-\mathbf{C_2}))$. Thus full diversity of $N_RN_T$ is achieved by a STBC if the coding gain is not equal to zero.

\subsubsection{Problem statement of optimal Rate-ML decoding complexity tradeoff}
\label{subsubsec2_1_4}
~\\
Having surveyed three important aspects of rate, ML decoding complexity and diversity for a STBC, we can now pose the problem of rate-ML decoding complexity tradeoff. This problem can be formally stated in two equivalent ways which are listed down as given below.

\begin{enumerate}
\item Given $\lambda$, $T$ and $N_t$ what is the maximum rate of any full diversity STBC?
\item Given $g$, $T$ and $N_t$ what is the maximum rate of any full diversity STBC?
\end{enumerate}

If $\lambda=1$ and $N_T=T$, then the solution is precisely the STBCs from square orthogonal designs constructed in \cite{TiH1,Lia} for which the maximum rate is $\frac{\lceil\log_2 N_T\rceil+1}{2^{\lceil\log_2 N_T\rceil-1}}$ dpcu. In this paper, the maximum rate of a certain class of full diversity square STBCs from Clifford unitary weight designs is characterized for $\lambda=2^a$.

The following example illustrates that full diversity and encoding/decoding complexity are related indirectly.

\begin{eg}
\label{eg_4x4_CIOD}
Consider the $4\times 4$ co-ordinate interleaved orthogonal design (CIOD) \cite{KhR} given by $\mathbf{S}=\left[\begin{array}{cccc}
x_1+ix_2 & -x_3+ix_4 & 0 & 0\\
x_3+ix_4 & x_1-ix_2 & 0 & 0\\
0 & 0 & x_5+ix_6 & -x_7+ix_8\\
0 & 0 & x_7+ix_8 & x_5-ix_6
\end{array}\right]$. The weight matrices of the above LSTD satisfy $\mathbf{A_i}^H\mathbf{A_j}+\mathbf{A_j^H}\mathbf{A_i}=\mathbf{0},~\forall~i\neq j$. Let the notation $\Delta$ stand for the codeword difference matrix. Note that $\det\left(\mathbf{\Delta S}^H\mathbf{\Delta S}\right)=\left(\sum_{i=1}^{4}\Delta x_i^2\right)^2\left(\sum_{i=5}^{8}\Delta x_i^2\right)^2$, which will equal to zero for some pair of codeword matrices if all the $8$ real variables are allowed to take values independently. Hence, it is not possible to obtain a full diversity single real symbol decodable STBC from the above LSTD. However, by entangling two real variables, as for example  $\left\{x_1,x_5\right\}$, $\left\{x_2,x_6\right\}$, $\left\{x_3,x_7\right\}$, $\left\{x_4,x_8\right\}$ and then allowing them to take values from a rotated QAM constellation (rotating a QAM constellation entangles the variables), a full diversity, single complex symbol ML decodable STBC can be obtained \cite{KhR}. The resulting STBC will be $4$-group ML decodable or $2$-real symbol ML decodable and its associated four constituent STBCs are given by $\mathbf{S_1}(\mathbf{s_1})=x_1\mathbf{A_1}+x_5\mathbf{A_5}$, $\mathbf{S_2}(\mathbf{s_2})=x_2\mathbf{A_2}+x_6\mathbf{A_6}$, $\mathbf{S_3}(\mathbf{s_3})=x_3\mathbf{A_3}+x_7\mathbf{A_7}$ and
$\mathbf{S_4}(\mathbf{s_4})=x_4\mathbf{A_4}+x_8\mathbf{A_8}$.
\end{eg}

Example \ref{eg_4x4_CIOD} shows that the requirement of full diversity can sometimes demand an increase in the encoding complexity and hence the decoding complexity even if the associated weight matrices satisfy condition \eqref{eqn_g_dec_cond} for $\lambda=1$. Thus, it is clear that full diversity and encoding/decoding complexity are inter-related and there exists a tradeoff between the two.

\subsection{Clifford Unitary Weight Designs and extended Clifford algebras}
\label{subsec2_2}

First, let us classify square LSTDs (as done in \cite{KaR1} for single complex symbol decodable codes). LSTDs can be broadly classified as unitary weight designs (UWDs) and non unitary weight designs (NUWDs). A UWD is one for which all the weight matrices are unitary and NUWDs are defined as those which are not UWDs. Clifford unitary weight designs (CUWDs) are a proper subclass of UWDs whose weight matrices satisfy certain sufficient conditions for $g$-group ML decodability. To state those sufficient conditions, let us list down the weight matrices of a CUWD in the form of an array as shown in Table \ref{table_CUWD}.

\begin{table}[h]
\caption{Structure of CUWDs}
\label{table_CUWD}
\begin{center}
\begin{tabular}{c|ccc}
$\mathbf{A_1}$ & $\mathbf{A_{\lambda+1}}$ & \dots & $\mathbf{A_{(g-1)\lambda+1}}$\\
\hline
$\mathbf{A_2}$ & $\mathbf{A_{\lambda+2}}$ & \dots & $\mathbf{A_{(g-1)\lambda+2}}$\\
$\vdots$ & $\vdots$ & $\ddots$ & $\vdots$\\
$\mathbf{A_{\lambda}}$ & $\mathbf{A_{2\lambda}}$ & \dots & $\mathbf{A_{K}}$
\end{tabular}
\end{center}
\end{table}

For simplicity, the grouping is assumed to be as follows: All the weight matrices in one column belong to one group. The weight matrices of CUWDs satisfy the following sufficient conditions for $g$-group ML decodability.

\begin{enumerate}
\item $\mathbf{A_1}=\mathbf{I}$.
\item The unitary matrices in the first row except $\mathbf{A_1}$ should form a Hurwitz-Radon family \cite{TJC,TiH1,Lia}. In other words, all the matrices in the first row except $\mathbf{A_1}$ should square to $-\mathbf{I}$ and should pair-wise anti-commute among themselves.
\item The unitary matrices in the first column should square to $\mathbf{I}$ and should commute with all the matrices in the first row and first column.
\item The unitary matrix in the $i$-th row and the $j$-th column is equal to $\pm\mathbf{A_iA_{(j-1)\lambda+1}}$.
\end{enumerate}

It can be checked that the above four conditions together imply that the necessary and sufficient condition for $g$-group ML decodability in \eqref{eqn_g_dec_cond} is satisfied and hence any CUWD is $g$-group ML decodable. Note that when $\lambda=1$, CUWDs become ODs \cite{TJC,TiH1,Lia}. Similarly the co-ordinate interleaved orthogonal designs proposed in \cite{KhR} are a proper subclass of NUWDs. The single complex symbol ML decodable STBCs in \cite{YGT} are also CUWDs \cite{KaR1}. Fig. \ref{fig_class_lstd} pictorially shows the broad classification of LSTDs.

\begin{figure}[htp]
\centering
\includegraphics{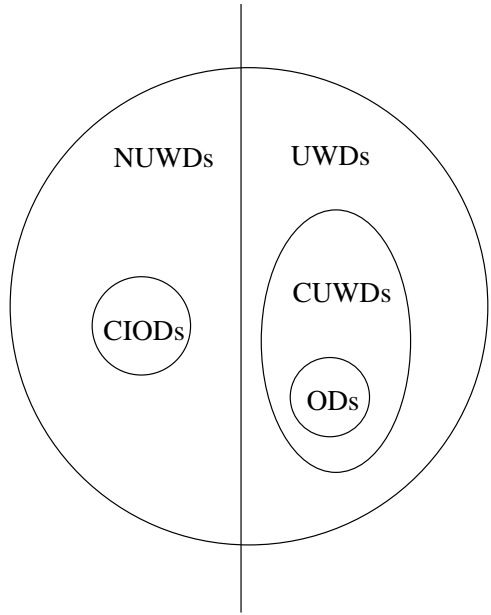}
\caption{Classification of LSTDs}
\label{fig_class_lstd}
\end{figure}

\subsubsection{Full diversity lattice constellations for Clifford Unitary Weight designs}
\label{subsubsec2_2_1}
~\\
An important advantage of CUWDs is that full diversity STBCs can be obtained from them without increasing the encoding/decoding complexity contrary to the case of CIODs (see Example \ref{eg_4x4_CIOD}) wherein real variables from different groups have to be entangled for full diversity. Moreover, explicit lattice constellations that optimize the coding gain can be obtained for CUWDs, thus admitting the use of a lattice/sphere decoder. In \cite{KaR1,KaR2,KaR3,KaR4}, few constructions of CUWDs are available  and the aspect of full diversity has been addressed in detail. In this paper, we only provide a brief outline of the basic idea (described below) and illustrate the procedure with an example. Later in the proof of Theorem \ref{thm_maxrate}, we also provide a new construction of CUWDs.

For a CUWD, $\mathrm{det}\left(\Delta \mathbf{S}(\mathbf{s})^H\Delta \mathbf{S}(\mathbf{s})\right)=\mathrm{det}\left(\sum_{i}^{g}\Delta \mathbf{S_i}(\mathbf{s_i})^H\Delta \mathbf{S_i}(\mathbf{s_i})\right)$. If $\mathrm{det}\left(\Delta \mathbf{S_i}(\mathbf{s_i})^H\Delta \mathbf{S_i}(\mathbf{s_i})\right)>0, \forall i=1,\dots,g$ then, we have
$$
\mathrm{det}\left(\Delta \mathbf{S}(\mathbf{s})^H\Delta \mathbf{S}(\mathbf{s})\right)\geq\sum_{i=1}^{g}\mathrm{det}\left(\Delta \mathbf{S_i}(\mathbf{s_i})^H\Delta \mathbf{S_i}(\mathbf{s_i})\right).
$$
Thus it is sufficient to construct full diversity lattice constellations independently for each of the constituent LSTDs, i.e., $\mathbf{S_i}(\mathbf{s_i})$'s and this will ensure $g$-group ML decodability. Note that\\
$\mathbf{S_i}(\mathbf{s_i})=\mathbf{A_{(i-1)\lambda+1}}\left(x_{(i-1)\lambda+1}\mathbf{I}+x_{(i-1)\lambda+2}\mathbf{A_2}+\dots+x_{i\lambda}\mathbf{A_\lambda}\right)=\mathbf{A_{(i-1)\lambda+1}}\mathbf{S_1}(\mathbf{s_1})$ which implies\\ \mbox{$\mathrm{det}\left(\Delta \mathbf{S_i}(\mathbf{s_i})^H\Delta \mathbf{S_i}(\mathbf{s_i})\right)=\mathrm{det}\left(\Delta \mathbf{S_1}(\mathbf{s_1})^H\Delta \mathbf{S_1}(\mathbf{s_1})\right)$}. Hence without loss of generality, we can consider the construction of full diversity lattice constellations for the LSTD $\mathbf{S_1}(\mathbf{s_1})$ since the same lattice constellation will ensure full diversity for the remaining constituent LSTDs $\mathbf{S_i}(\mathbf{s_i}),~ i=2,\dots,g$. We have $\mathbf{S_1}(\mathbf{s_1})=\sum_{i=1}^{\lambda}x_i\mathbf{A_i}$. Note that the matrices $\mathbf{A_i},~i=1,\dots,\lambda$ are unitary, square to $\mathbf{I}$ and pairwise commute among themselves. Hence they are simultaneously diagonalizable by some unitary matrix $\mathbf{U}$ to result in diagonal unitary matrices $\mathbf{D_1},\mathbf{D_2},\dots,\mathbf{D_\lambda}$. All these diagonal matrices will continue to be linearly independent over $\mathbb{R}$ and since all of them square to $\mathbf{I}$, the diagonal entries of $\mathbf{D_i},~i=1,\dots,\lambda$ are $\pm 1$ and $\mathbf{D_1}=\mathbf{I}$. Thus the LSTD $\mathbf{U}\mathbf{S_1}(\mathbf{s_1})\mathbf{U}^H=\sum_{i=1}^{\lambda}x_i\mathbf{D_i}$ becomes a diagonal matrix for which it is easy to compute the determinant and also find the lattice constellation that will provide full diversity. This procedure is illustrated in the following example.

\begin{eg}
Consider a CUWD for $N_T=8$, $\lambda=4$, $g=4$ and $K=\lambda g=16$ given by $\mathbf{S}(\mathbf{s})=\sum_{i=1}^{16}\mathbf{A_i}$ where,
$\mathbf{A_1}=\mathbf{I_{8}}$,
$\mathbf{A_2}=\left[\begin{array}{cccc}
\mathbf{0_2} & \mathbf{I_2} & \mathbf{0_2} & \mathbf{0_2}\\
\mathbf{I_2} & \mathbf{0_2} & \mathbf{0_2} & \mathbf{0_2}\\
\mathbf{0_2} & \mathbf{0_2} & \mathbf{0_2} & \mathbf{I_2}\\
\mathbf{0_2} & \mathbf{0_2} & \mathbf{I_2} & \mathbf{0_2}
 \end{array}\right]$, $\mathbf{A_3}=\left[\begin{array}{cccc}
\mathbf{0_2} & \mathbf{0_2} & \mathbf{I_2} & \mathbf{0_2}\\
\mathbf{0_2} & \mathbf{0_2} & \mathbf{0_2} & \mathbf{I_2}\\
\mathbf{I_2} & \mathbf{0_2} & \mathbf{0_2} & \mathbf{0_2}\\
\mathbf{0_2} & \mathbf{I_2} & \mathbf{0_2} & \mathbf{0_2}
\end{array}\right]$,\\  $\mathbf{A_4}=\mathbf{A_2}\mathbf{A_3}$,
$\mathbf{A_5}=\left[\begin{array}{cccccccc}
0 & i & 0 & 0 & 0 & 0 & 0 & 0\\
i & 0 & 0 & 0 & 0 & 0 & 0 & 0\\
0 & 0 & 0 & i & 0 & 0 & 0 & 0\\
0 & 0 & i & 0 & 0 & 0 & 0 & 0\\
0 & 0 & 0 & 0 & 0 & i & 0 & 0\\
0 & 0 & 0 & 0 & i & 0 & 0 & 0\\
0 & 0 & 0 & 0 & 0 & 0 & 0 & i\\
0 & 0 & 0 & 0 & 0 & 0 & i & 0
 \end{array}\right]$, $\mathbf{A_9}=\left[\begin{array}{rrrrrrrr}
0 & 1 & 0 & 0 & 0 & 0 & 0 & 0\\
-1 & 0 & 0 & 0 & 0 & 0 & 0 & 0\\
0 & 0 & 0 & 1 & 0 & 0 & 0 & 0\\
0 & 0 & -1 & 0 & 0 & 0 & 0 & 0\\
0 & 0 & 0 & 0 & 0 & 1 & 0 & 0\\
0 & 0 & 0 & 0 & -1 & 0 & 0 & 0\\
0 & 0 & 0 & 0 & 0 & 0 & 0 & 1\\
0 & 0 & 0 & 0 & 0 & 0 & -1 & 0
 \end{array}\right]$,\\ $\mathbf{A_{13}}=\left[\begin{array}{rrrrrrrr}
i & 0 & 0 & 0 & 0 & 0 & 0 & 0\\
0 & -i & 0 & 0 & 0 & 0 & 0 & 0\\
0 & 0 & i & 0 & 0 & 0 & 0 & 0\\
0 & 0 & 0 & -i & 0 & 0 & 0 & 0\\
0 & 0 & 0 & 0 & i & 0 & 0 & 0\\
0 & 0 & 0 & 0 & 0 & -i & 0 & 0\\
0 & 0 & 0 & 0 & 0 & 0 & i & 0\\
0 & 0 & 0 & 0 & 0 & 0 & 0 & -i
 \end{array}\right]$ and $\mathbf{A_{4i+j}}=\mathbf{A_{4i+1}}\mathbf{A_j},~i=1,2,3,~j=1,\dots,4$.\\

It can be checked that the above listed basis matrices satisfy all the requirements of a CUWD for $\lambda=4$, $g=4$. For the purpose of finding full diversity lattice constellations, it is enough to construct full diversity signal sets for the LSTD $\mathbf{S_1}(\mathbf{s_1})=\sum_{i=1}^{4}\mathbf{A_i}$. Since the matrices $\mathbf{A_i},~i=1,\dots,4$ mutually commute among themselves and square to $\mathbf{I_{8}}$, they can be simultaneously diagonalized by a unitary matrix $\mathbf{U}=\left[\begin{array}{rrrrrrrr}
1 & 1 & 1 & 1 & 1 & 1 & 1 & 1\\
1 & -1 & 1 & -1 & 1 & -1 & 1 & -1\\
1 & 1 & -1 & -1 & 1 & 1 & -1 & -1\\
1 & -1 & -1 & 1 & 1 & -1 & -1 & 1\\
1 & 1 & 1 & 1 & -1 & -1 & -1 & -1\\
1 & -1 & 1 & -1 & -1 & 1 & -1 & 1\\
1 & 1 & -1 & -1 & -1 & -1 & 1 & 1\\
1 & -1 & -1 & 1 & -1 & 1 & 1 & -1
 \end{array}\right]$ which in this case turns out to be the $8\times 8$ Hadamard matrix. Defining $\mathbf{D_i}=\mathbf{U}\mathbf{A_i}\mathbf{U}^H,~i=1,\dots,4$, we get $\mathbf{D_1}=\mathbf{I_8}$,\\ $\mathbf{D_2}=\mathrm{diag}\left\{\left[\begin{array}{rrrrrrrr}1 & 1 & -1 & -1 & 1 & 1 & -1 & -1\end{array}\right]\right\}$, $\mathbf{D_3}=\mathrm{diag}\left\{\left[\begin{array}{rrrrrrrr}1 & 1 & 1 & 1 & -1 & -1 & -1 & -1\end{array}\right]\right\}$ and $\mathbf{D_4}=\mathrm{diag}\left\{\left[\begin{array}{rrrrrrrr}1 & 1 & -1 & -1 & -1 & -1 & 1 & 1  \end{array}\right]\right\}$. Thus we have
$$
\mathrm{det}(\Delta \mathbf{S_1})=(\Delta q_1)^2(\Delta q_2)^2(\Delta q_3)^2(\Delta q_4)^2
$$
\noindent where, $\left[\begin{array}{cccc}\Delta q_1 & \Delta q_2 & \Delta q_3 & \Delta q_4\end{array}\right]^T=\mathbf{P}\left[\begin{array}{cccc}\Delta x_1 & \Delta x_2 & \Delta x_3 & \Delta x_4\end{array}\right]^T$ and $\mathbf{P}=\left[\begin{array}{rrrr}
1 & 1 & 1 & 1\\
1 & -1 & 1 & -1\\
1 & 1 & -1 & -1\\
1 & -1 & -1 & 1
 \end{array}\right]$. Thus full diversity will be achieved if $\Delta q_i \neq 0,~\forall~ i=1,\dots,4$. This can be guaranteed by letting $\mathbf{s_1}=\left[\begin{array}{cccc}x_1 & x_2 & x_3 & x_4\end{array}\right]^T$ take values from $\mathbf{P}^{-1}\mathbf{\mathcal{G}}\mathbb{Z}^4$ where, $\mathbf{\mathcal{G}}$ is the generator matrix of a lattice designed to maximize the product distance \cite{FOV,Vit}.
\end{eg}

\subsubsection{Extended Clifford Algebras}
\label{subsubsec2_2_2}
~\\
Towards constructing and studying CUWDs an algebraic framework of extended Clifford algebras is first established. Using this algebraic framework, the optimal tradeoff between rate and ML decoding complexity of CUWDs is obtained in Subsection \ref{subsec2_3}. Furthermore, algebraic descriptions for the ABBA construction \cite{TBH} and the tensor product based construction in \cite{KaR3} are provided using extended Clifford algebras.

First observe that in order to construct CUWDs it is sufficient to construct the weight matrices in the first row and first column (as discussed in Subsection \ref{subsec2_2} and Table \ref{table_CUWD}). Our methodology to construct the weight matrices in the first row and first column would be to fabricate an algebra in such a way that it contains elements satisfying the algebraic relations we need. Once we construct such an algebra, we then obtain the required CUWD by taking an appropriate matrix representation of the constructed algebra. Recall that an algebra over a field is simply a ring as well as a vector space with the addition operation being compatible to both the ring and the vector space structures. Let us recall certain basic definitions from algebra which will be useful in the sequel.

\begin{defn}
A nonempty set $\mathcal{B}$ equipped with two binary operations called addition and multiplication denoted by $+$ and $.$ is called a ring denoted by $(\mathcal{B},+,.)$ if
\begin{enumerate}
\item $(\mathcal{B},+)$ is a Abelian group
\item $(\mathcal{B},.)$ is a monoid with multiplicative identity $1$
\item $x.(y+z)=x.y+x.z,~\forall~x,y,z\in\mathcal{B}$
\item $(x+y).z=x.z+y.z,~\forall~x,y,z\in\mathcal{B}$
\end{enumerate}
\end{defn}
\begin{defn}
A nonempty set $\mathcal{A}$ equipped with two binary operations called addition and multiplication denoted by $+$ and $.$ is called a right module algebra over a ring $\cal{B}$ if
\begin{enumerate}
\item $(\mathcal{A},+,.)$ is a ring
\item There is a map $(x,\alpha)\to x\alpha$ of $\mathcal{A}\times\mathcal{B}$ into $\mathcal{A}$ satisfying the following for all $\alpha,\beta\in\mathbb{B}$ and $x,y\in\mathcal{A}$.
\begin{equation}
\begin{array}{c}
(x+y)\alpha=x\alpha+y\alpha\\
x(\alpha+\beta)=x\alpha+x\beta\\
x(\alpha\beta)=(x\alpha)\beta\\
x1=x
\end{array}
\end{equation}
\end{enumerate}
\end{defn}

Note that in the standard mathematical literature (for example \cite{Jac}), algebra is usually defined over a field. Since our definition differs from the definition in \cite{Jac}, we have given the name `right module algebra' in order to distinguish it from the concept of algebra over a field.

\begin{defn}\cite{TiH1}
The Clifford algebra, denoted by $Cliff_n$ is the algebra over $\mathbb{R}$ generated by $n$ objects $\gamma_k,\ k=1,\dots,n$ which are anti-commuting ($\gamma_k\gamma_j=-\gamma_j\gamma_k,\ \forall k\neq j$) and squaring to $-1$ ($\gamma_k^2=-1\ \forall k=1,\dots,n$).
\end{defn}
A natural basis for $Cliff_n$ seen as a vector space over $\mathbb{R}$~ is
\begin{equation}
\mathscr{B}_n=\left\{1\right\}\bigcup\left\{\gamma_i|i=1,\dots,n\right\} \bigcup_{m=2}^{n}\left\{\prod_{i=1}^{m}\gamma_{k_i}|1\leq k_i\leq k_{i+1}\leq n\right\}.
\end{equation}
The number of basis elements is $|\mathscr{B}_n|=2^n$.

\begin{eg}
$Cliff_0$ is nothing but the set of real numbers $\mathbb{R}$, $Cliff_1$ is the set of complex numbers $\mathbb{C}$~ and $Cliff_2$ is the Hamiltonian Quaternions denoted by $\mathbb{H}$.
\end{eg}

The reason we are interested in Clifford algebras is that the defining algebraic relations of the generators of a Clifford algebra resemble the algebraic relations which the matrices in the first row of a CUWD need to satisfy. Hence we can obtain the matrices in the first row by taking unitary matrix representations of the generators of a Clifford algebra. To obtain the matrices in the first column, we use a similar strategy. We introduce few new symbols in the Clifford algebra and define them to square to $1$, commute with the generators of the Clifford algebra and also commute among themselves. In other words, after introducing new symbols, multiplication in the algebra is appropriately defined in order to create a bigger algebra which contains Clifford algebra as a sub-algebra. Hence by taking a unitary matrix representation of these specific elements of the algebra, we get the weight matrices of the required CUWD. We give the name 'extended Clifford algebras' to the so constructed algebras:

\begin{defn}
\label{defn_eca}
Let $L=2^a,a\in \mathbb{N}$. An extended Clifford algebra denoted by $\mathbb{A}_n^L$ is the associative algebra over $\mathbb{R}$~ generated by $n+a$ objects $\gamma_k,\ k=1,\dots,n$ and $\delta_i,\ i=1,\dots,a$ which satisfy the following relations:
\begin{itemize}
\item $\gamma_k^2=-1,\ \forall\ k=1,\dots,n$
\item $\gamma_k\gamma_j=-\gamma_j\gamma_k,\ \forall\ k\neq j$
\item $\delta_k^2=1,\ \forall k=1,\dots,a$
\item $\delta_k\delta_j=\delta_j\delta_k,\ \forall\ 1\leq k,j\leq a$
\item $\delta_k\gamma_j=\gamma_j\delta_k,\ \forall\ 1\leq k\leq a, 1\leq j\leq n$
\end{itemize}
\end{defn}

From the above definition, it is clear that $Cliff_n$ (or $\mathbb{A}_n^1$) is a sub-algebra of $\mathbb{A}_n^L$. Let $\mathscr{B}_n$ be the natural $\mathbb{R}$~ basis for this sub-algebra $Cliff_n$. Then a natural $\mathbb{R}$~ basis for $\mathbb{A}_n^L$ is given by
\begin{equation}
\mathscr{B}_n^L=\mathscr{B}_n\cup\left\{\mathscr{B}_n\delta_i|i=1,\dots,a\right\} \bigcup_{m=2}^{a}\mathscr{B}_n\left\{\prod_{i=1}^{m}\delta_{k_i}|1\leq k_i\leq k_{i+1}\leq a\right\}.
\end{equation}
Thus the dimension of $\mathbb{A}_n^L$ seen as a vector space over $\mathbb{R}$ is $2^{n+a}$.

The algebra $\mathbb{A}_n^L$ over $\mathbb{R}$ can also be viewed as a right module algebra over the base ring $Cliff_n$. We will use this fact later in subsection \ref{subsec2_3}.

\begin{eg}
\label{eg_eca}
Let us take $n=2$, $a=1$. Hence $L=2$. Then
$$
\mathbb{A}_2^2=\left\{a_1+\gamma_1a_2+\delta_1a_3+\delta_1\gamma_1a_4|a_1,a_2,a_3,a_4\in\mathbb{R}\right\}.
$$

Addition in the algebra is defined to be component wise and multiplication is completely described by defining the multiplication between any two basis elements. The multiplication table can be easily generated using the defining algebraic relations of the generators and is given as follows.

\begin{center}
\begin{tabular}{|c|c|c|c|c|}
\hline
~ & $1$ & $\gamma_1$ & $\delta_1$ & $\delta_1\gamma_1$\\
\hline
$1$ & $1$ & $\gamma_1$ & $\delta_1$ & $\delta_1\gamma_1$\\
\hline
$\gamma_1$ & $\gamma_1$ & $-1$ & $\delta_1\gamma_1$ & $-\delta_1$\\
\hline
$\delta_1$ & $\delta_1$ & $\delta_1\gamma_1$ & $1$ & $\gamma_1$\\
\hline
$\delta_1\gamma_1$ & $\delta_1\gamma_1$ & $-\delta_1$ & $\gamma_1$ & $-1$\\
\hline
\end{tabular}
\end{center}

One can check from the multiplication table that the multiplication is indeed associative. Note that $\mathbb{A}_1^2$ can also be viewed as a vector space over $\mathbb{C}$ by viewing the symbol $\gamma_1$ as the complex number $i=\sqrt{-1}$. Then, we have $\mathbb{A}_1^2=\left\{z_1+\delta_1z_2|z_1,z_2\in\mathbb{C}\right\}$ where, $z_1=a_1+\gamma_1a_2$ and $z_2=a_3+\gamma_1a_4$.
\end{eg}

From the defining relations of the generators of the extended Clifford Algebra, it can be observed that the symbols $1$, $\gamma_1$, $\gamma_2$, $\dots$, $\gamma_n$ satisfy relations similar to that satisfied by the weight matrices that we need in the first row (squaring to $-1$ and anticommuting). Similarly, the symbols $\delta_k,k=1,\dots,a$, and $\bigcup_{m=2}^{a}\prod_{i=1}^{m}\delta_{k_i}$ for $1\leq k_i\leq k_{i+1}\leq a$ satisfy relations similar to that satisfied by the weight matrices that we need in the first column (squaring to $1$ and commuting with all other elements). Thus, for the case of $\lambda=2^a$, when the weight matrices of any CUWD are listed down in the array form as shown in Table \ref{table_CUWD}, the matrices in the first row will simply be matrix representations of the symbols $1$, $\gamma_1$, $\gamma_2$, $\dots$, $\gamma_n$ of an extended Clifford Algebra. Similarly, the matrices in the first column are nothing but matrix representation of the symbols $\delta_k,k=1,\dots,a$, and $\bigcup_{m=2}^{a}\prod_{i=1}^{m}\delta_{k_i}$ for $1\leq k_i\leq k_{i+1}\leq a$ of an extended Clifford Algebra.

\subsection{Optimal Rate-ML decoding complexity tradeoff of Clifford Unitary Weight codes}
\label{subsec2_3}

The maximum rate problem of CUWDs can be formally stated in many equivalent ways. Some of them are listed as follows.
\begin{enumerate}
\item Given $\lambda$ and $N_t$ what is the maximum rate?
\item Given $g$ and $N_t$ what is the maximum rate?
\item \label{question} Given $g$ and $\lambda$ what is the minimum value of $N_T$?
\end{enumerate}

For $\lambda=2$, the solution to the first question is reported in \cite{KaR1}. In this subsection, the solution to question number \ref{question}) for $\lambda=2^a,a\in\mathbb{N}$ is provided. Using the algebraic framework of extended Clifford algebras introduced in the previous subsection, the maximum rate problem can be restated in algebraic terms as follows.

\textit{What is the minimum matrix size $N_T$ in which the algebra $\mathbb{A}_{(g-1)}^\lambda$ has a non-trivial matrix representation?}

This problem appears to be difficult to solve directly. Hence, we take an alternate approach which is similar to the approach in \cite{TiH1} wherein matrix representations of Clifford algebras were obtained using matrix representations of the Clifford group. First, we find a finite group with respect to multiplication in the algebra $\mathbb{A}_{(g-1)}^{\lambda}$ such that it contains the elements of the natural $\mathbb{R}$-basis of $\mathbb{A}_{(g-1)}^{\lambda}$ denoted by $\mathscr{B}_{(g-1)}^{\lambda}$. Then, we find a suitable representation of this finite group such that it can be extended to a representation of the algebra.

\begin{prop}
\label{prop_group}

The set of elements $G=\mathscr{B}_{(g-1)}^{\lambda}\cup\left\{-b|b\in\mathscr{B}_{(g-1)}^{\lambda} \right\}$ is a finite group with respect to multiplication in $\mathbb{A}_{(g-1)}^{\lambda}$. Further, the group $G$ is a direct product of its subgroups $G_{\gamma}$ and $G_{\delta}$, where
\begin{equation}
\begin{array}{rcl}
G_{\gamma}&=&\mathscr{B}_{(g-1)}\cup\left\{-b|b\in\mathscr{B}_{(g-1)}\right\},\\
G_{\delta}&=&G_{\delta_1}\times G_{\delta_2}\times\dots\times G_{\delta_a}
\end{array}
\end{equation}
\noindent and $G_{delta_i}=\left\{1,\delta_i\right\},~i=1,\dots,a$.
\end{prop}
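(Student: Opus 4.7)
The plan is to verify the group axioms for $G$ directly from the defining relations of $\mathbb{A}_{(g-1)}^{\lambda}$, and then exhibit the direct-product decomposition via an internal-direct-product argument. I would first fix a canonical form for each basis monomial in $\mathscr{B}_{(g-1)}^{\lambda}$, writing it as $\gamma_{i_1}\cdots\gamma_{i_k}\,\delta_{j_1}\cdots\delta_{j_l}$ with $i_1<\cdots<i_k$ and $j_1<\cdots<j_l$.

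To show closure, I would multiply two such monomials, use that every $\delta_m$ commutes with every $\gamma_p$ to collect all $\gamma$-factors on the left and all $\delta$-factors on the right, then reorder each combined block via $\gamma_p\gamma_q=-\gamma_q\gamma_p$ (which introduces a sign) and $\delta_m\delta_n=\delta_n\delta_m$, and finally collapse repeated factors using $\gamma_p^2=-1$ and $\delta_m^2=1$; the net result is $\pm b'$ for some $b'\in\mathscr{B}_{(g-1)}^{\lambda}$, hence lies in $G$. The identity $1$ is in $G$ by definition. For inverses, the same reordering applied to $b\cdot b$ yields $b^2=(-1)^{k(k+1)/2}\in\{\pm 1\}$, where $k$ is the number of $\gamma$-factors in $b$, so $b^{-1}=\pm b\in G$. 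Associativity is inherited from the algebra, completing the verification that $G$ is a finite group.

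For the direct-product decomposition, I would first check that $G_\gamma$ and $G_\delta$ are subgroups: $G_\gamma$ is $G$ restricted to the Clifford sub-algebra generated by the $\gamma_k$'s, so the closure/inverse argument above (with $l=l'=0$) applies verbatim; $G_\delta$ is the abelian group generated by $\delta_1,\dots,\delta_a$, and since the $\delta_i$'s pairwise commute and each squares to $1$, one checks $G_\delta\cong(\mathbb{Z}/2\mathbb{Z})^a$ with no $-1$ appearing. Elements of $G_\gamma$ commute with elements of $G_\delta$ by the fifth defining relation of Definition~\ref{defn_eca}. The intersection is trivial because every element of $G_\delta\setminus\{1\}$ is a pure $\delta$-monomial with coefficient $+1$, which is a basis element of $\mathbb{A}_{(g-1)}^{\lambda}$ distinct from $\pm 1$ and from any $G_\gamma$-element involving a $\gamma_k$. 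Every element of $G$ factors as $\pm(\gamma\text{-monomial})\cdot(\delta\text{-monomial})=g_\gamma g_\delta$ with the sign absorbed into $g_\gamma$, giving $G=G_\gamma G_\delta$; combined with commutation and trivial intersection, this yields $G\cong G_\gamma\times G_\delta$. A cardinality sanity check confirms $|G_\gamma|\cdot|G_\delta|=2^g\cdot 2^a=2^{g+a}=|G|$.

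The main obstacle is purely bookkeeping of signs in the closure step; once the canonical monomial ordering is fixed, the defining relations of $\mathbb{A}_{(g-1)}^{\lambda}$ are applied in a fixed order ($\delta$-$\gamma$ commutation first, then $\gamma$-reordering, then squaring collapses to $\pm 1$), making each step deterministic.
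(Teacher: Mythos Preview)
Your proposal is correct and follows essentially the same approach as the paper's proof: verify the group axioms directly from the defining relations of $\mathbb{A}_{(g-1)}^{\lambda}$, then establish the internal direct-product decomposition via commutation and unique factorization (your trivial-intersection plus $G=G_\gamma G_\delta$ is equivalent to the paper's unique-decomposition formulation). Your treatment is more detailed on closure and sign bookkeeping, but the underlying argument is the same.
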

\begin{proof}
The multiplication in $G$ is associative and the unit is $1$. The inverse of the element $\pm\prod_{i=1}^{m}\gamma_{k_i}$ is $\pm(-1)^{\lceil\frac{m}{2}\rceil}\prod_{i=1}^{m}\gamma_{k_i}$. The inverse of the element $\prod_{i=1}^{m}\delta_{k_i}$ is itself. Similarly, it is easy to find the inverse of the other elements. The set $G_{\gamma}$ is nothing but the well known Clifford group \cite{TiH1}. The set $G_{\delta_i}$ is the cyclic group of order two (denoted by $C_2$) with generator $\delta_i$.  The set $G_\delta$ is a group since it is the $a$ times direct product of $C_2$. The group $G$ is a direct product of $G_{\gamma}$ and $G_{\delta}$ because:
\begin{enumerate}
\item Each $s\in G$ can be written uniquely in the form $s=s_1s_2$ with $s_1\in G_{\gamma}$ and $s_2\in G_{\delta}$.
\item For all $s_1\in G_{\gamma}$ and $s_2\in G_{\delta}$, we have $s_1s_2=s_2s_1$.
\end{enumerate}
\end{proof}

Thus the problem is simplified to finding the matrix representations of this finite group $G$. Towards that end, we quickly recall some basic concepts in linear representation of finite groups. We refer the readers to \cite{Serre} for a formal introduction.

\begin{defn}\cite{Serre}
Let $G$ be a finite group with identity element $1$ and let $V$ be a finite dimensional vector space over $\mathbb{C}$. A linear representation of $G$ in $V$ is a group homomorphism $\rho$ from $G$ into the group $GL(V)$. The dimension of $V$ is called the degree of the representation.
\end{defn}

Few basic results in representation theory are as listed below.

\begin{enumerate}
\item[[R1]]: Irreducible representations are representations with no invariant subspaces.
\item[[R2]]: Every representation is a direct sum of irreducible representations. They are equivalent to block-diagonal representations, with irreducible representation matrices on the block diagonal.
\item[[R3]]: Two representations $R$ and $R'$ of $G$ are equivalent, if there exists a similarity transform $U$ so that
$$
R'(x)=U^{-1}R(x)U,~\forall~x\in G
$$
\item[[R4]]: Unitary representations are representations in terms of unitary matrices
\item[[R5]]: Every representation is equivalent to a unitary representation
\end{enumerate}

\begin{thm}\cite{Serre}
\label{thm_abelian}
All the irreducible representations of an Abelian group have degree $1$.
\end{thm}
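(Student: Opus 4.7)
The plan is to invoke Schur's lemma together with the hypothesis that $G$ is abelian. Let $\rho: G \to GL(V)$ be an irreducible representation of $G$ on a finite-dimensional complex vector space $V$. For any fixed $g \in G$, and for every $h \in G$, commutativity of $G$ gives $\rho(g)\rho(h) = \rho(gh) = \rho(hg) = \rho(h)\rho(g)$. Hence $\rho(g)$ is a $G$-equivariant endomorphism of the representation $V$.

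Next, I would invoke Schur's lemma: any endomorphism of an irreducible finite-dimensional complex representation that commutes with every $\rho(h)$ must be a scalar multiple of the identity. This is the standard consequence of the fact that, working over $\mathbb{C}$, such an endomorphism has an eigenvalue $\lambda$, and the kernel of $\rho(g) - \lambda I$ is a nonzero invariant subspace, so by irreducibility it must equal $V$. Therefore, for every $g \in G$, there exists a scalar $\lambda_g \in \mathbb{C}$ with $\rho(g) = \lambda_g I_V$.

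I would then finish by observing that if every $\rho(g)$ acts as a scalar on $V$, then every one-dimensional subspace of $V$ is $G$-invariant. Since $\rho$ is irreducible, $V$ admits no proper nonzero invariant subspace, which forces $\dim V = 1$. An equivalent finish (useful if one prefers to avoid citing Schur directly) is to note that the operators $\{\rho(g)\}_{g \in G}$ are a commuting family of diagonalizable matrices (diagonalizable because $G$ is finite, so each $\rho(g)$ satisfies $\rho(g)^{|G|} = I$ and hence has minimal polynomial dividing $x^{|G|}-1$, which splits into distinct linear factors over $\mathbb{C}$), so they can be simultaneously diagonalized; in such a basis $V$ is a direct sum of one-dimensional invariant subspaces, and irreducibility again forces $\dim V = 1$.

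There is no real obstacle here; the only subtlety is the choice of ground field. The argument crucially uses that the base field is algebraically closed (to guarantee an eigenvalue in the Schur step, or equivalently to ensure diagonalizability), which matches the setup of Definition of a linear representation given in the excerpt, where $V$ is taken over $\mathbb{C}$.
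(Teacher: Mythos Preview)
Your proof is correct and is the standard argument via Schur's lemma. Note, however, that the paper does not actually supply its own proof of this statement: the theorem is quoted as a known result from Serre's \emph{Linear Representations of Finite Groups}, and the proof there is essentially the one you have written.
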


\begin{lem}\cite{Serre}
\label{lem_tensor}
Let $\rho_1:G_1\to GL(V_1)$ and $\rho_2:G_2\to GL(V_2)$ be linear representations of groups $G_1$ and $G_2$ in vector spaces $V_1$ and $V_2$ respectively. Then $\rho_1\otimes\rho_2$ is a linear representation of $G_1\times G_2$ into $V_1\otimes V_2$.
\end{lem}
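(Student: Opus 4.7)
The plan is to define the candidate representation explicitly and then check the two axioms of a group homomorphism. Given $g_1 \in G_1$ and $g_2 \in G_2$, I would set
\[
(\rho_1 \otimes \rho_2)(g_1, g_2) \;=\; \rho_1(g_1) \otimes \rho_2(g_2),
\]
where the right-hand side is the tensor product of two linear maps viewed as an element of $GL(V_1 \otimes V_2)$. The first thing to verify is that this is actually well-defined, i.e., that $\rho_1(g_1) \otimes \rho_2(g_2)$ lies in $GL(V_1 \otimes V_2)$; this follows because the tensor product of invertible linear maps is invertible, with inverse $\rho_1(g_1)^{-1} \otimes \rho_2(g_2)^{-1} = \rho_1(g_1^{-1}) \otimes \rho_2(g_2^{-1})$.

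Next I would check the homomorphism property. Taking $(g_1, g_2), (h_1, h_2) \in G_1 \times G_2$ and using the defining identity of the tensor product of linear maps, namely $(A \otimes B)(C \otimes D) = (AC) \otimes (BD)$, I get
\[
(\rho_1 \otimes \rho_2)(g_1 h_1, g_2 h_2) = \rho_1(g_1 h_1) \otimes \rho_2(g_2 h_2) = \bigl(\rho_1(g_1)\rho_1(h_1)\bigr) \otimes \bigl(\rho_2(g_2)\rho_2(h_2)\bigr),
\]
which equals $\bigl(\rho_1(g_1) \otimes \rho_2(g_2)\bigr)\bigl(\rho_1(h_1) \otimes \rho_2(h_2)\bigr) = (\rho_1 \otimes \rho_2)(g_1, g_2)\,(\rho_1 \otimes \rho_2)(h_1, h_2)$. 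Finally the identity of $G_1 \times G_2$ is $(1_{G_1}, 1_{G_2})$ and maps to $\rho_1(1_{G_1}) \otimes \rho_2(1_{G_2}) = I_{V_1} \otimes I_{V_2} = I_{V_1 \otimes V_2}$, confirming that $\rho_1 \otimes \rho_2$ is a group homomorphism into $GL(V_1 \otimes V_2)$.

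There is no real obstacle here; the statement is essentially a bookkeeping exercise whose only non-trivial ingredient is the multiplicativity $(A \otimes B)(C \otimes D) = (AC) \otimes (BD)$ of tensor products of linear maps, which is a standard fact and whose one-line verification on simple tensors $v_1 \otimes v_2$ followed by bilinear extension can be invoked without proof. In the context of the paper, the lemma will be used together with Theorem \ref{thm_abelian} to build irreducible representations of $G = G_\gamma \times G_\delta$ by tensoring an irreducible representation of the Clifford group $G_\gamma$ with a one-dimensional (hence irreducible) representation of the abelian group $G_\delta$, so it is enough to record the lemma in the above generality.
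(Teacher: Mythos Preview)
Your proof is correct and is the standard verification; the paper itself does not prove this lemma but simply cites it from Serre, so there is no ``paper's own proof'' to compare against beyond noting that your argument is precisely the textbook one.
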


\begin{thm}\cite{Serre}
\label{thm_tensor}
\begin{enumerate}
\item If $\rho_1$ and $\rho_2$ are irreducible, then $\rho_1\otimes\rho_2$ is an irreducible representation of $G_1\times G_2$.
\item Each irreducible representation of $G_1\times G_2$ is equivalent to a representation $\rho_1\otimes\rho_2$, where $\rho_i$ is an irreducible representation of $G_i,~i=1,2$.
\end{enumerate}
\end{thm}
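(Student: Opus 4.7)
The plan is to argue via character theory, which is the cleanest route and matches the flavor of Serre. Recall that the character of a representation $\rho$ is $\chi_\rho(g) = \mathrm{tr}(\rho(g))$, that a representation is irreducible iff $\langle \chi_\rho, \chi_\rho\rangle := \tfrac{1}{|G|}\sum_{g\in G}|\chi_\rho(g)|^2 = 1$, and that two irreducible representations are equivalent iff they have the same character.

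For part (1), I would first observe that for $(g_1,g_2)\in G_1\times G_2$, the matrix $(\rho_1\otimes\rho_2)(g_1,g_2) = \rho_1(g_1)\otimes\rho_2(g_2)$, and so the character factorizes as $\chi_{\rho_1\otimes\rho_2}(g_1,g_2) = \chi_{\rho_1}(g_1)\,\chi_{\rho_2}(g_2)$. Then the inner product on $G_1\times G_2$ splits as a product:
\begin{equation*}
\langle \chi_{\rho_1\otimes\rho_2},\chi_{\rho_1\otimes\rho_2}\rangle = \Bigl(\tfrac{1}{|G_1|}\sum_{g_1}|\chi_{\rho_1}(g_1)|^2\Bigr)\Bigl(\tfrac{1}{|G_2|}\sum_{g_2}|\chi_{\rho_2}(g_2)|^2\Bigr) = 1\cdot 1 = 1,
\end{equation*}
since $\rho_1$ and $\rho_2$ are irreducible. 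Hence $\rho_1\otimes\rho_2$ is irreducible.

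For part (2), the idea is a counting / completeness argument. Let $\rho_1^{(i)}$, $i=1,\dots,r$, and $\rho_2^{(j)}$, $j=1,\dots,s$, be complete lists of the pairwise inequivalent irreducible representations of $G_1$ and $G_2$, with degrees $n_i$ and $m_j$ respectively. By part (1) each $\rho_1^{(i)}\otimes\rho_2^{(j)}$ is irreducible of degree $n_im_j$. Using the character factorization above together with the orthogonality $\langle\chi_{\rho_1^{(i)}},\chi_{\rho_1^{(i')}}\rangle = \delta_{ii'}$ and similarly for $G_2$, one checks that the tensor products for distinct pairs $(i,j)$ have orthogonal (hence distinct) characters, so they are pairwise inequivalent. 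Finally, compute
\begin{equation*}
\sum_{i,j}(n_im_j)^2 = \Bigl(\sum_i n_i^2\Bigr)\Bigl(\sum_j m_j^2\Bigr) = |G_1|\cdot|G_2| = |G_1\times G_2|.
\end{equation*}
Since the sum of squares of degrees of all inequivalent irreducible representations of $G_1\times G_2$ equals $|G_1\times G_2|$, the list $\{\rho_1^{(i)}\otimes\rho_2^{(j)}\}$ is exhaustive, proving the claim.

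The only mildly subtle point is justifying that inequivalent tensor products really do produce distinct characters — this reduces to the linear independence of the irreducible characters of each $G_k$ viewed as class functions, which is the orthogonality of characters. Everything else is bookkeeping with the tensor-product identity $\mathrm{tr}(A\otimes B)=\mathrm{tr}(A)\mathrm{tr}(B)$ and Burnside's sum-of-squares identity.
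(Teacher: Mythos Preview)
Your proof is correct. Note that the paper does not actually prove this theorem; it is quoted without proof from Serre's \emph{Linear Representations of Finite Groups} as a background tool. Your character-theoretic argument is precisely the one Serre gives there (Theorem~10, \S3.2), so you have reproduced the intended proof.
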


Now, having introduced the necessary tools, the problem is to find unitary matrix representations of the finite group $G$. Before we proceed, note that when $G$ is interpreted as a finite group, the representation of $-1$ does not necessarily have anything to do with $-1$ times identity matrix and similarly for a generic $-b,b\in\mathscr{B}_{(g-1)}^{\lambda}$. Such a representation $\rho$, where $\rho(-1)\neq-\rho(1)$ is said to be a degenerate representation. Degenerate representations are not representations of the algebra $\mathbb{A}_{(g-1)}^{\lambda}$. Thus we are interested in the smallest degree non-degenerate unitary representation $\rho$ of the finite group $G$ such that the representation matrices of the required elements of $G$ are linearly independent over $\mathbb{R}$. The following lemma will help to prove the linear independence of complex matrices over $\mathbb{R}$.

\begin{lem}
\label{lem_lid}
A set of complex matrices $\mathbf{A_i}, i=1,\dots,K \in\mathbb{C}^{T\times N_T}$ are linearly independent over $\mathbb{R}$ if $\mathrm{Tr}(\mathbf{A_i}^H\mathbf{A_j}+\mathbf{A_j}^H\mathbf{A_i})=0,~\forall~i\neq j$.
\end{lem}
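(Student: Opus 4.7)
The plan is to recognize that the hypothesis is exactly a pairwise orthogonality statement with respect to the natural real inner product on $\mathbb{C}^{T\times N_T}$ viewed as a real vector space, and then invoke the standard fact that a set of pairwise orthogonal nonzero vectors is linearly independent.

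More concretely, I would first rewrite the hypothesis. Since $\mathrm{Tr}(\mathbf{A_j}^H\mathbf{A_i})$ is the complex conjugate of $\mathrm{Tr}(\mathbf{A_i}^H\mathbf{A_j})$, the quantity $\mathrm{Tr}(\mathbf{A_i}^H\mathbf{A_j}+\mathbf{A_j}^H\mathbf{A_i})$ equals $2\,\mathrm{Re}\bigl(\mathrm{Tr}(\mathbf{A_i}^H\mathbf{A_j})\bigr)$. I would then introduce the bilinear form
\begin{equation*}
\langle \mathbf{A},\mathbf{B}\rangle_{\mathbb{R}} \;=\; \mathrm{Re}\bigl(\mathrm{Tr}(\mathbf{A}^H\mathbf{B})\bigr),
\end{equation*}
and observe that it is a symmetric, positive definite $\mathbb{R}$-bilinear form on $\mathbb{C}^{T\times N_T}$ regarded as a real vector space of dimension $2TN_T$; in particular $\langle \mathbf{A},\mathbf{A}\rangle_{\mathbb{R}}=\|\mathbf{A}\|_F^2\ge 0$ with equality iff $\mathbf{A}=\mathbf{0}$. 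The hypothesis then reads $\langle \mathbf{A_i},\mathbf{A_j}\rangle_{\mathbb{R}}=0$ for all $i\neq j$.

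With this setup, the argument is the standard orthogonality-implies-independence step. Suppose $\sum_{i=1}^{K} c_i\mathbf{A_i}=\mathbf{0}$ for some $c_1,\dots,c_K\in\mathbb{R}$. Taking the real inner product of both sides with $\mathbf{A_j}$ and using $\mathbb{R}$-bilinearity,
\begin{equation*}
0 \;=\; \Bigl\langle \sum_{i=1}^{K}c_i\mathbf{A_i},\,\mathbf{A_j}\Bigr\rangle_{\mathbb{R}} \;=\; \sum_{i=1}^{K} c_i\,\langle \mathbf{A_i},\mathbf{A_j}\rangle_{\mathbb{R}} \;=\; c_j\,\|\mathbf{A_j}\|_F^2,
\end{equation*}
where the orthogonality hypothesis kills every cross term. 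Provided the weight matrices $\mathbf{A_j}$ are nonzero (which is implicit, since otherwise they cannot be linearly independent in any case), this forces $c_j=0$. Repeating for each $j$ gives $c_1=\cdots=c_K=0$, which is linear independence over $\mathbb{R}$.

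There is really no main obstacle here; the entire content is the identification of $\mathrm{Tr}(\mathbf{A_i}^H\mathbf{A_j}+\mathbf{A_j}^H\mathbf{A_i})$ with twice the real Frobenius inner product, after which the proof is two lines. The only subtlety worth flagging in the write-up is that the condition is sufficient but not necessary (it is a real-orthogonality condition, which is strictly stronger than independence), and it is this extra orthogonality that will be leveraged in subsequent applications to representation matrices of elements of the group $G$.
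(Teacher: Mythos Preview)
Your proposal is correct and is essentially identical to the paper's proof: the paper simply observes that $\frac{1}{2}\mathrm{Tr}(\mathbf{A}^H\mathbf{B}+\mathbf{B}^H\mathbf{A})$ defines a real inner product on $\mathbb{C}^{T\times N_T}$ and states that the lemma follows. Your write-up makes the standard orthogonality-implies-independence step explicit and correctly notes the implicit assumption that the $\mathbf{A_j}$ are nonzero, which the paper leaves unsaid.
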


\begin{proof}
The linear map from $\mathbb{C}^{T\times N_T}\times\mathbb{C}^{T\times N_T}\mapsto\mathbb{R}$ given by $\frac{1}{2}\mathrm{Tr}(\mathbf{A}^H\mathbf{B}+\mathbf{B}^H\mathbf{A})$ for $\mathbf{A},\mathbf{B}\in\mathbb{C}^{T\times N_T}$ is an inner product. The statement of the lemma then follows.
\end{proof}

\begin{thm}
\label{thm_maxrate}
The maximum rate of a CUWD for $\lambda=2^a,a\in\mathbb{N}$ and arbitrary $g$ is equal to $\frac{g}{2^{\lfloor\frac{(g-1)}{2}\rfloor}}$ dpcu.
\end{thm}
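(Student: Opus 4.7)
The plan is to recast the minimisation of $N_{T}$ as a question in the linear representation theory of the finite group $G=G_{\gamma}\times G_{\delta}$ of Proposition \ref{prop_group}. By the algebraic description of CUWDs developed in Subsection \ref{subsubsec2_2_2}, producing a CUWD with parameters $(g,\lambda=2^{a},N_{T})$ is equivalent to constructing a unitary matrix representation $\rho$ of $\mathbb{A}_{(g-1)}^{\lambda}$ of degree $N_{T}$ such that the images under $\rho$ of the first-row generators $\gamma_{1},\dots,\gamma_{g-1}$ and the first-column $\delta$-monomials of Table \ref{table_CUWD} are $\mathbb{R}$-linearly independent. Equivalently, $\rho$ is a non-degenerate representation of the finite group $G$ (i.e.\ $\rho(-1)=-\mathbf{I}$), since only such representations of $G$ factor through the quotient of $\mathbb{R}[G]$ which is isomorphic to $\mathbb{A}_{(g-1)}^{\lambda}$; the trace inner product of Lemma \ref{lem_lid} will serve throughout as the tool for checking linear independence.

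I would then separately minimise the contributions of the two commuting factors by invoking Lemma \ref{lem_tensor} and Theorem \ref{thm_tensor}: every irreducible representation of $G$ factors as $\rho_{\gamma}\otimes\rho_{\delta}$, with $\rho_{\gamma}$ and $\rho_{\delta}$ irreducible on the two factors. Because $G_{\delta}\cong C_{2}^{a}$ is Abelian, Theorem \ref{thm_abelian} forces every $\rho_{\delta}$ to be a one-dimensional character, which alone cannot distinguish the $\lambda=2^{a}$ first-column elements $1,\delta_{1},\dots,\delta_{1}\cdots\delta_{a}$; the smallest direct sum of characters in which these $\lambda$ images become linearly independent is the regular representation of $G_{\delta}$, of degree exactly $\lambda$, the independence following from non-singularity of the $C_{2}^{a}$ character table. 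For the Clifford factor $G_{\gamma}$, I would invoke the standard classification already used in \cite{TiH1}, which identifies $2^{\lfloor(g-1)/2\rfloor}$ as the degree of the smallest non-degenerate complex irreducible of $G_{\gamma}$; in any such irreducible the images $\rho_{\gamma}(\gamma_{1}),\dots,\rho_{\gamma}(\gamma_{g-1})$ together with $\mathbf{I}$ are automatically mutually orthogonal under the trace inner product (they anticommute and square to $-\mathbf{I}$), hence $\mathbb{R}$-linearly independent.

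Tensoring these two minimal pieces yields a representation of $G$ of degree $\lambda\cdot 2^{\lfloor(g-1)/2\rfloor}$ in which all the required linear-independence conditions hold, because the trace inner product on a tensor product factorises and each factor independently satisfies the requisite orthogonality. This would give the achievability of rate
\[
\frac{g\lambda}{N_{T}}\;=\;\frac{g}{2^{\lfloor(g-1)/2\rfloor}}\;\text{dpcu},
\]
and by arranging the basis matrices as $\rho_{\gamma}(\gamma_{i})\otimes\rho_{\delta}(\mathrm{id})$ in the first row and $\rho_{\gamma}(\mathrm{id})\otimes\rho_{\delta}(\delta_{k_{1}}\cdots\delta_{k_{m}})$ in the first column (with the remaining weight matrices obtained by the products mandated for CUWDs), one recovers an explicit construction.

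The main obstacle will be the matching lower bound. My strategy is to take an arbitrary representation $\rho$ of $G$ meeting all the constraints, decompose it into isotypic components $\bigoplus_{k}\rho_{\gamma}^{(k)}\otimes\rho_{\delta}^{(k)}$, and argue two things in tandem: non-degeneracy $\rho(-1)=-\mathbf{I}$, applied to the central element $-1\in G_{\gamma}$, forces at least one summand whose $\rho_{\gamma}^{(k)}$ is itself non-degenerate and hence of degree at least $2^{\lfloor(g-1)/2\rfloor}$; while linear independence of the $\lambda$ first-column images forces the characters $\rho_{\delta}^{(k)}$ appearing across the summands to exhaust the full character group of $G_{\delta}$. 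The delicate point, where I expect most of the technical work to go, is to rule out the possibility of satisfying the linear-independence condition at strictly smaller total degree by pairing degenerate $\gamma$-summands with cleverly chosen $\delta$-characters; handling this via a character-theoretic projection against $(-1,1)\in G_{\gamma}\times G_{\delta}$, combined with the block-diagonal structure forced on any reducible representation, should close the gap and deliver $N_{T}\geq\lambda\cdot 2^{\lfloor(g-1)/2\rfloor}$.
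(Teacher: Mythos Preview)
Your approach is essentially the same as the paper's---both reduce the question to finding the minimal-degree non-degenerate unitary representation of $G=G_{\gamma}\times G_{\delta}$ with the required linear-independence property, and both exploit the tensor decomposition of Theorem \ref{thm_tensor} together with the one-dimensionality of $G_{\delta}$'s irreducibles. The paper organises the argument as an induction on $a$, adjoining one $C_{2}$ factor to $G_{\delta}$ at a time and arguing that each new factor doubles the minimal degree; you instead go directly to the full regular representation of $G_{\delta}\cong C_{2}^{a}$. Your route is slightly cleaner and avoids the inductive bookkeeping, but the underlying mechanism is identical.

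One clarification that dissolves the ``delicate point'' you flag in the lower bound: since $-1\in G_{\gamma}$ is central in all of $G$, it acts as a scalar on each irreducible summand, and the global condition $\rho(-1)=-\mathbf{I}$ forces $\rho_{\gamma}^{(k)}(-1)=-\mathbf{I}$ on \emph{every} summand, not merely one. Degenerate $\gamma$-summands therefore cannot occur at all, and every summand already has degree at least $2^{\lfloor(g-1)/2\rfloor}$. The lower bound then follows without further subtlety: the $\lambda$ first-column images are block-diagonal with scalar blocks $\chi_{k}(\delta)$, so their $\mathbb{R}$-linear independence forces at least $\lambda$ distinct characters of $G_{\delta}$ to appear among the summands, giving $N_{T}\geq \lambda\cdot 2^{\lfloor(g-1)/2\rfloor}$. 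The paper's inductive argument is in fact somewhat less explicit about this lower-bound step than your decomposition makes it.
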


\begin{proof}
Proof is by induction on $a$. The proof proceeds to find the smallest degree non-degenerate unitary representation $\rho$ of $G$ such that the following condition is satisfied.

\begin{equation}
\label{eqn_cond_rep}
\rho(x)\neq\pm\rho(y),~\forall~x\neq y\in G
\end{equation}

The above condition is required, since otherwise, the representation matrices will be linearly dependent over $\mathbb{R}$. However, even if the above condition is satisfied, linear independence is still not guaranteed. Therefore, we can only obtain an upper bound on the rate but we shall see that a representation meeting the upper bound actually provides us with linear independence as well.

For $a=0$, CUWDs become ODs and the maximum rate for square ODs is well known \cite{TiH1} and the theorem holds true. For $a=1$, $\lambda=2$ and the group $G=G_\gamma\times G_{\delta_1}$ where $G_{\delta_1}=\left\{1,\delta_1 \right\}$. Since we are interested in the smallest degree representation of $G$, let us first study the irreducible representations of $G$. From Theorem \ref{thm_tensor}, all irreducible representations of $G$ are obtained as a tensor product of irreducible representations of $G_{\gamma}$ and $G_{\delta_1}$. All irreducible representations of $G_{\gamma}$ have been studied in \cite{TiH1}. There are $2$ non-degenerate irreducible representations of $G_\gamma$ in dimension $2^{\lfloor\frac{g-1}{2}\rfloor}$. The representation matrices of the $(g-1)$ generators of $G_\gamma$ are given as follows \cite{TiH1}:

\begin{equation*}
\begin{array}{rclcc}
R(\gamma_{2k})&=&\underbrace{\mathbf{I_2}\otimes\dots\otimes\mathbf{I_2}}\otimes\mathbf{\sigma_1}\otimes&\underbrace{\mathbf{\sigma_3}\otimes\dots\otimes\mathbf{\sigma_3}},& k=1,2,\dots,(K-1)\\
&&~~K-1-k&k-1\\
R(\gamma_{2k})&=&\underbrace{\mathbf{I_2}\otimes\dots\otimes\mathbf{I_2}}\otimes\mathbf{\sigma_2}\otimes&\underbrace{\mathbf{\sigma_3}\otimes\dots\otimes\mathbf{\sigma_3}},&k=1,2,\dots,(K-1)\\
&&~~K-1-k&k-1\\
R(\gamma_{1})&=&\pm i \underbrace{\mathbf{\sigma_3}\otimes\dots\otimes\mathbf{\sigma_3}}\\
&&~~~~~~~~K-1&
\end{array}
\end{equation*}

\noindent where, $\mathbf{\sigma_1}=\left[\begin{array}{cc}0 & 1\\-1 & 0\end{array}\right]$, $\mathbf{\sigma_2}=\left[\begin{array}{cc}0 & i\\ i & 0 \end{array}\right]$, $\mathbf{\sigma_3}=i\mathbf{\sigma_1}\mathbf{\sigma_2}=\left[\begin{array}{cc}1 & 0\\0 & -1 \end{array}\right]$ and $K=\left\{\begin{array}{cc} \frac{g}{2} & \mathrm{if}~g~\mathrm{is}~\mathrm{even}\\ \frac{g+1}{2} & \mathrm{if}~g~\mathrm{is}~\mathrm{odd}  \end{array}\right.$. The notation $R(.)$ is used to denote the representation matrix. Also note that $\mathrm{Tr}(\mathbf{\sigma_i})=0,~i=1,2,3$.

Since the two non-degenerate representations are in the same dimension, without loss of generality, let us consider one of them and denote it by $\rho_0$. By Theorem \ref{thm_abelian}, all the irreducible representations of $G_{\delta_1}$ are in dimension $1$ since the group $G_{\delta_1}$ is Abelian. Recall that $G_{\delta_1}$ is nothing but the cyclic group $C_2$ of order two. Apart from the trivial representation (all elements are mapped to $1$), the only other irreducible representation of the order two cyclic group $G_{\delta_1}$ is given by: $R(1)=1$, $R(\delta_1)=-1$. Note that $(-1)^2=1$ and hence $-1$ is the generator. Thus we get two non-degenerate irreducible representations of $G$ in dimension $2^{\lfloor\frac{g-1}{2}\rfloor}$ denoted by $R_1$ and $R_2$ respectively and they are given by:
\begin{enumerate}
\item $R_1(\gamma_i)=\rho_0(\gamma_i), i=1,\dots,(g-1)$, $R_1(\delta_1)=\mathbf{I_m}$, $R_1(\delta_1\gamma_i)=\rho_0(\gamma_i), i=1,\dots,(g-1)$
\item $R_2(\gamma_i)=\rho_0(\gamma_i), i=1,\dots,(g-1)$, $R_2(\delta_1)=-\mathbf{I_m}$, $R_2(\delta_1\gamma_i)=-\rho_0(\gamma_i), i=1,\dots,(g-1)$
\end{enumerate}

\noindent where, $m=2^{\lfloor\frac{g-1}{2}\rfloor}$. But both the non-degenerate irreducible representations of $G$ fail to satisfy condition \eqref{eqn_cond_rep}. Thus we seek non-degenerate reducible representations of $G$ that satisfy \eqref{eqn_cond_rep}. From property [R2], we have that reducible representations can be easily obtained by placing irreducible representations as blocks on the diagonal. If degenerate irreducible representations are placed as blocks on the diagonal then it is easy to check that the resulting representation will also be degenerate. Thus only non-degenerate irreducible representations can be placed as blocks on the diagonal to construct non-degenerate reducible representations of $G$. It then follows that the smallest degree non-degenerate representation $\rho_1$ satisfying \eqref{eqn_cond_rep} for $a=1$ is $2(2^{\lfloor\frac{g-1}{2}\rfloor})$ and the corresponding basis matrices we need are explicitly given as follows:

$\mathbf{A_1}=\mathbf{I}_{2m}$, $\mathbf{A_{2i+1}}=\left[\begin{array}{cc}\rho_0(\gamma_{i}) & \mathbf{0}\\ \mathbf{0} & \rho_0(\gamma_{i})\end{array}\right], i=1,2,\dots,(g-1)$, $\mathbf{A_{2}}=\left[\begin{array}{cc}\mathbf{I_m} & \mathbf{0}\\ \mathbf{0} & -\mathbf{I_m} \end{array}\right]$.

Now using the identity $\mathrm{Tr}(\mathbf{A}\otimes\mathbf{B})=\mathrm{Tr}(\mathbf{A})\times\mathrm{Tr}(\mathbf{B})$, it can be easily checked that the above basis matrices are trace orthogonal, i.e.,  $\mathrm{Tr}(\mathbf{A}_i^H\mathbf{A}_j+\mathbf{A}_j^H\mathbf{A}_i)=0,~\forall~i\neq j$ and hence by Lemma \ref{lem_lid} they are linearly independent over $\mathbb{R}$. Thus the theorem is true for $a=1.$  Now let us assume that the theorem is true for $a=n-1$ and prove the theorem for $a=n$.

For the case of $a=n$, note that the corresponding $G$ can be expressed as $G=G_{n-1}\times G_{\delta_n}$ where,\\ $G_{n-1}=G_{\gamma}\times G_{\delta_1}\times G_{\delta_2}\times\dots\times G_{\delta_{n-1}}$ and $G_{\delta_i}=\left\{1,\delta_i \right\}, i=1,\dots,n$. Once again invoking Theorem \ref{thm_tensor}, we have that the irreducible representations of $G$ are a tensor product of irreducible representations of $G_{n-1}$ and $G_{\delta_n}$. Now using Theorem \ref{thm_abelian}, the non-degenerate irreducible representations of $G$ are in dimension $2^{\lfloor\frac{(g-1)}{2}\rfloor}$. Since they do not satisfy \eqref{eqn_cond_rep}, we look for non-degenerate reducible representations whose degree has to be a multiple of $2^{\lfloor\frac{g-1}{2}\rfloor}$. By induction hypothesis, the smallest degree non-degenerate representation which results in linearly independent basis matrices for $a=n-1$ is $2^{n-1}(2^{\lfloor\frac{g-1}{2}\rfloor})$. Let it be denoted by $\rho_{n-1}$. Since the representation $\rho_{n-1}$ is also a representation of $G_{n-1}$, using analogous arguments as made for $a=1$ it follows that the smallest degree non-degenerate representation $\rho_n$ satisfying \eqref{eqn_cond_rep} for $a=n$ is in dimension $2^n(2^{\lfloor\frac{(g-1)}{2}\rfloor})$ and the corresponding basis matrices are given by:
$A_1=\left[\begin{array}{cc}\rho_{n-1}(1) & \mathbf{0}\\ \mathbf{0} & \rho_{n-1}(1)\end{array}\right]$, $A_{2^ni+1}=\left[\begin{array}{cc}\rho_{n-1}(\gamma_i) & \mathbf{0}\\
\mathbf{0} & \rho_{n-1}(\gamma_i)\end{array}\right], i=1,\dots,(g-1)$,\\ $A_i=\left[\begin{array}{cc}\rho_{n-1}(\delta_{i-1}) & \mathbf{0}\\ \mathbf{0} & \rho_{n-1}(\delta_{i-1})\end{array}\right], i=2,\dots,(n-1)$, $A_n=\left[\begin{array}{cc}\rho_{n-1}(1) & \mathbf{0}\\ \mathbf{0} & -\rho_{n-1}(1)\end{array}\right]$.

Once again it can be shown that the above basis matrices are linearly independent over $\mathbb{R}$ by using Lemma \ref{lem_lid}.
\end{proof}

Theorem \ref{thm_maxrate} essentially answers the question: For a CUWD, given $g$ and $\lambda$, a power of two, what is the minimum matrix size $N_T$ that it can have? The answer to this question is given by $\lambda\left(2^{\lceil\frac{g-1}{2}\rceil}\right)$. The following example highlights the fact that the maximum rate expression of a CUWD given in Theorem \ref{thm_maxrate} does not depend on $\lambda$.

\begin{eg}
\label{eg_rate_lambda}
For $g=4$, let us study CUWDs for two cases $\lambda=1$ and $\lambda=2$.\\
Case 1: $\lambda=1$, $g=4$\\
The minimum possible dimension in which a CUWD with these parameters exists is given by Theorem 6 which is equal to $2$. The corresponding CUWD is nothing but the well known Alamouti LSTD.\\
Case 2: $\lambda=2$, $g=4$\\
The minimum possible dimension in which a CUWD with $\lambda=2$, $g=4$ exists as per Theorem 6 is $4$ and the corresponding CUWD is given by:
$$
\left[\begin{array}{rrrr}
x_1+x_2+i(x_3+x_4) & -x_5-x_6+i(x_7+x_8) & 0 & 0\\
x_5+x_6+i(x_7+x_8) & x_1+x_2-i(x_3+x_4) & 0 & 0\\
0 & 0 & x_1-x_2+i(x_3-x_4) & -x_5+x_6+i(x_7-x_8)\\
0 & 0 & x_5-x_6+i(x_7-x_8) & x_1-x_2-i(x_3-x_4)
 \end{array}\right]
$$
\noindent where, the grouping of real variables are $\left\{x_1,x_2\right\}$, $\left\{x_3,x_4\right\}$, $\left\{x_5,x_6\right\}$ and $\left\{x_7,x_8\right\}$. Note that when $\lambda$ increases from one to two, the minimum dimension in which $K=\lambda g$ matrices with required properties exist, also increases. In fact, Theorem \ref{thm_maxrate} explicitly tells that $N_T$ increases linearly with $\lambda$. This makes the rate, which is $\frac{\lambda g}{N_T}$ independent of $\lambda$.
\end{eg}

\subsubsection{Algebraic description for ABBA construction}
\label{subsubsec2_3_1}
~\\
As stated in Section \ref{sec3} (also see Example \ref{eg_eca}), the algebra $\mathbb{A}_n^{L}$ over $\mathbb{R}$ can also be viewed as a finitely generated right module algebra over $Cliff_n$. A general element $x$ of the algebra $\mathbb{A}_n^L$ can be written as follows:
\begin{equation}
x=c_1+\delta_1c_2+\dots+\delta_ac_{a+1}+\delta_1\delta_2c_{a+2}+\dots+(\prod_{i=1}^{a}\delta_i)c_{L}
\end{equation}
where $c_i,i=1,\dots,L~\in Cliff_n$. There is a natural embedding of $\mathbb{A}_n^L$ into $End_{Cliff_n}(\mathbb{A}_n^L)$ given by left multiplication as shown below:

\begin{equation}
\begin{array}{l}
\phi:\mathbb{A}_n^L\mapsto \mathrm{End}_{Cliff_n}(\mathbb{A}_n^L),\\
\phi(x)=L_x:y\mapsto xy.
\end{array}
\end{equation}

It is easy to check that the map $L_x$ is $Cliff_n$ linear and the map $\phi$ is a ring homomorphism. Also, it can be proved that the map $\phi$ is injective as follows.\\
Let $\phi(x_1)=L_{x_1}$ and $\phi(x_2)=L_{x_2}$. If $\phi(x_1)=\phi(x_2)$, then
$$
\begin{array}{rcl}
L_{x_1}(y)&=&L_{x_2}(y) ~ \forall~ y\\
x_1y&=&x_2y ~ \forall ~y\\
(x_1-x_2)y&=&0 ~ \forall~ y
\end{array}
$$
\noindent which implies $x_1-x_2=0$ or equivalently $x_1=x_2$. Hence, we can represent the algebra $\mathbb{A}_n^L$ by matrices with entries from Clifford algebra. However, we are only interested in matrix representations with entries from the complex field. But this can be easily obtained by simply replacing each Clifford algebra element by its matrix representation over $\mathbb{C}$. This is possible because the matrix representation of $Cliff_n$ over $\mathbb{C}$ is well known and is explicitly given in \cite{TiH1}. The resulting weight matrices are guaranteed to be linearly independent since $\phi$ is injective. We now illustrate this construction with an example.
\begin{eg}
Consider $\mathbb{A}_n^{2^a}$ for $a=2$. Thus $\lambda=4$ , $g=n+1$ and $K=4(n+1)$. A general element $x\in \mathbb{A}_n^4$ can be expressed as follows.
$$
x=c_1+\delta_1c_2+\delta_2c_3+\delta_1\delta_2c_4
$$
where, $c_i,i=1,\dots,4 \in Cliff_n$. Let us obtain a matrix representation over $Cliff_n$ for the map $L_x$. We have,
\begin{equation}
\begin{array}{rcl}
L_x(1)&=&c_1+\delta_1c_2+\delta_2c_3+\delta_1\delta_2c_4\\
L_x(\delta_1)&=&(c_1+\delta_1c_2+\delta_2c_3+\delta_1\delta_2c_4)\delta_1\\
&=&\delta_1c_1+c_2+\delta_1\delta_2c_3+\delta_2c_4\\
L_x(\delta_2)&=&(c_1+\delta_1c_2+\delta_2c_3+\delta_1\delta_2c_4)\delta_2\\
&=&\delta_2c_1+\delta_1\delta_2c_2+c_3+\delta_1c_4\\
L_x(\delta_1\delta_2)&=&(c_1+\delta_1c_2+\delta_2c_3+\delta_1\delta_2c_4)\delta_1\delta_2\\
&=&\delta_1\delta_2c_1+\delta_2c_2+\delta_1c_3+c_4.
\end{array}
\end{equation}
\end{eg}
The map $L_x$ can be represented by the matrix
\begin{equation}
\label{eqn_abba}
\left[\begin{array}{cccc}
c_1 & c_2 & c_3 & c_4\\
c_2 & c_1 & c_4 & c_3\\
c_3 & c_4 & c_1 & c_2\\
c_4 & c_3 & c_2 & c_1
\end{array}\right]
\end{equation}
\noindent  where, $c_1,c_2,c_3,c_4\in Cliff_n$. In order to get a matrix representation over $\mathbb{C}$, we simply replace each $c_i,i=1,\dots,4$ by their matrix representations over $\mathbb{C}$. However, we are interested only in a $4$-group ML decodable LSTD which can be obtained by using the matrix representation of the specific elements $1$, $\gamma_i,~i=1,\dots,n$, $\delta_1$, $\delta_1\gamma_i,~i=1,\dots,n$, $\delta_2$, $\delta_2\gamma_i,~i=1,\dots,n$, $\delta_1\delta_2$, $\delta_1\delta_2\gamma_i,~i=1,\dots,n$ as weight matrices. This is done by restricting the representation of the algebra to the subspace over $\mathbb{R}$ spanned by the required elements of the algebra. In other words, we substitute zero for the coefficients corresponding to the terms not required (terms involving product of Clifford algebra generators like $\gamma_1\gamma_2$ are omitted). To be precise, each $c_i\in Cliff_n$ is restricted to be of the form:
$$
c_i=x_{(i-1)(n+1)+1}+\sum_{j=2}^{n}x_{(i-1)(n+1)+j}\gamma_j
$$
\noindent for some $x_i\in\mathbb{R},~ i=1,\dots,K$ by forcing the coefficients of the remaining terms as zero. In terms of the corresponding matrix representation, this is equivalent to simply replacing $c_i, i=1,\dots,4$ by ODs in \eqref{eqn_abba}. Therefore, the above method results in a $4$-real symbol ML decodable CUW STBC with maximal rate. It turns out that the above construction is precisely the ABBA construction proposed by Tirkkonen et al in \cite{TBH}.

As a consequence of this result, it follows that the $4$ transmit antenna LSTD based on ABBA construction given by $\left[\begin{array}{crcr}
x_1+ix_2 & -x_3+ix_4 & x_5+ix_6 & -x_7+ix_8\\
x_3+ix_4 & x_1-ix_2 & x_7+ix_8 & x_5-ix_6\\
x_5+ix_6 & -x_7+ix_8 & x_1+ix_2 & -x_3+ix_4\\
x_7+ix_8 & x_5-ix_6 & x_3+ix_4 & x_1-ix_2\end{array}\right]$ has to be $2$-real symbol ML decodable. Though the same LSTD was proposed earlier in \cite{TBH}, the authors of \cite{TBH} chose the following pairing of real variables in a group which essentially resulted in a $4$-real symbol ML decodable STBC.
\begin{enumerate}
\item First group $\left\{x_{1},x_{2}\right\}$
\item Second group $\left\{x_{3},x_{4}\right\}$
\item Third group $\left\{x_{5},x_{6}\right\}$
\item Fourth group $\left\{x_{7},x_{8}\right\}$.
\end{enumerate}
However, if we form the following partition of real variables, we can obtain a single complex symbol ML decodable STBC.
\begin{enumerate}
\item First group $\left\{x_{1},x_{5}\right\}$
\item Second group $\left\{x_{2},x_{6}\right\}$
\item Third group $\left\{x_{3},x_{7}\right\}$
\item Fourth group $\left\{x_{4},x_{8}\right\}$.
\end{enumerate}

Thus we see that the ML decoding complexity of STBCs obtained from linear designs can vary dramatically depending on the choice of multidimensional signal set $\mathscr{A}$.

\subsubsection{Algebraic description for Tensor product construction}
\label{subsubsec2_3_2}
~\\
In \cite{KaR2}, a construction of CUW STBCs based on tensor products was provided without giving any reasoning for the mathematical source of such a construction. With the algebraic background that we have now developed, the tensor product construction in \cite{KaR2} can be easily explained. Since the group $G$ is a direct product of $G_\gamma$ and $G_{\delta}$, from Lemma \ref{lem_tensor}, a representation of $G$ can be obtained as a tensor product of a representation of $G_{\gamma}$ and that of $G_{\delta}$. Unitary matrix representations of $G_{\gamma}$ are available in \cite{TiH1}. The unitary  matrices representing $G_{\delta}$ should commute and also square to $I$. Such matrices are simultaneously diagonalizable and their eigen values are equal to $\pm 1$ (squaring to $I$). So a simple method to construct $\lambda$-real symbol decodable STBCs would be to take $\lambda$ linearly independent diagonal matrices of size $\lambda\times\lambda$ having $\pm 1$ as entries and then tensor them with representation matrices of the generators of $G_\gamma$. The construction suggested in \cite{KaR2} is precisely based on this principle. One advantage of the construction in \cite{KaR3} is that it provides CUWDs for all even number of transmit antennas.

\subsubsection{On the maximal rate of non-CUW STBCs}
\label{subsubsec2_3_3}
~\\
It is important to note that Theorem \ref{thm_maxrate} provides the optimal rate-ML decoding complexity tradeoff only within the class of CUW STBCs. The rate of a general $g$-group ML decodable STBC can in fact be more that of a CUWD. An example of a such a LSTD is the recently found high rate quasi-orthogonal STBC in \cite{YGT2}. This LSTD for $4$ transmit antennas which was found by an exhaustive computer search has a rate of $2.5$ dpcu and is $2$-group ML decodable. This solitary example for $4$ transmit antennas shows that there is a lot of room for further work in the direction of increasing transmission rate of $g$-group ML decodable STBCs.

\section{Multi-group ML decodable Distributed STBCs}
\label{sec3}

In this section, multi-group ML decodable distributed space time block codes (DSTBCs) are discussed. In Subsection \ref{subsec3_1}, we present a generalization of the distributed space-time coding strategy proposed in \cite{JiH} and derive a code design criteria for full diversity. In Subsection \ref{subsec3_2}, the necessary and sufficient conditions for multi-group ML decoding of DSTBCs are provided. Three new classes of four group decodable DSTBCs are constructed in Subsection \ref{subsec3_3}.

\subsection{Distributed Space-Time Coding}
\label{subsec3_1}

Consider a network consisting of a source node, a destination node and $R$ relay nodes which aid the source in communicating information to the destination. All the nodes are assumed to be equipped only with a single antenna and are half duplex constrained, i.e., a node cannot transmit and receive simultaneously in the same frequency. The wireless channels between the terminals are assumed to quasi-static and flat fading. The channel fading gains from the source to the $i$-th relay, $f_i$ and from the $i$-th relay to the destination $g_i$ are all assumed to be independent and identically distributed complex Gaussian random variables with zero mean and unit variance. Symbol synchronization and carrier frequency synchronization are assumed among all the nodes. Moreover the destination is assumed to have perfect knowledge of all the channel fading gains.

Every transmission cycle from the source to the destination comprises of two phases-broadcast phase and cooperation phase. In the broadcast phase, the source transmits a $T(T\geq R)$ length vector $\sqrt{\pi_1P}\mathbf{z}$ which the relays receive. Here, $P$ denotes the total average power spent by all the relays and the source. The fraction of total power $P$ spent by the source is denoted by $\pi_1$. The vector $\mathbf{z}$ satisfies $\mathrm{E}[\mathbf{z}^H\mathbf{z}]=T$ and represents the information that the source intends to communicate. The received vector at the $j$-th relay node is then given by $\mathbf{r_j}=\sqrt{\pi_1P}f_j\mathbf{z}+\mathbf{v_j},\ \mathrm{where}\ \mathbf{v_j}\sim\mathcal{CN}(0,I_T)$. During the cooperation phase, all the relay nodes are scheduled to transmit together. The relays are allowed to only linearly process the received vector $\mathbf{r_j}$ or its conjugate $\mathbf{r_j}^*$. To be precise, the $j$-th relay node is equipped with a $T\times T$ matrix $\mathbf{B_j}$ (called relay matrix) satisfying $\parallel\mathbf{B_j}\parallel_F^2=T$ and it transmits  $\mathbf{t_j}=\sqrt{\frac{\pi_2P}{\pi_1P+1}}\mathbf{B_jr_j}$ or $\mathbf{t_j}=\sqrt{\frac{\pi_2P}{\pi_1P+1}}\mathbf{B_jr_j}^*$. Here, $\pi_2$ denotes the fraction of total power $P$ spent by a relay. Without loss of generality, we may assume that the first $M$ relays linearly process $\mathbf{r_j}$ and the remaining $R-M$ relays linearly process $\mathbf{r_j}^*$. If the quasi-static duration of the channel is much greater than $2T$ time slots, then the received vector at the destination is given by
\begin{equation}
\mathbf{y}=\sum_{j=1}^{R}g_j\mathbf{t_j}+\mathbf{w}=\sqrt{\frac{\pi_1\pi_2P^2}{\pi_1P+1}}\mathbf{Xh}+\mathbf{n}
\end{equation}
where,
\begin{eqnarray}
\mathbf{X}&=&\left[\begin{array}{ccccccc}\mathbf{B_1z} & \dots & \mathbf{B_Mz} &
\label{eqn_dstbc}
\mathbf{B_{M+1}z}^* & \dots & \mathbf{B_Rz}^*\end{array}\right]\\
\mathbf{h}&=&\left[\begin{array}{ccccccc}f_1g_1 & f_2g_2 & \dots f_Mg_M & f_{M+1}^*g_{M+1} & \dots & f_R^*g_R\end{array}\right]^T,\\
\mathbf{n}&=&\sqrt{\frac{\pi_2P}{\pi_1P+1}}\left(\sum_{j=1}^{M}g_j\mathbf{B_jv_j}+\sum_{j=M+1}^{R}g_j\mathbf{B_jv_j}^*\right)+w,
\end{eqnarray}
\noindent
and $\mathbf{w}\sim\mathcal{CN}(0,I_T)$ represents the additive noise at the destination. The power allocation factors $\pi_1$ and $\pi_2$ must satisfy $\pi_1PT+R\pi_2PT=P(2T)$. Throughout this paper, we choose $\pi_1=1$ and $\pi_2=\frac{1}{R}$ as suggested in \cite{JiH}. Let $\mathbf{\Gamma}$ denote the covariance matrix of $\mathbf{n}$. We have,
\begin{equation}
\label{eqn_cov_matrix}
\mathbf{\Gamma}=\mathrm{E}[\mathbf{n}\mathbf{n}^H]=\mathbf{I_T}+\frac{\pi_2P}{\pi_1P+1}(\sum_{i=1}^{R}|g_i|^2\mathbf{B_i}\mathbf{B_i}^H).
\end{equation}

The vector $\mathbf{z}$ transmitted by the source is taken from a finite subset of $\mathbb{C}^T$ which then defines a collection of matrices when substituted for in $\mathbf{X}$ as given in \eqref{eqn_dstbc}. This finite set of matrices is called a DSTBC since each column of a codeword matrix is transmitted by geographically distributed relay nodes. The destination node performs ML decoding as follows:
\begin{equation}
\label{eqn_ml_dstbc}
\mathbf{\hat{X}}=\arg\min_{\mathbf{X}\in\mathscr{C}}\parallel\mathbf{\Gamma}^{-\frac{1}{2}}(\mathbf{y}-\sqrt{\frac{\pi_1\pi_2P^2}{\pi_1P+1}}\mathbf{X}\mathbf{h})\parallel_F^2.
\end{equation}

Observe that if the entries of $\mathbf{z}$ are treated as complex variables, then the DSTBC $\mathscr{C}$ can be viewed as being obtained from certain special LSTDs having the form of $\eqref{eqn_dstbc}$. Note that such LSTDs have the property that any column has linear functions of either only the complex variables or only their conjugates respectively. We refer to LSTDs with this property as `conjugate LSTDs'. The following theorem provides sufficient conditions under which the DSTBC $\mathscr{C}$ achieves full cooperative diversity equal to $R$ under ML decoding.

\begin{thm}
\label{thm_codinggain}
Assume that $T\geq R$, $\pi_1=1$ and $\pi_2=\frac{1}{R}$. If $\mathbf{B_i}\mathbf{B_i}^H$ is a diagonal matrix $\forall\ i=1,\ldots,R$ and if $\Delta \mathbf{X}=\mathbf{X_i}-\mathbf{X_j}$ has full rank for all pairs of distinct codewords $\mathbf{X_i},\mathbf{X_j}\in\mathscr{C}$, then the DSTBC $\mathscr{C}$ achieves full cooperative diversity equal to $R$ under ML decoding.
\end{thm}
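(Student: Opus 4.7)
The plan is to bound the pairwise error probability (PEP) between two codewords by the standard Chernoff technique, exploit the diagonal hypothesis to turn $\mathbf{\Gamma}$ into something tractable, and then average over the independent source-to-relay and relay-to-destination fades in two stages. Conditioned on all the channels, the noise vector $\mathbf{n}$ is Gaussian with covariance $\mathbf{\Gamma}$, so from \eqref{eqn_ml_dstbc} a standard argument yields
\begin{equation*}
P(\mathbf{X_i}\to \mathbf{X_j}\mid f,g)\;\leq\;\exp\!\left(-\tfrac{\pi_1\pi_2 P^2}{4(\pi_1 P+1)}\,\mathbf{h}^H\Delta \mathbf{X}^H\mathbf{\Gamma}^{-1}\Delta \mathbf{X}\,\mathbf{h}\right),
\end{equation*}
where $\Delta \mathbf{X}=\mathbf{X_i}-\mathbf{X_j}$. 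By the diagonal hypothesis $\mathbf{B_i}\mathbf{B_i}^H$ is diagonal for every $i$, so \eqref{eqn_cov_matrix} shows that $\mathbf{\Gamma}$ is diagonal. Using $\|\mathbf{B_i}\|_F^2=T$, each diagonal entry of $\mathbf{\Gamma}$ lies between $1$ and $1+\tfrac{\pi_2 P\,T}{\pi_1 P+1}\sum_i|g_i|^2$, so $\mathbf{\Gamma}^{-1}$ is positive definite with eigenvalues bounded below by $1/(1+c_1(g)P)$ for a constant $c_1(g)$ depending only on $(|g_i|^2)$.

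Next, condition on $g$ and average over $f$. By construction the entries of $\mathbf{h}$ are $f_ig_i$ or $f_i^*g_i$, and with $\pi_1=1,\pi_2=1/R$ the $f_i$'s being i.i.d.\ $\mathcal{CN}(0,1)$ implies that, conditional on $g$, $\mathbf{h}$ is complex Gaussian with diagonal covariance $G=\mathrm{diag}(|g_1|^2,\ldots,|g_R|^2)$. The Gaussian moment generating identity then gives
\begin{equation*}
\mathbb{E}_f\!\left[P(\mathbf{X_i}\to \mathbf{X_j}\mid f,g)\right]\;\leq\;\frac{1}{\det\!\left(\mathbf{I}_R+\tfrac{\pi_1\pi_2 P^2}{4(\pi_1 P+1)}\,G^{1/2}\Delta \mathbf{X}^H\mathbf{\Gamma}^{-1}\Delta \mathbf{X}\,G^{1/2}\right)}.
\end{equation*}
Since $\Delta \mathbf{X}$ has full column rank $R$ by hypothesis and $\mathbf{\Gamma}^{-1}\succ \mathbf{0}$, the matrix $M:=\Delta \mathbf{X}^H\mathbf{\Gamma}^{-1}\Delta \mathbf{X}$ is positive definite, with smallest eigenvalue bounded below by $\lambda_{\min}(\Delta \mathbf{X}^H\Delta \mathbf{X})/(1+c_1(g)P)$. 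Combining this with the standard inequality $\det(\mathbf{I}+AB)\ge \prod_i(1+\lambda_i(AB))$ yields, at high SNR,
\begin{equation*}
\det\!\left(\mathbf{I}_R+\tfrac{P^2}{4R(P+1)}\,GM\right)\;\gtrsim\;\prod_{i=1}^{R}\!\left(1+\frac{c_2\,|g_i|^2\,P}{1+c_1(g)P}\right),
\end{equation*}
for a codeword-dependent constant $c_2>0$ determined by $\lambda_{\min}(\Delta \mathbf{X}^H\Delta \mathbf{X})$.

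Finally, one averages over the exponentially distributed $|g_i|^2$. The hard part is precisely this last averaging, because the lower bound on the determinant has $|g_i|^2$ in the numerator (through $GM$) but $1+c_1(g)P$ in the denominator (coming from the noise amplification $\mathbf{\Gamma}$), and naively integrating $1/|g_i|^2$ near zero diverges. I would handle this exactly as in the Jing--Hassibi analysis: split the $g$-integration into the region where $\min_i|g_i|^2\ge (\log P)/P$ and its complement. On the first region the bound $1+c_1(g)P\le c_3(g)P$ holds and the determinant is $\gtrsim P^R\prod|g_i|^2$, yielding $P^{-R}$ decay after the exponential tail integration; on the second region the probability of the small-$|g_i|$ event is itself $O((\log P)^R/P^R)$. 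Both contributions decay as $P^{-R}$ up to $\log P$ factors, which is the definition of full cooperative diversity $R$ in this model. Summing the bound over all codeword pairs in $\mathscr{C}$, which is finite, preserves the exponent and completes the proof.
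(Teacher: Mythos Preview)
Your proof is correct and follows essentially the same route as the paper: Chernoff bound on the conditional PEP, replacement of the diagonal $\mathbf{\Gamma}$ by a scalar upper bound (the paper writes this as $\mathbf{D}=(1+\tfrac{\mu\pi_2PR}{\pi_1P+1}\sum_i|g_i|^2)\mathbf{I_T}$ with $\mu$ the largest diagonal entry of the $\mathbf{B_i}\mathbf{B_i}^H$, which is exactly your eigenvalue bound), Gaussian averaging over the $f_i$'s to obtain the determinant form, and then the Jing--Hassibi splitting argument for the $g_i$'s. The only cosmetic difference is that the paper keeps $(\Delta\mathbf{X})^H\Delta\mathbf{X}$ intact and arrives at a coding-gain factor $|(\Delta\mathbf{X})^H\Delta\mathbf{X}|^{1/R}$, whereas you reduce to $\lambda_{\min}(\Delta\mathbf{X}^H\Delta\mathbf{X})$; this affects only constants, not the diversity order.
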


\begin{proof}
Let $\mathbf{X_i}$ be the transmitted codeword, $\mathbf{X_j}$ be some other codeword and let $\Delta \mathbf{X}=\mathbf{X_i}-\mathbf{X_j}$. We have,

$$
\mathrm{P}(\mathbf{y}|\mathbf{X_i})=\frac{e^{-(\mathbf{y}-\sqrt{\frac{\pi_1\pi_2P^2}{\pi_1P+1}}\mathbf{X}\mathbf{h})^H\mathbf{\Gamma}^{-1}(\mathbf{y}-\sqrt{\frac{\pi_1\pi_2P^2}{\pi_1P+1}}\mathbf{X}\mathbf{h})}}{\pi^T|\Gamma|}.
$$

On applying the Chernoff bound, we can upper bound the pairwise error probability (PEP) that a ML decoder decodes wrongly to $\mathbf{X_j}$ as follows:

\begin{equation}
\label{eqn_chernoff}
\mathrm{PEP}\leq\begin{array}{c}\mathrm{E}\\\left\{f_i\right\}\left\{g_i\right\}\end{array}e^{-\frac{\pi_1\pi_2P^2}{4(\pi_1P+1)}\mathbf{h}^H(\Delta \mathbf{X})^H\mathbf{\Gamma}^{-1}(\Delta \mathbf{X})\mathbf{h}}.
\end{equation}

Since $\mathbf{B_i}\mathbf{B_i}^H$ is a diagonal matrix $\forall i=1,\dots,R$, let $\mu$ denote the maximum of the diagonal entries of $\mathbf{B_i}\mathbf{B_i^H}$ over all $i=1,\dots,R$. Let $\mathbf{D}=\left(1+\frac{\mu \pi_2PR}{\pi_1P+1}\sum_{i=1}^{R}|g_i|^2\right)\mathbf{I_{T}}$. Now by replacing $\mathbf{\Gamma}$ by $\mathbf{D}$, we can further upper bound the PEP expression in \eqref{eqn_chernoff} since this is essentially equivalent to assuming more noise variance at the destination than what is actually present and hence results in an upper bound. Thus, we have

\begin{equation*}
\mathrm{PEP}\leq\begin{array}{c}\mathrm{E}\\\left\{f_i\right\}\left\{g_i\right\}\end{array}e^{-\frac{\pi_1\pi_2P^2}{4(\pi_1P+1)}\mathbf{h}^H(\Delta \mathbf{X})^H\mathbf{D}^{-1}(\Delta \mathbf{X})\mathbf{h}}.
\end{equation*}

On integrating over the $f_i$'s as done in Appendix I of \cite{JiH}, we get

\begin{equation*}
\mathrm{PEP}\leq\begin{array}{c}\mathrm{E}\\ \left\{g_i\right\}\end{array}|\mathbf{I_R}+\frac{\pi_1\pi_2P^2}{4(\pi_1P+1)}(\Delta \mathbf{X})^H\mathbf{D}^{-1}(\Delta \mathbf{X})\mathrm{diag}\left\{|g_1|^2,|g_2|^2,\dots,|g_R|^2\right\}|^{-1}.
\end{equation*}

For the power allocation $\pi_1=1$, $\pi_2=\frac{1}{R}$ and for large $P$, we can approximate the above expression as follows:

\begin{equation}
\mathrm{PEP}\lesssim \begin{array}{c}\mathrm{E}\\ \left\{g_i\right\}\end{array}|\mathbf{I_R}+\frac{P}{4(R+\mu\sum_{i=1}^{R}|g_i|^2)}(\Delta \mathbf{X})^H(\Delta \mathbf{X})\mathrm{diag}\left\{|g_1|^2,|g_2|^2,\dots,|g_R|^2\right\}|^{-1}.
\end{equation}

Now proceeding as in Appendix II of \cite{JiH}, it can be shown that the above expectation can be further upper bounded to result in:

$$
\mathrm{PEP}\lesssim\left(\frac{c|(\Delta \mathbf{X})^H(\Delta \mathbf{X})|^{\frac{1}{R}}}{4R}\right)^{-R}P^{-R\left(1-\frac{\log\log P}{\log P}\right)}
$$

\noindent where, $c$ is some constant independent of $P$. This completes the proof.
\end{proof}

Theorem \ref{thm_codinggain} generalizes the results of \cite{JiH} (wherein only unitary relay matrices were permitted) to allow row orthogonal relay matrices ($\mathbf{B_i}\mathbf{B_i}^H$ is a diagonal matrix). Note that the transmission protocol assumed in this paper does not involve communication using the direct link between the source and the destination. An even more general transmission protocol called as `GNAF protocol' which employs the direct link and also allows a general form of linear processing at the relays along with unequal duration of broadcast phase and cooperation phase is discussed in \cite{RaR1}. Note that if the direct link is also employed, then a maximum diversity of $R+1$ can be achieved \cite{RaR1}. However, for the purposes of this paper, the results of Theorem \ref{thm_codinggain} are sufficient. We shall see in the sequel that relaxing $\mathbf{B_i}$ to row orthogonal matrices paves the way to obtain DSTBCs with low ML decoding complexity. Hence, for constructing DSTBCs we need conjugate LSTDs whose relay matrices have orthogonal rows. This is one of the major differences between collocated STBCs and DSTBCs.

\subsection{Conditions for Multi-group ML decoding of DSTBCs}
\label{subsec3_2}

The ML decoding complexity of DSTBCs becomes an important issue especially when $R$ is large. This provides a good motivation to study multi-group decodable DSTBCs. The following theorem provides necessary and sufficient conditions for multi-group ML decoding of DSTBCs.

\begin{thm}
\label{thm_multigroup_dstbc}
A DSTBC $\mathscr{C}$ is $g$-group decodable if and only if the following two conditions are satisfied.
\begin{enumerate}
\item $\mathscr{C}$ is $g$-group encodable
\item The associated basic matrices $\mathbf{A_i}, i=1,\dots,K$ of $\mathscr{C}$ satisfy:
\begin{equation}
\label{eqn_multigroup_cond_dstbc}
\mathbf{A_i^H}\mathbf{\Gamma}^{-1}\mathbf{A_j}+\mathbf{A_j^H}\mathbf{\Gamma}^{-1}\mathbf{A_i}=\mathbf{0}
\end{equation}
\noindent whenever $\mathbf{A_i}$ and $\mathbf{A_j}$ belong to different groups.
\end{enumerate}
\end{thm}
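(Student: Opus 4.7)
The plan is to mimic the proof of the analogous collocated condition \eqref{eqn_g_dec_cond}, but carried through with the whitened ML metric \eqref{eqn_ml_dstbc} so that the noise covariance $\mathbf{\Gamma}$ appears sandwiched between the weight matrices. Let $c=\sqrt{\tfrac{\pi_1\pi_2P^2}{\pi_1P+1}}$ and expand the squared norm in \eqref{eqn_ml_dstbc}. After dropping the term $\mathbf{y}^H\mathbf{\Gamma}^{-1}\mathbf{y}$, which does not depend on $\mathbf{X}$, the decoder reduces to minimising
\begin{equation*}
M(\mathbf{X})=-2c\,\mathrm{Re}\bigl(\mathbf{h}^H\mathbf{X}^H\mathbf{\Gamma}^{-1}\mathbf{y}\bigr)+c^2\mathbf{h}^H\mathbf{X}^H\mathbf{\Gamma}^{-1}\mathbf{X}\mathbf{h}.
\end{equation*}

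For the sufficiency direction I would substitute $\mathbf{X}=\sum_{i=1}^{g}\mathbf{X_i}$ with $\mathbf{X_i}\in\langle\mathscr{C}_i\rangle$. The linear term in $M$ is additive in $\mathbf{X}$ and so splits into a sum over $i$ without any further assumption. The quadratic term yields $\sum_{i}\mathbf{h}^H\mathbf{X_i}^H\mathbf{\Gamma}^{-1}\mathbf{X_i}\mathbf{h}$ plus the cross contribution $\sum_{i<j}\mathbf{h}^H\bigl(\mathbf{X_i}^H\mathbf{\Gamma}^{-1}\mathbf{X_j}+\mathbf{X_j}^H\mathbf{\Gamma}^{-1}\mathbf{X_i}\bigr)\mathbf{h}$. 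Writing $\mathbf{X_i}=\sum_{k\in I_i}x_k\mathbf{A_k}$ and $\mathbf{X_j}=\sum_{l\in I_j}x_l\mathbf{A_l}$ and invoking bilinearity, this cross term equals $\sum_{k\in I_i,\,l\in I_j}x_kx_l\bigl(\mathbf{A_k}^H\mathbf{\Gamma}^{-1}\mathbf{A_l}+\mathbf{A_l}^H\mathbf{\Gamma}^{-1}\mathbf{A_k}\bigr)$, which vanishes by hypothesis \eqref{eqn_multigroup_cond_dstbc}. Hence $M(\mathbf{X})=\sum_{i=1}^{g}M_i(\mathbf{X_i})$ where $M_i$ depends only on $\mathbf{X_i}$. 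Because $g$-group encodability gives $\mathscr{A}=\mathscr{A}_1\times\cdots\times\mathscr{A}_g$, the variables $\mathbf{X_i}\in\mathscr{C}_i$ are chosen independently, so the argmin factors exactly as in Fig.~\ref{fig_g_dec}.

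For necessity, assume $\mathscr{C}$ is $g$-group decodable. By Definition \ref{defn_g_dec} it is $g$-group encodable, establishing condition (1). For condition (2) I would argue that the decomposition of the argmin for every channel realisation and every received vector forces the cross terms in $M$ to vanish as a function of $\mathbf{X}$; since these terms are purely quadratic in $\mathbf{X}$ they cannot merely be constants but must be identically zero. Fixing any $\mathbf{X_i}\in\langle\mathscr{C}_i\rangle$ and $\mathbf{X_j}\in\langle\mathscr{C}_j\rangle$ with $i\ne j$, this yields $\mathbf{h}^H\mathbf{M}_{ij}\mathbf{h}=0$ for every admissible $\mathbf{h}$, where $\mathbf{M}_{ij}=\mathbf{X_i}^H\mathbf{\Gamma}^{-1}\mathbf{X_j}+\mathbf{X_j}^H\mathbf{\Gamma}^{-1}\mathbf{X_i}$ is Hermitian. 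Since $\mathbf{h}$ is a continuous random vector in $\mathbb{C}^R$, the quadratic form vanishes on an open set and hence $\mathbf{M}_{ij}=\mathbf{0}$. Picking $\mathbf{X_i}=\mathbf{A_k}$ and $\mathbf{X_j}=\mathbf{A_l}$ for basis matrices $\mathbf{A_k}\in\langle\mathscr{C}_i\rangle$, $\mathbf{A_l}\in\langle\mathscr{C}_j\rangle$ then delivers \eqref{eqn_multigroup_cond_dstbc}.

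The main technical obstacle is the necessity direction. Unlike the collocated case, $\mathbf{\Gamma}$ is random and depends on the relay-to-destination gains $\{g_j\}$, so one has to be careful to argue that the cancellation of cross terms must hold \emph{identically} rather than just in expectation or for a single realisation; I would handle this by observing that, conditioned on the fading, the decoder is required to return the same partitioned ML estimate, and the failure of \eqref{eqn_multigroup_cond_dstbc} would produce a non-zero quadratic form in $\mathbf{h}$ that couples two groups on a set of positive measure. A secondary nuisance is making sure that the basis-matrix statement (which concerns individual $\mathbf{A_k},\mathbf{A_l}$) follows from the subspace statement on $\langle\mathscr{C}_i\rangle,\langle\mathscr{C}_j\rangle$; this is immediate by bilinearity combined with the linear independence of the basis matrices guaranteed by Definition \ref{defn_lstd}.
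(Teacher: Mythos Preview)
Your proposal is correct and follows essentially the same line as the paper's proof. The paper presents it more concisely by first whitening---setting $\tilde{\mathbf{y}}=\mathbf{\Gamma}^{-1/2}\mathbf{y}$ and regarding $\mathbf{\Gamma}^{-1/2}\mathbf{A_i}$ as the effective weight matrices---so that the collocated condition \eqref{eqn_g_dec_cond} can be invoked directly on the whitened design, giving $(\mathbf{\Gamma}^{-1/2}\mathbf{A_i})^H(\mathbf{\Gamma}^{-1/2}\mathbf{A_j})+(\mathbf{\Gamma}^{-1/2}\mathbf{A_j})^H(\mathbf{\Gamma}^{-1/2}\mathbf{A_i})=\mathbf{0}$, which simplifies to \eqref{eqn_multigroup_cond_dstbc}; your explicit expansion of the metric reaches the same identity by the same algebra, and in fact treats the necessity direction with more care than the paper does.
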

\begin{proof}
Let $\mathbf{\tilde{y}}=\mathbf{\Gamma}^{-\frac{1}{2}}\mathbf{y}$. Then from \eqref{eqn_ml_dstbc}, the ML decoding metric is given by\\ \mbox{$\parallel \mathbf{\tilde{y}}-\sqrt{\frac{\pi_1\pi_2P^2}{\pi_1P+1}}(\mathbf{\Gamma}^{-\frac{1}{2}}\mathbf{X})\mathbf{h}\parallel_F^2$}. Compared to the collocated case, the difference here is the term involving $\mathbf{\Gamma}^{-\frac{1}{2}}$. The effect of pre-multiplication by $\mathbf{\Gamma}^{-\frac{1}{2}}$ can be captured by considering $\mathbf{\Gamma}^{-\frac{1}{2}}\mathbf{X}$ as a LSTD whose basis matrices are given by $\mathbf{\Gamma}^{-\frac{1}{2}}\mathbf{A_i}, i=1,\dots,K$. Now applying the conditions for $g$-group ML decoding of collocated STBCs, we get the condition for $g$-group ML decoding of DSTBCs to be that: (1) $\mathscr{C}$ should be $g$-group encodable and (2) whenever $\mathbf{A_i}$ and $\mathbf{A_j}$ belong to different groups, they should satisfy
$$
(\mathbf{\Gamma}^{-\frac{1}{2}}\mathbf{A_i})^H(\mathbf{\Gamma}^{-\frac{1}{2}}\mathbf{A_j})+(\mathbf{\Gamma}^{-\frac{1}{2}}\mathbf{A_j})^H(\mathbf{\Gamma}^{-\frac{1}{2}}\mathbf{A_i})=\mathbf{0}
$$

\noindent which, on simplification gives \eqref{eqn_multigroup_cond_dstbc}.
\end{proof}

Note from \eqref{eqn_cov_matrix} that if all the relay matrices are restricted to be unitary as in \cite{JiH}, then $\Gamma$ becomes a scaled identity matrix which in turn makes the condition in \eqref{eqn_multigroup_cond_dstbc} coincide with that for collocated STBCs. To summarize, a $g$-group ML decodable collocated STBC qualifies to become a $g$-group ML decodable DSTBC if it satisfies the below three conditions.
\begin{enumerate}
\item The associated LSTD $\mathbf{X}=\sum_{i=1}^{K}x_i\mathbf{A_i}$ is a conjugate LSTD.
\item The associated relay matrices are row orthogonal, i.e., $\mathbf{B_i}\mathbf{B_i}^H$ is a diagonal matrix.
\item Equation (21) is satisfied by the associated basis matrices $\mathbf{A_i},~i=1,\dots,K$.
\end{enumerate}
\begin{eg}
\label{eg_4x4OD}
Consider the $4\times 4$ single real symbol ML decodable (6-group ML decodable ) STBC from $4\times 4$ orthogonal design given by
$\left[\begin{array}{rrrr}
z_1 & -z_2^* & -z_3^* & 0\\
z_2 & z_1^* & 0 & -z_3^*\\
z_3 & 0 & z_1^* & -z_2\\
0 & z_3 & z_2^* & z_1 \end{array}\right]$. Note that it is not a conjugate LSTD and hence does not qualify as a DSTBC.
\end{eg}
Example \ref{eg_4x4OD} demonstrates that though orthogonal designs and hence single real symbol ML decodable collocated STBCs are well known in the literature, the transition to distributed case is not straightforward. Thus it is clear that it is more difficult and challenging to construct multi-group ML decodable DSTBCs compared to multi-group ML decodable collocated STBCs.

\subsection{Four group decodable DSTBCs from Precoded CIODs}
\label{subsec3_3}

Towards constructing four-group decodable DSTBCs, consider the following example.

\begin{eg}
Consider the $4\times 4$ CIOD \cite{KhR} shown below
\label{eg_CIOD}
$$\mathbf{X}_{CIOD}=\sqrt{2}\left[\begin{array}{cccc}
z_1 & -z_2^* & 0 & 0\\
z_2 & z_1^* & 0 & 0\\
0 & 0 & z_3 & -z_4^*\\
0 & 0 & z_4 & z_3^*
\end{array}\right]
$$
where, $z_1=x_1+ix_2$, $z_2=x_3+ix_4$, $z_3=x_5+ix_6$ and $z_4=x_7+ix_8$ are complex variables. It is clear that $\mathbf{X}_{CIOD}$ is a conjugate LSTD. We shall now see how $\mathbf{X}_{CIOD}$ is actually a $4$-group decodable DSTBC. Let the number of relays $R=4$. In the broadcast phase let the source  transmit the vector $\sqrt{\pi_1P}\left[\begin{array}{cccc}z_1 & z_2 & z_3 & z_4\end{array}\right]^T$, where the information symbols $\left\{x_1,x_5\right\}$, $\left\{x_{2},x_{6}\right\}$, $\left\{x_{3},x_{7}\right\}$, $\left\{x_{4},x_{8}\right\}$ are each taken from a rotated QAM constellation as given below
\begin{equation*}
\left[\begin{array}{c}x_{i}\\x_{i+4}\end{array}\right]=\left[\begin{array}{cc}\cos\theta & -\sin\theta\\ \sin\theta & \cos\theta\end{array}\right]\left[\begin{array}{c}y_{iI}\\y_{iQ}\end{array}\right],i=1,\dots,4
\end{equation*}
where, $\left[\begin{array}{c}y_{iI}\\y_{iQ}\end{array}\right],i=1,\dots 4,$ take values from a QAM constellation and $\theta$ is an appropriately chosen rotation angle \cite{KhR} so that the resulting DSTBC satisfies the rank criterion for full diversity according to Theorem \ref{thm_codinggain}. For $\mathbf{X}_{CIOD}$, the value of $M=2$ and the corresponding $4$ relay matrices are:
\begin{equation*}
\begin{array}{l}
\mathbf{B_1}=\left[\begin{array}{cccc}
\sqrt{2} & 0 & 0 & 0\\
0 & \sqrt{2} & 0 & 0\\
0 & 0 & 0 & 0\\
0 & 0 & 0 & 0
\end{array}\right], \mathbf{B_2}=\left[\begin{array}{cccc}
0 & -\sqrt{2} & 0 & 0\\
\sqrt{2} & 0 & 0 & 0\\
0 & 0 & 0 & 0\\
0 & 0 & 0 & 0
\end{array}\right]\\
\mathbf{B_3}=\left[\begin{array}{cccc}
0 & 0 & 0 & 0\\
0 & 0 & 0 & 0\\
0 & 0 & \sqrt{2} & 0\\
0 & 0 & 0 & \sqrt{2}
\end{array}\right], \mathbf{B_4}=\left[\begin{array}{cccc}
0 & 0 & 0 & 0\\
0 & 0 & 0 & 0\\
0 & 0 & 0 & -\sqrt{2}\\
0 & 0 & \sqrt{2} & 0
\end{array}\right].
\end{array}
\end{equation*}
The corresponding matrix $\mathbf{\Gamma}$ is given by
\begin{equation*}
\mathbf{\Gamma}=\mathbf{I_4}+\frac{\pi_2P}{\pi_1P+1}\left[\begin{array}{cc}2\left(\vert g_1\vert^2+\vert g_2\vert^2\right)I_{2} & 0\\
0 & 2\left(\vert g_3\vert^2+\vert g_4\vert^2\right)I_{2}
\end{array}\right].
\end{equation*}
The weight matrices for $\mathbf{X}_{CIOD}$ are given as follows:

\begin{equation*}
\begin{array}{l}
\mathbf{A_1}=\sqrt{2}\left[\begin{array}{cccc}
1 & 0 & 0 & 0\\
0 & 1 & 0 & 0\\
0 & 0 & 0 & 0\\
0 & 0 & 0 & 0
\end{array}\right], \mathbf{A_2}=\sqrt{2}\left[\begin{array}{cccc}
i & 0 & 0 & 0\\
0 & -i & 0 & 0\\
0 & 0 & 0 & 0\\
0 & 0 & 0 & 0
\end{array}\right], \mathbf{A_3}=\sqrt{2}\left[\begin{array}{cccc}
0 & -1 & 0 & 0\\
1 & 0 & 0 & 0\\
0 & 0 & 0 & 0\\
0 & 0 & 0 & 0
\end{array}\right],\\
\mathbf{A_4}=\sqrt{2}\left[\begin{array}{cccc}
0 & i & 0 & 0\\
i & 0 & 0 & 0\\
0 & 0 & 0 & 0\\
0 & 0 & 0 & 0
\end{array}\right], \mathbf{A_5}=\sqrt{2}\left[\begin{array}{cccc}
0 & 0 & 0 & 0\\
0 & 0 & 0 & 0\\
0 & 0 & 1 & 0\\
0 & 0 & 0 & 1
\end{array}\right], \mathbf{A_6}=\sqrt{2}\left[\begin{array}{cccc}
0 & 0 & 0 & 0\\
0 & 0 & 0 & 0\\
0 & 0 & i & 0\\
0 & 0 & 0 & -i
\end{array}\right],\\
\mathbf{A_7}=\sqrt{2}\left[\begin{array}{cccc}
0 & 0 & 0 & 0\\
0 & 0 & 0 & 0\\
0 & 0 & 0 & -1\\
0 & 0 & 1 & 0
\end{array}\right], \mathbf{A_8}=\sqrt{2}\left[\begin{array}{cccc}
0 & 0 & 0 & 0\\
0 & 0 & 0 & 0\\
0 & 0 & 0 & i\\
0 & 0 & i & 0
\end{array}\right].
\end{array}
\end{equation*}

It is easy to check that all the weight matrices are row orthogonal and they satisfy \eqref{eqn_multigroup_cond_dstbc} for $\lambda=1$. This is because of the special block diagonal structure of $\mathbf{X}_{CIOD}$ with each block being a replica of the Alamouti LSTD. The resulting DSTBC will achieve full cooperative diversity and is $4$-group ML decodable or equivalently one complex symbol ML decodable.
\end{eg}

We now generalize the LSTD $\mathbf{X}_{CIOD}$ given in Example \ref{eg_CIOD} for any number of relays having the special feature of $4$-group decodability. We call these LSTDs as `Precoded CIODs' (PCIODs).

\noindent
\textbf{Construction of Precoded CIOD for even number of relays:}\\
Given $R$ an even number, the $R\times R$ PCIOD $\mathbf{X}_{PCIOD}$ for $R$ relays is given by \eqref{eqn_PCIOD_cons}.

{\tiny
\begin{equation}
\label{eqn_PCIOD_cons}
\mathbf{X}_{CIOD}= \mathrm{diag}\left\{\left[\begin{array}{cc}
x_1+ix_2 & -x_3+ix_4\\
x_3+ix_4 & x_1-ix_2\end{array}\right],\dots,\left[\begin{array}{cc}
x_{k}+jx_{k+1} & -x_{k+2}+jx_{k+3}\\
x_{k+2}+jx_{k+3} & x_{k}-jx_{k+1}\end{array}\right],\dots,\left[\begin{array}{cc}
x_{2R-3}+ix_{2R-2} & -x_{2R-1}+ix_{2R}\\
x_{2R-1}+ix_{2R} & x_{2R-3}-ix_{2R-2}
\end{array}\right]\right\}
\end{equation}
}
There are totally $2R$ real variables $x_1,x_2,\dots,x_{2R}$ in the conjugate LSTD $\mathbf{X}_{PCIOD}$. The following expression shows that $\mathbf{X}_{PCIOD}$ is not fully diverse for arbitrary signal sets.
\begin{equation*}
\vert(\Delta \mathbf{X}_{PCIOD})^H(\Delta \mathbf{X}_{PCIOD})\vert=\left(\sum_{i=1}^{4}\Delta x_i^2\right)^2\dots\left(\sum_{i=k}^{k+3}\Delta x_i^2\right)^2\dots\left(\sum_{i=2R-3}^{2R}\Delta x_i^2\right)^2
\end{equation*}
However, constellation precoding can be done to achieve full diversity. Precoding is to be done in the following manner. The $2R$ real variables are first partitioned into 4 groups as follows:
\begin{itemize}
\item First group: $\left\{x_{1+4k}|k=0,1,\dots,\frac{2R-4}{4}\right\}$
\item Second group: $\left\{x_{2+4k}|k=0,1,\dots,\frac{2R-3}{4}\right\}$
\item Third group: $\left\{x_{3+4k}|k=0,1,\dots,\frac{2R-2}{4}\right\}$
\item Fourth group: $\left\{x_{4+4k}|k=0,1,\dots,\frac{2R-1}{4}\right\}$.
\end{itemize}
There are $\frac{R}{2}$ real variables in each group. Now let $\mathbf{X_i}, i=1,\dots,4$ denote the LSTDs corresponding to only the real variables in the $i$-th group respectively. Now $\mathbf{X}_{PCIOD}=\sum_{i=1}^{4}\mathbf{X_i}$. Also it can be checked that

$$
(\Delta \mathbf{X}_{PCIOD})^H(\Delta \mathbf{X}_{PCIOD})=\sum_{i=1}^{4}\Delta \mathbf{X}_{i}^H\Delta \mathbf{X}_{i}.
$$

Supposing the constituent STBCs corresponding to LSTDs $\mathbf{X}_i,~i=1,\dots,4$ are fully diverse, then $\vert(\Delta \mathbf{X}_{i})^H(\Delta \mathbf{X}_{i})\vert\geq 0~\forall~ i=1,\dots,4$ and on application of Corollary 4.3.3 in \cite{HoJ}, we get

$$
\vert(\Delta \mathbf{X}_{PCIOD})^H(\Delta \mathbf{X}_{PCIOD})\vert\geq\min_{i=1,\dots,4}\left\{\vert(\Delta \mathbf{X}_{i}^H)(\Delta \mathbf{X}_{i})\vert\right\}.
$$

Thus we see that if the constituent STBCs are fully diverse, then the resulting STBC from PCIOD will also be fully diverse. Note that $\vert(\Delta \mathbf{X}_{i})^H(\Delta \mathbf{X}_{i})\vert=(\prod_{j=0}^{\frac{R}{2}-1}\Delta x_{i+4j})^2$ which is nothing but the product distance. Hence, if we let the $\frac{R}{2}$ real variables in a group to take values from a rotated $\mathbb{Z}^{\frac{R}{2}}$ lattice constellation which is designed to maximize the minimum product distance then full diversity is guaranteed. Algebraic number theory provides effective means to construct rotated $\mathbb{Z}^n$ lattices with large minimum product distance \cite{FOV,Vit} for any $n\in\mathbb{N}$ and the corresponding lattice generator matrices can be explicitly obtained from \cite{Vit} for dimensions upto $30$. Due to the block diagonal nature of $\mathbf{X}_{PCIOD}$ with replicas of Alamouti designs on the blocks, the resulting DSTBC will be a full diversity $4$-group decodable DSTBC. The following example illustrates the construction procedure for $R=6$.
\begin{eg}
The PCIOD for $6$ relays is as shown below:

\begin{equation*}
\mathrm{diag}\left\{\left[\begin{array}{cc}
x_1+ix_2 & -x_3+ix_4\\
x_3+ix_4 & x_1-ix_2\end{array}\right],\left[\begin{array}{cc}
x_5+ix_6 & -x_7+ix_8\\
x_7+ix_8 & x_5-ix_6\end{array}\right],\left[\begin{array}{cc}
x_9+ix_{10} & -x_{11}+ix_{12}\\
x_{10}+ix_{11} & x_9-ix_{10}\end{array}\right]\right\}
\end{equation*}

\noindent where,

\begin{equation*}
\left[\begin{array}{c}x_{i}\\x_{i+4}\\x_{i+8}\end{array}\right]=\mathbf{\mathcal{G}}\left[\begin{array}{c}y_{i}\\y_{i+4}\\y_{i+8}\end{array}\right], i=1,\dots,4\\
\end{equation*}
\noindent and the vectors $\left[\begin{array}{c}y_i\\y_{i+4}\\y_{i+8}\end{array}\right],i=1,2,3,4$ take values from a subset of $\mathbb{Z}^3$. The $3\times 3$ lattice generator matrix $\mathbf{\mathcal{G}}$ can be taken from \cite{Vit}. At the destination, the ML decoding of the real variables $\left\{x_i,x_{i+4},x_{i+8}\right\}$ has to be done jointly for each $i=1,2,3,4$ separately. Thus the resulting DSTBC is $4$-group decodable or $3$-real symbol decodable.
\end{eg}

\noindent
\textbf{Construction of Precoded CIOD for odd number of relays:}\\
If $R$ is odd, then construct a PCIOD for $R+1$ relays and drop the last column to get a $(R+1)\times R$ LSTD. For example, a single complex symbol decodable code for $3$ relays can be obtained from Example \ref{eg_CIOD} by dropping the last column. This is shown in the following example.
\begin{eg}
\begin{equation*}
\left[\begin{array}{rcl}
x_{1}+ix_{2} & x_{3}+ix_{4} & 0\\
-x_{3}+ix_{4} & x_{1}-ix_{2} & 0\\
0 & 0 & x_{5}+ix_{6}\\
0 & 0 & -x_{7}+ix_{8}
\end{array}\right]
\end{equation*}
\end{eg}

\subsubsection{Encoding complexity at the relays for PCIODs}
\label{subsubsec3_3_1}
~\\
By observing the structure of the relay matrices of PCIOD one can see that it has zeros everywhere except for a single non-zero $2\times 2$ sub-matrix which is a scaled version of either identity matrix or $\left[\begin{array}{cc}0 & -1\\1 & 0\end{array}\right]$. Thus having received two complex numbers, say $\left[\begin{array}{c}a+ib\\c+id\end{array}\right]$, a relay should be capable of generating and transmitting one of the following: $\left[\begin{array}{c}a+ib\\c+id\end{array}\right]$ or $\left[\begin{array}{c}-a+ib\\c-id\end{array}\right]$ both of which require significantly less complexity as compared to multiplying the received vector by an arbitrary complex matrix.

\subsubsection{Resistance to relay node failures}
\label{subsubsec3_3_2}
~\\
Note that any two columns of the PCIOD are orthogonal. This leads to the property that even if any column of the design is dropped, it continues to satisfy the full rank condition. This property is important since even if certain relay nodes fail, which is equivalent to dropping few columns of the LSTD, the residual diversity benefits are still guaranteed and that too with no additional increase in ML decoding complexity.

Thus we have a constructed a class of $4$-group decodable DSTBCs for any number of relays having the following salient features:
\begin{enumerate}
\item Transmission rate of the source is $0.5$ complex symbols per channel use
\item Full diversity
\item Four group ML decodable
\item Low encoding complexity at the relays
\item Resistance to relay node failures
\end{enumerate}

\subsection{Four group decodable DSTBCs from extended Clifford algebras}
\label{subsec3_4}

In the previous subsection, a class of $4$-group decodable DSTBCs was constructed for arbitrary number of relays from PCIODs. Amidst many advantages, PCIODs do have a drawback that the power distribution among the relays is not uniform across time slots which is due to the large number of zeros in the LSTD. This leads to a large peak to average power ratio (PAPR) problem at the relays which is undesirable since it demands the use of larger power amplifiers at the relays. Moreover since the relay matrices of PCIODs are not unitary, this forces the destination to perform additional processing to make the noise covariance matrix a scaled identity matrix, i.e., pre-multiplying the received vector by $\mathbf{\Gamma}^{-\frac{1}{2}}$. Above all, the construction of PCIODs was not obtained from a systematic algebraic procedure targeting the requirements for $4$-group decodable DSTBCs. Hence it is natural to ask whether there exists a systematic algebraic construction of $4$-group decodable DSTBCs with unitary relay matrices and uniform power distribution across the relays and in time.

In this subsection, using the algebraic framework of extended Clifford algebras introduced in Subsection \ref{subsec2_2}, two new classes of $4$-group decodable DSTBCs with unitary relay matrices as well as unitary weight matrices for power of two number of relays are constructed.

As discussed in Subsection \ref{subsec2_2}, to construct $4$-group decodable DSTBCs, we need $4$ matrices (including identity matrix) in the first row (as shown in Table \ref{table_CUWD}). One way to obtain such matrices is to take the matrix representation of $\mathbb{A}_3^L$ for $L=2^a,a\in\mathbb{N}$. The matrix representation of the symbols $1,\gamma_1,\gamma_2,\gamma_3$ respectively can be used to fill up the first row. Interestingly there is yet another way of obtaining such matrices. Let us look at $\mathbb{A}_2^L$ for $L=2^a,a\in \mathbb{N}$. The symbols $\gamma_1$ and $\gamma_2$ square to $-1$ and anti-commute. However, note that

\begin{equation}
\begin{array}{rcl}
(\gamma_2\gamma_1)^2&=&-1\\
(\gamma_2\gamma_1)\gamma_1&=&-\gamma_1(\gamma_2\gamma_1)\\
(\gamma_2\gamma_1)\gamma_2&=&-\gamma_2(\gamma_2\gamma_1)
\end{array}.
\end{equation}

Thus the symbol $\gamma_2\gamma_1$ also squares to $-1$ and anti-commutes with the symbols $\gamma_1$ and $\gamma_2$. Thus we can fill up the first row with the matrix representations of the symbols $1,\gamma_1,\gamma_2,\gamma_2\gamma_1$ respectively. Thus we can get two classes of $4$-group decodable DSTBCs, one from $\mathbb{A}_3^L$ and the other from $\mathbb{A}_2^L$ if the problem of conjugate linearity property and unitary relay matrices are also taken care.

\subsubsection{Matrix Representation}
\label{subsubsec3_4_1}
~\\
There are several ways to obtain a matrix representation of an algebra. We need to take an appropriate matrix representation such that the following conditions are met.

\begin{enumerate}
\item The symbols $1$, $\gamma_1$, $\gamma_2$, $\dots$, $\gamma_n$, $\delta_k,$ for $k=1,\dots,a$, and $\bigcup_{m=2}^{a}\prod_{i=1}^{m}\delta_{k_i}$ for $1\leq k_i\leq k_{i+1}\leq a$ should be represented by unitary matrices.
\item The resulting LSTD should be a conjugate LSTD.
\item All the relay matrices should be unitary.
\end{enumerate}

Such matrices are naturally provided by the left regular representation of the associative algebra $\mathbb{A}_n^L$. Left regular representation is an easy way to obtain the matrix representation for any finite dimensional associative algebra \cite{Jac}. Such techniques have been previously used in \cite{SRS1,SRS2} to obtain the matrix representation of division algebras and crossed product algebras. The first requirement of unitary matrix representation is met because the natural basis elements of $\mathbb{A}_n^L$ over $\mathbb{R}$ together with their negatives form a finite group under multiplication (see Proposition \ref{prop_group}). This fact in conjunction with the properties of left regular representation guarantee a unitary matrix representation for the required symbols. We shall prove the other properties after illustrating the construction procedure for both the codes from $\mathbb{A}_2^L$ and $\mathbb{A}_3^L$.

\subsubsection{DSTBCs from $\mathbb{A}_2^L$}
\label{subsubsec3_4_2}
~\\
We first view $\mathbb{A}_2^L$ as a vector space over $\mathbb{C}$ by thinking of $\gamma_1$ as the complex number $i=\sqrt{-1}$. A natural $\mathbb{C}$~ basis for $\mathbb{A}_2^L$ is given by

\begin{equation}
\mathcal{B}_n^L=\left\{1,\gamma_2\right\}\cup\left\{\left\{1,\gamma_2\right\}\delta_i|i=1,\dots,a\right\}  \bigcup_{m=2}^{a}\left\{1,\gamma_2\right\}\left\{\prod_{i=1}^{m}\delta_{k_i}|1\leq k_i\leq k_{i+1}\leq a\right\}.
\end{equation}
\noindent
Thus the dimension of $\mathbb{A}_2^L$ seen as a vector space over $\mathbb{C}$ is $2^{n+a-1}$. We have a natural map from $\mathbb{A}_2^L$~ to $\mathrm{End}_{\mathbb{C}}(\mathbb{A}_2^L)$ given by left multiplication \cite{SRS1,SRS2,Jac} as shown below.

\begin{equation}
\begin{array}{l}
\phi:\mathbb{A}_2^L\mapsto \mathrm{End}_{\mathbb{C}}(\mathbb{A}_2^L)\\
\phi(x)=L_x:y\mapsto xy
\end{array}
\end{equation}

Since the map $L_x$ is $\mathbb{C}$ linear, we can write down a matrix representation of $L_x$ with respect to the natural $\mathbb{C}$~ basis $\mathcal{B}_n^L$. Thus we obtain a LSTD satisfying the requirements of \eqref{eqn_g_dec_cond} for $g=4$.

\begin{eg}
Let us begin with the simplest case of $R=2^1$ relays. Let $n=2$. Then equating $n+a-1=1$, we get $a=0$ and hence $L=1$. But the algebra $\mathbb{A}_2^1$ is same as $Cliff(2)$ which is nothing but the Hamiltonian Quaternions $\mathbb{H}$. It is well known \cite{SRS1} that the left regular matrix representation of $\mathbb{H}$ yields the popular Alamouti design. Thus we see that our algebraic code construction which was driven by the need for low ML decoding complexity naturally leads to the Alamouti design.
\end{eg}

\begin{eg}
Suppose we want a LSTD for $R=8=2^3$ relays. Let $n=2$. Then we need $n+a-1=3$. Thus $a=2$ and $L=4$. A general element of the algebra $\mathbb{A}_2^4$ looks like

$$
x=z_1+\delta_1z_2+\delta_2z_3+\delta_1\delta_2z_4+\gamma_2z_5+\delta_1\gamma_2z_6+\delta_2\gamma_2z_7+\delta_1\delta_2\gamma_2z_8
$$
\noindent
where, $z_i\in\mathbb{C},\forall i=1,\dots,8$. We shall now find the matrix representation of $L_x$ by finding out the image of the basis $\mathcal{B}_2^4$ under the map $L_x$ which is shown below.

\begin{equation}
\label{eqn_matrix}
\begin{array}{rcl}
L_x(1)&=&z_1+\delta_1z_2+\delta_2z_3+\delta_1\delta_2z_4+\gamma_2z_5+\delta_1\gamma_2z_6+\delta_2\gamma_2z_7+\delta_1\delta_2\gamma_2z_8\\
L_x(\delta_1)&=&\delta_1z_1+z_2+\delta_1\delta_2z_3+\delta_2z_4+\delta_1\gamma_2z_5+\gamma_2z_6+\delta_1\delta_2\gamma_2z_7+\delta_2\gamma_2z_8\\
L_x(\delta_2)&=&\delta_2z_1+\delta_1\delta_2z_2+z_3+\delta_1z_4+\delta_2\gamma_2z_5+\delta_1\delta_2\gamma_2z_6+\gamma_2z_7+\delta_1\gamma_2z_8\\
L_x(\delta_1\delta_2)&=&\delta_2\delta_2z_1+\delta_2z_2+\delta_1z_3+z_4+\delta_1\delta_2\gamma_2z_5+\delta_2\gamma_2z_6+\delta_1\gamma_2z_7+\gamma_2z_8\\
L_x(\gamma_2)&=&\left(z_1+\delta_1z_2+\delta_2z_3+\delta_1\delta_2z_4+\gamma_2z_5+\delta_1\gamma_2z_6+\delta_2\gamma_2z_7+\delta_1\delta_2\gamma_2z_8\right)\gamma_2\\
&=&\gamma_2z_1^*+\delta_1\gamma_2z_2^*+\delta_2\gamma_2z_3^*+\delta_1\delta_2\gamma_2z_4^*-z_5^*-\delta_1z_6^*-\delta_2z_7^*-\delta_1\delta_2z_8^*\\
L_x(\delta_1\gamma_2)&=&\delta_1\gamma_2z_1^*+\gamma_2z_2^*+\delta_1\delta_2\gamma_2z_3^*+\delta_2\gamma_2z_4^*-\delta_1z_5^*-z_6^*-\delta_1\delta_2z_7^*-\delta_2z_8^*\\
L_x(\delta_2\gamma_2)&=&\delta_2\gamma_2z_1^*+\delta_1\delta_2\gamma_2z_2^*+\gamma_2z_3^*+\delta_1\gamma_2z_4^*-\delta_2z_5^*-\delta_2z_6^*-z_7^*-\delta_1z_8^*\\
L_x(\delta_1\delta_2\gamma_2)&=&\delta_1\delta_2\gamma_2z_1^*+\delta_2\gamma_2z_2^*+\delta_1\gamma_2z_3^*+\gamma_2z_4^*-\delta_1\delta_2z_5^*-\delta_2z_6^*-\delta_1z_7^*-z_8^*
\end{array}
\end{equation}

The matrix representation of $L_x$ is thus given by:
\begin{equation}
\label{eqn_design8}
[L_x]=\left[\begin{array}{ccccrrrr}
z_1 & z_2 & z_3 & z_4 & -z_5^* & -z_6^* & -z_7^* & -z_8^*\\
z_2 & z_1 & z_4 & z_3 & -z_6^* & -z_5^* & -z_8^* & -z_7^*\\
z_3 & z_4 & z_1 & z_2 & -z_7^* & -z_8^* & -z_5^* & -z_6^*\\
z_4 & z_3 & z_2 & z_1 & -z_8^* & -z_7^* & -z_6^* & -z_5^*\\
z_5 & z_6 & z_7 & z_8 & z_1^* & z_2^* & z_3^* & z_4^*\\
z_6 & z_5 & z_8 & z_7 & z_2^* & z_1^* & z_4^* & z_3^*\\
z_7 & z_8 & z_5 & z_6 & z_3^* & z_4^* & z_1^* & z_2^*\\
z_8 & z_7 & z_6 & z_5 & z_4^* & z_3^* & z_2^* & z_1^*
\end{array}\right].
\end{equation}

Also, we have that
\begin{equation}
\begin{array}{rcl}
x&=&z_{1I}+\gamma_1z_{1Q}+\delta_1z_{2I}+\delta_1\gamma_1z_{2Q}+\delta_2z_{3I}+\delta_2\gamma_1z_{3Q}+\delta_1\delta_2z_{4I}+\delta_1\delta_2\gamma_1z_{4Q}+\gamma_2z_{5I}\\
&&+\gamma_2\gamma_1z_{5Q}+\delta_1\gamma_2z_{6I}+\delta_1\gamma_2\gamma_1z_{6Q}+\delta_2\gamma_2z_{7I}+\delta_2\gamma_2\gamma_1z_{7Q}+\delta_1\delta_2\gamma_2z_{8I}+\delta_1\delta_2\gamma_2\gamma_1z_{8Q}.
\end{array}
\end{equation}

Since the map $\phi$ is a ring homomorphism, we have
\begin{equation}
\label{eqn_wtmat}
\begin{array}{rl}
L_x=&\phi(1)z_{1I}\phi(1)+\phi(\gamma_1)z_{1Q}+\phi(\delta_1)z_{2I}+\phi(\delta_1\gamma_1)z_{2Q}+\phi(\delta_2)z_{3I}+\phi(\delta_2\gamma_1)z_{3Q}\\
&+\phi(\delta_1\delta_2)z_{4I}+\phi(\delta_1\delta_2\gamma_1)z_{4Q}+\phi(\gamma_2)z_{5I}+\phi(\gamma_2\gamma_1)z_{5Q}+\phi(\delta_1\gamma_2)z_{6I}+\phi(\delta_1\gamma_2\gamma_1)z_{6Q}\\
&+\phi(\delta_2\gamma_2)z_{7I}+\phi(\delta_2\gamma_2\gamma_1)z_{7Q}\\
&+\phi(\delta_1\delta_2\gamma_2)z_{8I}+\phi(\delta_1\delta_2\gamma_2\gamma_1)z_{8Q}.
\end{array}
\end{equation}

The equation \eqref{eqn_wtmat} explicitly gives the design $[L_x]$ in terms of its weight matrices. Because of our algebraic construction, the weight  matrices can be partitioned into four groups such that \eqref{eqn_g_dec_cond} is satisfied. The four groups are
\begin{enumerate}
\item $\left\{\phi(1),\phi(\delta_1),\phi(\delta_2),\phi(\delta_1\delta_2)\right\}$
\item $\left\{\phi(\gamma_1),\phi(\delta_1\gamma_1),\phi(\delta_2\gamma_1),\phi(\delta_1\delta_2\gamma_1)\right\}$
\item $\left\{\phi(\gamma_2),\phi(\delta_1\gamma_2),\phi(\delta_2\gamma_2),\phi(\delta_1\delta_2\gamma_2)\right\}$
\item $\left\{\phi(\gamma_2\gamma_1),\phi(\delta_1\gamma_2\gamma_1),\phi(\delta_2\gamma_2\gamma_1),\phi(\delta_1\delta_2\gamma_2\gamma_1)\right\}$
\end{enumerate}
respectively. Expressing the real variables of the resulting design and their corresponding weight matrices in the form of a tabular column as shown in Table \ref{table_CUWD}, we get

\begin{center}
\begin{tabular}{|c|c|c|c|}
\hline
$\phi(1)$ & $\phi(\gamma_1)$ & $\phi(\gamma_2)$ & $\phi(\gamma_2\gamma_1)$\\
$z_{1I}$ & $z_{1Q}$ & $z_{5I}$ & $z_{5Q}$\\
\hline
$\phi(\delta_1)$ & $\phi(\delta_1)\phi(\gamma_1)$ & $\phi(\delta_1)\phi(\gamma_2)$ & $\phi(\delta_1)\phi(\gamma_2\gamma_1)$\\
$z_{2I}$ & $z_{2Q}$ & $z_{6I}$ & $z_{6Q}$\\
\hline
$\phi(\delta_2)$ & $\phi(\delta_2)\phi(\gamma_1)$ & $\phi(\delta_2)\phi(\gamma_2)$ & $\phi(\delta_2)\phi(\gamma_2\gamma_1)$\\
$z_{3I}$ & $z_{3Q}$ & $z_{7I}$ & $z_{7Q}$\\
\hline
$\phi(\delta_1\delta_2)$ & $\phi(\delta_1\delta_2)\phi(\gamma_1)$ & $\phi(\delta_1\delta_2)\phi(\gamma_2)$ & $\phi(\delta_1\delta_2)\phi(\gamma_2\gamma_1)$\\
$z_{4I}$ & $z_{4Q}$ & $z_{8I}$ & $z_{8Q}$\\
\hline
\end{tabular}
\end{center}
\end{eg}

In general for $R=2^m$ relays we can take the left regular representation of $\mathbb{A}_2^{2^{m-1}}$ to obtain a $4$-group decodable LSTD. These LSTDs were first obtained using a non-algebraic iterative construction procedure in \cite{RaR3}. The algebraic framework presented in this paper provides an interesting algebraic description for codes in \cite{RaR3}.

\begin{remark}
Note that in general to represent $L_x$ as a matrix one could have chosen any basis for $\mathbb{A}_2^L$ instead of the natural basis $\mathcal{B}_2^L$. However, only the natural basis will lead to a design with the low decoding complexity requirements, although a different basis will also give a representation of the same algebra. This shows that although two designs can be algebraically isomorphic, the choice of basis is crucial and only certain basis admit low decoding complexity. Further, even changing the ordering of the natural basis can result in designs which apparently look very different. But this is same as simply applying a permutation to the rows and columns.
\end{remark}

\subsubsection{DSTBCs from $\mathbb{A}_3^L$}
\label{subsubsec3_4_3}
~\\
We use a slightly different approach to obtain codes from $\mathbb{A}_3^L$. Let us first consider the algebra, $\mathbb{A}_3^1$ which is nothing but $Cliff_3$. A general element of $Cliff_3$ looks like

\begin{equation}
\begin{array}{rcl}
x&=&\hat{a}_1+\gamma_1\hat{a}_2+\gamma_2\hat{a}_3+\gamma_3\hat{a}_4+\gamma_1\gamma_2\hat{a}_5+\gamma_2\gamma_3\hat{a}_6+\gamma_1\gamma_3\hat{a}_7+\gamma_1\gamma_2\gamma_3\hat{a}_8
\end{array}
\end{equation}
\noindent for some $\hat{a}_i\in\mathbb{R},i=1,\dots,8$. Note that we have used the natural $\mathbb{R}$ basis of $Cliff_3$ to represent an element of $Cliff_3$. The element $\gamma_1\gamma_2\gamma_3$ satisfies the following properties.

\begin{equation}
\begin{array}{rcl}
(\gamma_1\gamma_2\gamma_3)^2&=&1\\
(\gamma_1)(\gamma_1\gamma_2\gamma_3)&=&(\gamma_1\gamma_2\gamma_3)(\gamma_1)\\
(\gamma_2)(\gamma_1\gamma_2\gamma_3)&=&(\gamma_1\gamma_2\gamma_3)(\gamma_2)\\
(\gamma_3)(\gamma_1\gamma_2\gamma_3)&=&(\gamma_1\gamma_2\gamma_3)(\gamma_3)
\end{array}
\end{equation}
Thus the element $\gamma_1\gamma_2\gamma_3$ squares to $1$ and commutes with all the generators of $Cliff_3$. Hence the matrix representation of the element $\gamma_1\gamma_2\gamma_3$ can be used as a candidate to fill up the first column. Since we have now filled up two matrices (including identity matrix) in the first column, it should be possible to get a $2$-real symbol decodable STBC using matrix representation of $Cliff_3$. From Subsection \ref{subsec2_2}, we know that the remaining weight matrices are simply the product of matrices in the first row and those in the first column. We have,

\begin{equation}
(\gamma_1)(\gamma_1\gamma_2\gamma_3)=-\gamma_2\gamma_3,~~~
(\gamma_2)(\gamma_1\gamma_2\gamma_3)=\gamma_1\gamma_3 \mbox{  and  }
(\gamma_3)(\gamma_1\gamma_2\gamma_3)=-\gamma_1\gamma_2.
\end{equation}

It so turns out that the elements of $\left\{1, \gamma_1, \gamma_2, \gamma_3, -\gamma_1\gamma_2, -\gamma_2\gamma_3, \gamma_1\gamma_3, \gamma_1\gamma_2\gamma_3\right\}$ also form a basis for $Cliff_3$ over $\mathbb{R}$. Thus a general element of $Cliff_3$ can be expressed as
\begin{equation}
\begin{array}{rcl}
x&=&a_1+\gamma_1a_2+\gamma_2a_3+\gamma_3a_4+(-\gamma_1\gamma_2)a_5+(-\gamma_2\gamma_3)a_6+(\gamma_1\gamma_3)a_7+\gamma_1\gamma_2\gamma_3a_8
\end{array}
\end{equation}
\noindent
\noindent for some $a_i\in\mathbb{R},i=1,\dots,8$. By thinking of the element $\gamma_1$ as the complex number $i=\sqrt{-1}$, we can view $Cliff_3$ as a vector space over $\mathbb{C}$. To be precise,
\begin{equation}
\begin{array}{rcl}
x&=&(a_1+\gamma_1a_2)+\gamma_2(a_3+\gamma_1a_5)+\gamma_3(a_4-\gamma_1a_7)+\gamma_2\gamma_3(-a_6+\gamma_1a_8)\\
&=&z_1+\gamma_2z_2+\gamma_3z_3+\gamma_2\gamma_3z_4
\end{array}
\end{equation}
\noindent where, $z_i\in\mathbb{C},i=1,\dots,4$ and are given by $z_1=(a_1+\gamma_1a_2)$, $z_2=(a_3+\gamma_1a_5)$, $z_3=(a_4-\gamma_1a_7)$ and $z_4=(-a_6+\gamma_1a_8)$.

Now using left regular representation as in the case of codes from $\mathbb{A}_2^L$, we get
\begin{equation}
\begin{array}{rcl}
L_x(1)&=&z_1+\gamma_2z_2+\gamma_3z_3+\gamma_2\gamma_3z_4\\
L_x(\gamma_2)&=&\gamma_2z_1^*-z_2^*-\gamma_2\gamma_3z_3^*+\gamma_3z_4^*\\
L_x(\gamma_3)&=&\gamma_3z_1^*+\gamma_2\gamma_3z_2^*-z_3^*-\gamma_2z_4^*\\
L_x(\gamma_2\gamma_3)&=&\gamma_2\gamma_3z_1+\gamma_2^2\gamma_3z_2+\gamma_3\gamma_2\gamma_3z_3-z_4\\
&=&\gamma_2\gamma_3z_1-\gamma_3z_2+\gamma_2z_3-z_4
\end{array}
\end{equation}
Hence we obtain the following LSTD $[L_x]$

\begin{equation}
[L_x]=\left[\begin{array}{rrrr}
z_1 & -z_2^* & -z_3^* & -z_4\\
z_2 & z_1^* & -z_4^* & z_3\\
z_3 & z_4^* & z_1^* & -z_2\\
z_4 & -z_3^* & z_2^* & z_1
\end{array}\right]
\end{equation}

By construction, the weight matrices and the real variables of the LSTD $[L_x]$ can be partitioned into four groups for decoding purposes which is illustrated in the following table.

\begin{center}
\begin{tabular}{|c|c|c|c|}
\hline
$\phi(1)$ & $\phi(\gamma_1)$ & $\phi(\gamma_2)$ & $\phi(\gamma_3)$\\
$z_{1I}$ & $z_{1Q}$ & $z_{2I}$ & $z_{3I}$\\
\hline
$\phi(\gamma_1\gamma_2\gamma_3)$ & $\phi(-\gamma_2\gamma_3)$ & $\phi(\gamma_1\gamma_3)$ & $\phi(-\gamma_1\gamma_2)$\\
$z_{4Q}$ & $z_{4I}$ & $z_{3Q}$ & $z_{2Q}$\\
\hline
\end{tabular}
\end{center}

In general, for $R=2^m$ relays we can take the left regular representation of $\mathbb{A}_3^{2^{m-2}}$.

\begin{eg}
Suppose we want a design for $R=8=2^3$ relays. Then, we have $m=3$. Consider the algebra $\mathbb{A}_3^2$. A typical element $x$ can be expressed as
$$
x=z_1+\gamma_2z_2+\gamma_3z_3+\gamma_2\gamma_3z_4+\delta_1z_5+\delta_1\gamma_2z_6\delta_1\gamma_3z_7+\delta_1\gamma_2\gamma_3z_8
$$
where,$z_1=(z_{1I}+\gamma_1z_{1Q})$, $z_2=(z_{2I}+\gamma_1z_{2Q})$, $z_3=(z_{3I}-\gamma_1z_{3Q})$, $z_4=(-z_{4I}+\gamma_1z_{4Q})$, $z_5=(z_{5I}+\gamma_1z_{5Q})$, $z_6=(z_{6I}+\gamma_1z_{6Q})$, $z_7=(z_{7I}-\gamma_1z_{7Q})$, $z_8=(-z_{8I}+\gamma_1z_{8Q})$
and $z_{iI},z_{iQ}\in\mathbb{R},i=1,\dots,8$. Using left regular representation, we get the following LSTD
\begin{equation}
\left[\begin{array}{rrrrrrrr}
z_1 & -z_2^* & -z_3^* & -z_4 & z_5 & -z_6^* & -z_7^* & -z_8\\
z_2 & z_1^* & -z_4^* & z_3 & z_6 & z_5^* & -z_8^* & z_7\\
z_3 & z_4^* & z_1^* & -z_2 & z_7 & z_8^* & z_5^* & -z_6\\
z_4 & -z_3^* & z_2^* & z_1 & z_8 & -z_7^* & z_6^* & z_5\\
z_5 & -z_6^* & -z_7^* & -z_8 & z_1 & -z_2^* & -z_3^* & -z_4\\
z_6 & z_5^* & -z_8^* & z_7 & z_2 & z_1^* & -z_4^* & z_3\\
z_7 & z_8^* & z_5^* & -z_6 & z_3 & z_4^* & z_1^* & -z_2\\
z_8 & -z_7^* & z_6^* & z_5 & z_4 & -z_3^* & z_2^* & z_1
\end{array}\right].
\end{equation}
The corresponding $4$ groups of real variables are $\left\{z_{1I},z_{4Q},z_{5I},z_{8Q}\right\}$, $\left\{z_{1Q},z_{4I},z_{5Q},z_{8I}\right\}$,\\ $\left\{z_{2I},z_{3Q},z_{6I},z_{7Q}\right\}$ and $\left\{z_{3I},z_{2Q},z_{7I},z_{6Q}\right\}$ respectively.
\end{eg}

\subsubsection{Features of DSTBCs from extended Clifford algebras}
\label{subsubsec3_4_4}
Note that both the LSTDs from $\mathbb{A}_2^L$ and $\mathbb{A}_3^L$ are conjugate linear. This is by virtue of the properties of left regular representation. While taking the left regular matrix representation, recall that we viewed the algebra as a vector space over $\mathbb{C}$ by thinking of the element $\gamma_1$ as the analogue of the complex number $i=\sqrt{-1}$. Any column of the design was then obtained as the image of a few elements of the natural basis of the algebra under the map $L_x$. All the elements of the natural basis of $\mathbb{A}_n^L$ have the property that they either commute with $\gamma_1$ or anti-commute with $\gamma_1$. When we find the image of a basis element say $\alpha$ under the map $L_x$, recall that we moved $\alpha$ past a complex number $z_i$. If $\alpha$ commutes with $\gamma_1$, then it leaves the complex number intact. If $\alpha$ anti-commutes with $\gamma_1$, then it inflicts conjugation while moving past the complex number. This property leads to conjugate LSTDs. This fact can be clearly observed in for instance \eqref{eqn_matrix}. Moreover it can be easily observed that all the relay matrices of the resulting designs are unitary. This is because the number of complex variables in the design is equal to the size of the matrix and by virtue of the left regular representation any complex variable appears only once in any column and also they appear in different positions in every column. Full diversity is guaranteed for both these classes of LSTDs because they are CUWDs and full diversity aspects for general CUWDs have been discussed in Subsection \ref{subsubsec2_2_1}. Also note that both these constructions meet the optimal rate-ML decoding complexity as stated in Theorem \ref{thm_maxrate} for $g=4$. The constructed DSTBCs can be easily described and they have a very nice structure which is due to the algebraic approach.

\section{OFDM based Distributed Space Time Coding for Asynchronous Relay Networks}
\label{sec4}

In this section, we consider symbol asynchronism among the relays and propose an OFDM based transmission scheme that can achieve full cooperative diversity in asynchronous relay networks. This transmission scheme is a generalization of the Li-Xia transmission scheme in \cite{LiX}. We briefly review the Li-Xia \cite{LiX} transmission scheme in Subsection \ref{subsec4_1} and in Subsection \ref{subsec4_2}, we describe the proposed transmission scheme and also provide code constructions based on the four group decodable DSTBCs constructed in the previous section. Finally, in Subsection \ref{subsec4_3}, it is shown how differential encoding at the source node can be combined with the proposed transmission scheme in Subsection \ref{subsec4_2} to arrive at a transmission scheme for non-coherent asynchronous relay networks.

\begin{figure}[h]
\centering
\input{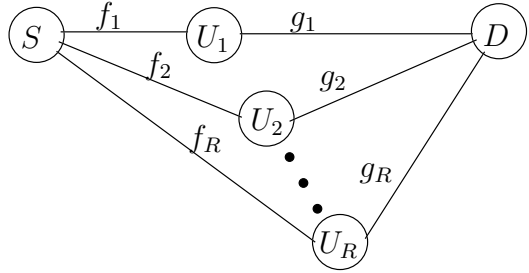}
\caption{Asynchronous wireless relay network}
\label{fig_network}
\end{figure}

An asynchronous wireless relay network is depicted in Fig. \ref{fig_network}. The overall relative timing error of the signals arrived at the destination node from the $i$-th relay node is denoted by $\tau_i$. Without loss of generality, it is assumed that $\tau_1=0$, $\tau_{i+1}\geq \tau_i,i=1,\dots,R-1$. The relay nodes are assumed to have perfect carrier synchronization. The destination node is assumed to have the knowledge of all the channel fading gains $f_i,g_i, i=1,\dots,R$ and the relative timing errors $\tau_i,i=1,\dots,R$. All the other assumptions are same as that made for the synchronous wireless relay network case.

In this OFDM based transmission scheme, the transmission of information from the source node to the destination node takes place in two phases. In the first phase, the source broadcasts the information to the relay nodes using OFDM. The relay nodes receive the faded and noise corrupted OFDM symbols, process them and transmit them to the destination.

\subsection{Li-Xia transmission scheme\cite{LiX}}
\label{subsec4_1}

The source takes $2N$ complex symbols $z_{i,j},$ $0\leq i\leq N-1, j=1,2$ and forms two blocks of data denoted by $\mathbf{a_j}=\left[\begin{array}{cccc}z_{0,j} & z_{1,j} & \dots & z_{N-1,j}\end{array}\right]^T, j=1,2$. The first block $\mathbf{a_1}$ is modulated by $N$-point Inverse Discrete Fourier Transform (IDFT) and $\mathbf{a_2}$ is modulated by $N$-point Discrete Fourier Transform (DFT). Then a cyclic prefix (CP) of length $l_{cp}$ is added to each block, where $l_{cp}$ is not less than the maximum of the overall relative timing errors of the signals arrived at the destination node from the relay nodes. The resulting two OFDM symbols denoted by $\mathbf{\bar{a}_1}$ and $\mathbf{\bar{a}_2}$ consisting of $L_s=N+l_{cp}$ complex numbers are broadcasted to the two relays using a fraction $\pi_1$ of the total average power $P$ consumed by the source and the relay nodes together.

If the channel fade gains $f_i,g_i, i=1,\dots,R$ are assumed to be constant for $4$ OFDM symbol intervals, the received signals at the $i$-th relay during the $j$-th OFDM symbol duration is given by
$$
\mathbf{r_{i,j}}=f_i\mathbf{\bar{a}_j}+\mathbf{\bar{v}_{i,j}}
$$

\noindent where, $\mathbf{\bar{v}_{i,j}}$ is the additive white Gaussian noise (AWGN) at the $i$-th relay node during the $j$-th OFDM symbol duration. The two relay nodes then process and transmit the resulting signals as shown in Table \ref{table_alamouti} using a fraction $\pi_2$ of the total power $P$. The notation $\zeta(.)$ denotes the time reversal operation, i.e., $\zeta(\mathbf{r}(n))\triangleq \mathbf{r}(L_s-n)$.

\begin{table}[h]
\caption{Alamouti code based transmission scheme}
\label{table_alamouti}
\begin{center}
\begin{tabular}{|c|c|c|}
\hline
OFDM Symbol & $U_1$ & $U_2$\\
\hline
$1$ & $\sqrt{\frac{\pi_2P}{\pi_1P+1}}\mathbf{r_{1,1}}$ & $-\sqrt{\frac{\pi_2P}{\pi_1P+1}}\mathbf{r_{2,2}}^*$\\
\hline
$2$ & $\sqrt{\frac{\pi_2P}{\pi_1P+1}}\zeta(\mathbf{r_{1,2}})$ & $\sqrt{\frac{\pi_2P}{\pi_1P+1}}\zeta(\mathbf{r_{2,1}}^*)$\\
\hline
\end{tabular}
\end{center}
\end{table}

The destination removes the CP for the first OFDM symbol and implements the following for the second OFDM symbol:

\begin{enumerate}
\item Remove the CP to get a $N$-point vector
\item Shift the last $l_{cp}$ samples of the $N$-point vector as the first $l_{cp}$ samples.
\end{enumerate}

DFT is then applied on the resulting two vectors. Since $l_{cp}\geq\tau_2$, the orthogonality between the sub carriers is still maintained. The delay in the time domain then translates to a corresponding phase change of $e^{-\frac{i2\pi k}{N}}$ in the $k$-th sub carrier. Let $d^{\tau_2}$ denote $\left[\begin{array}{cccc}1 & e^{-\frac{i2\pi\tau_2}{N}} & \dots & e^{-\frac{i2\pi\tau_2(N-1)}{N}} \end{array}\right]^T$. Then the received signals for two consecutive OFDM blocks after CP removal and DFT transformation denoted by $\mathbf{y_1}=\left[\begin{array}{cccc}y_{0,1} & y_{1,1} & \dots & y_{N-1,1}\end{array}\right]^T$ and $\mathbf{y_2}=\left[\begin{array}{cccc}y_{0,2} & y_{1,2} & \dots & y_{N-1,2}\end{array}\right]^T$ can be expressed as:

$$
\begin{array}{rl}
\mathbf{y_1}=& \sqrt{\frac{\pi_1\pi_2P^2}{\pi_1P+1}}(\mathrm{DFT}(\mathrm{IDFT}(\mathbf{a_1}))f_1g_1+\mathrm{DFT}(-(\mathrm{DFT}(\mathbf{a_2}))^*)\circ d^{\tau_2}f_2^*g_2)\\
&+\sqrt{\frac{\pi_2P}{\pi_1P+1}}(\mathbf{v_{1,1}}g_1-\mathbf{v_{2,2}}^*\circ d^{\tau_2}g_2)+\mathbf{w_1}\\
\mathbf{y_2}=& \sqrt{\frac{\pi_1\pi_2P^2}{\pi_1P+1}}(\mathrm{DFT}(\zeta(\mathrm{DFT}(\mathbf{a_2})))f_1g_1+\mathrm{DFT}(\zeta((\mathrm{IDFT}(\mathbf{a_1}))^*))\circ d^{\tau_2}f_2^*g_2)\\
&+\sqrt{\frac{\pi_2P}{\pi_1P+1}}(\mathbf{v_{1,2}}g_1+\mathbf{v_{2,1}}^*\circ d^{\tau_2}g_2)+\mathbf{w_2}
\end{array}
$$

\noindent where, $\circ$ denotes Hadamard product, $\mathbf{w_i}=(w_{k,i}), i=1,2$ is the AWGN at the destination and $\mathbf{v_{i,j}}$ denotes the DFT of $\mathbf{\bar{v}_{i,j}}$. Now using the identities

\begin{equation}
\label{eqn_identities}
\begin{array}{lcl}
(\mathrm{DFT}(\mathbf{x}))^*&=&\mathrm{IDFT}(\mathbf{x}^*)\\ (\mathrm{IDFT}(\mathbf{x}))^*&=&\mathrm{DFT}(\mathbf{x}^*)\\  \mathrm{DFT}(\zeta(\mathrm{DFT}(\mathbf{x})))&=&\mathbf{x}
\end{array}
\end{equation}

\noindent we get the Alamouti code form in each sub carrier $k,0\leq k\leq N-1$ as shown below:

$$
\begin{array}{ll}
\left[\begin{array}{c}y_{k,1}\\y_{k,2}\end{array}\right]=&\sqrt{\frac{\pi_1\pi_2P^2}{\pi_1P+1}}\left[\begin{array}{cc}
z_{k,1} & -z_{k,2}^*\\
 z_{k,2} & z_{k,1}^*
\end{array}\right]\left[\begin{array}{c}f_1g_1\\e^{-\frac{i2\pi k\tau_2}{N}}f_2^*g_2\end{array}\right]\\
&+\sqrt{\frac{\pi_2P}{\pi_1P+1}}\left[\begin{array}{c}\mathbf{v_{1,1}}(k)g_1-\mathbf{v_{2,2}}^*(k)e^{-\frac{i2\pi k\tau_2}{N}}g_2\\
\mathbf{v_{1,2}}(k)g_1+\mathbf{v_{2,1}}^*(k)e^{-\frac{i2\pi k\tau_2}{N}}g_2\end{array}\right]+\left[\begin{array}{c}w_{k,1}\\w_{k,2}\end{array}\right].
\end{array}
$$

With the power allocation $\pi_1=1$, $\pi_2=\frac{1}{R}$ and because of the Alamouti code form, diversity order of two can be achieved along with symbol-by-symbol ML decoding.

\subsection{Proposed Transmission Scheme}
\label{subsec4_2}

In this section, we extend the Li-Xia transmission scheme to a general transmission scheme that can achieve full asynchronous cooperative diversity for arbitrary number of relays. This nontrivial extension is based on analyzing the sufficient conditions required on the structure of STBCs which admit application in the Li-Xia transmission scheme.

\subsubsection{Transmission by the source node}
\label{subsubsec4_2_1}
~\\
The source takes $RN$ complex symbols $z_{i,j},$ $0\leq i\leq N-1, j=1,2,\dots,R,$ and forms $R$ blocks of data denoted by $\mathbf{a_j}=\left[\begin{array}{cccc}z_{0,j} & z_{1,j} & \dots & z_{N-1,j}\end{array}\right]^T, j=1,2,\dots,R$. Of these $R$ blocks, $M$ of them are modulated by $N$-point IDFT and the remaining $R-M$ blocks are modulated by $N$-point DFT. Without loss of generality, let us assume that the first $M$ blocks are modulated by $N$-point IDFT. Then a CP of length $l_{cp}$ is added to each block, where $l_{cp}$ is not less than the maximum of the overall relative timing errors of the signals arrived at the destination node from all the relay nodes. The resulting $R$ OFDM symbols denoted by $\mathbf{\bar{a}_1},\mathbf{\bar{a}_2},\dots,\mathbf{\bar{a}_R}$ each consisting of $L_s=N+l_{cp}$ complex numbers are broadcasted to the $R$ relays using a fraction $\pi_1$ of the total average power $P$.

\subsubsection{Processing at the relay nodes}
\label{subsubsec4_2_2}
~\\
If the channel fade gains are assumed to be constant for $2R$ OFDM symbol intervals, the received signals at the $i$-th relay during the $j$-th OFDM symbol duration is given by
$$
\mathbf{r_{i,j}}=f_i\mathbf{\bar{a}_j}+\mathbf{\bar{v}_{i,j}}
$$

\noindent where, $\mathbf{\bar{v}_{i,j}}$ is the AWGN at the $i$-th relay node during the $j$-th OFDM symbol duration. The relay nodes process and transmit the received noisy signals as shown in Table \ref{table_proposed} using a fraction $\pi_2$ of total power $P$. Note from Table \ref{table_proposed} that time reversal is done during the last $R-M$ OFDM symbol durations. We would like to emphasize that in general time reversal could be implemented in any $R-M$ of the total $R$ OFDM symbol durations. Now, $\mathbf{t_{i,j}}\in\left\{\mathbf{0},\pm \mathbf{r_{i,j}}, j=1,\dots,R\right\}$ with the constraint that the $i$-th relay should not be allowed to transmit the following:

$$
\begin{array}{l}
\left\{\pm\mathbf{r_{i,j}}^*, j=1,\dots,M\right\}
\cup\left\{\pm\zeta(\mathbf{r_{i,j}}),j=1,\dots,M\right\}\\
\cup\left\{\pm\mathbf{r_{i,j}},j=M+1,\dots,R\right\}
\cup\left\{\pm\zeta(\mathbf{r_{i,j}}^*), j=M+1,\dots,R\right\}.
\end{array}
$$

\begin{remark}
If the $i$-th relay is permitted to transmit elements belonging to the above set, then after CP removal and DFT transformation at the destination node, we would end up with the following vectors corresponding to each of the four subsets in the above set respectively:
$$
\begin{array}{l}
\pm\mathrm{DFT}((\mathrm{IDFT}(\mathbf{a_j}))^*)=\pm\mathrm{DFT}(\mathrm{DFT}(\mathbf{a_j}^*)),\ j=1,\dots,M\\
\pm\mathrm{DFT}(\zeta(\mathrm{IDFT}(\mathbf{a_j}))),\ j=1,\dots,M\\
\pm\mathrm{DFT}(\mathrm{DFT}(\mathbf{a_j})),\ j=M+1,\dots,R\\
\pm\mathrm{DFT}(\zeta((\mathrm{DFT}(\mathbf{a_j}))^*)),\ j=M+1,\dots,R
\end{array}
$$

\noindent from any of which it is not possible to recover any of $\pm\mathbf{a_j},\pm\mathbf{a_j}^*,j=1,2,\dots,R$. However, if the destination node is allowed to apply DFT to some of the received OFDM symbols and IDFT to the remaining OFDM symbols, then possibly the above restrictions can be removed, which is a scope for further work.
\end{remark}

\begin{table}
\caption{Proposed transmission scheme}
\label{table_proposed}
{\footnotesize
\begin{center}
\begin{tabular}{|c|c|c|c|c|c|c|}
\hline
OFDM Symbol & $U_1$ & $\dots$ & $U_{M}$ & $U_{M+1}$ & $\dots$ & $U_R$\\
\hline
$1$ & $\mathbf{t_{1,1}}$ & $\dots$ & $\mathbf{t_{M,1}}$ & $\mathbf{t_{M+1,1}}^*$ & $\dots$ & $\mathbf{t_{R,1}}^*$\\
\hline
$\vdots$ & $\vdots$ & $\vdots$ & $\vdots$ & $\vdots$ & $\vdots$ & $\vdots$\\
\hline
$M$ & $\mathbf{t_{1,M}}$ & $\dots$ & $\mathbf{t_{M,M}}$ & $\mathbf{t_{M+1,M}}^*$ & $\dots$ & $\mathbf{t_{R,M}}^*$\\
\hline
$M+1$ & $\zeta(\mathbf{t_{1,M+1}})$ & $\dots$ & $\zeta(\mathbf{t_{M,M+1}})$ & $\zeta(\mathbf{t_{M+1,M+1}}^*)$ & $\dots$ & $\zeta(\mathbf{t_{R,M+1}}^*)$\\
\hline
$\vdots$ & $\vdots$ & $\vdots$ & $\vdots$ & $\vdots$ & $\vdots$ & $\vdots$\\
\hline
$R$ & $\zeta(\mathbf{t_{1,R}})$ & $\dots$ & $\zeta(\mathbf{t_{M,R}})$ & $\zeta(\mathbf{t_{M,R}}^*)$ & $\dots$ & $\zeta(\mathbf{t_{R,R}}^*)$\\
\hline
\end{tabular}
\end{center}
}
\end{table}

\subsubsection{Decoding at the destination}
\label{subsubsec4_2_3}
~\\
The destination removes the CP for the first $M$ OFDM symbols and implements the following for the remaining OFDM symbols:

\begin{enumerate}
\item Remove the CP to get a $N$-point vector
\item Shift the last $l_{cp}$ samples of the $N$-point vector as the first $l_{cp}$ samples.
\end{enumerate}

DFT is then applied on the resulting $R$ vectors. Let the received signals for $R$ consecutive OFDM blocks after CP removal and DFT transformation be denoted by\\ $\mathbf{y_j}=\left[\begin{array}{cccc}y_{0,j} & y_{1,j} & \dots & y_{N-1,j}\end{array}\right]^T, j=1,2,\dots,R$. Let $\mathbf{w_i}=(w_{k,i}), i=1,\dots,R$ represent the AWGN at the destination node and let $\mathbf{v_{i,j}}$ denote the DFT of $\mathbf{\bar{v}_{i,j}}$. Let \\$\mathbf{z_k}=\left[\begin{array}{cccc}z_{k,1} & z_{k,2} & \dots & z_{k,R}\end{array}\right]^T, k=0,1,\dots,N-1$.

Now using \eqref{eqn_identities}, we get in each sub carrier $k,0\leq k\leq N-1$:

\begin{equation}
\label{eqn_sys_model}
\mathbf{y_{k}}=\left[\begin{array}{cccc}y_{k,1} & y_{k,2} & \dots & y_{k,R}\end{array}\right]^T=\sqrt{\frac{\pi_1\pi_2P^2}{\pi_1P+1}}\mathbf{X_kh_k}+\mathbf{n_k}
\end{equation}

\noindent where,
\begin{equation}
\label{eqn_conj_linear}
\mathbf{X_k}=\left[\begin{array}{cccccc}\mathbf{B_1z_k} & \dots & \mathbf{B_{M}z_k} & \mathbf{B_{M+1}}\mathbf{z_k}^* & \dots \mathbf{B_{R}}\mathbf{z_k}^* \end{array}\right]
\end{equation}

\noindent for some square real matrices $\mathbf{B_i}, i=1,\dots,R$ having the property that any row of $\mathbf{B_i}$ has only one nonzero entry. If $u_k^{\tau_i}=e^{-\frac{i2\pi k\tau_i}{N}}$, then
\begin{equation}
\label{eqn_channel}
\mathbf{h_k}=\left[\begin{array}{ccccccc} f_1g_1 & u_k^{\tau_2}f_2g_2 & \dots &
u_k^{\tau_{M}}f_{M}g_{M} &
u_k^{\tau_{M+1}}f_{M+1}^*g_{M+1} & \dots &
u_k^{\tau_R}f_R^*g_R\end{array}\right]^T
\end{equation}
\noindent is the equivalent channel matrix for the $k$-th sub carrier. The equivalent noise vector $\mathbf{n_k}$ is given by:

$$
\begin{array}{rl}
\mathbf{n_k}=&\sqrt{\frac{\pi_2P}{\pi_1P+1}}\left[\begin{array}{c}\beta_1\sum_{i=1}^{R}sgn(\mathbf{t_{i,1}})\mathbf{\hat{v}_{i,1}}(k)g_iu_k^{\tau_i}\\
\beta_2\sum_{i=1}^{R}sgn(\mathbf{t_{i,2}})\mathbf{\hat{v}_{i,2}}(k)g_iu_k^{\tau_i}\\
\vdots\\
\beta_R\sum_{i=1}^{R}sgn(\mathbf{t_{i,R}})\mathbf{\hat{v}_{i,R}}(k)g_iu_k^{\tau_i}
\end{array}\right]+\left[\begin{array}{c}w_{k,1}\\ w_{k,2}\\ \dots\\  w_{k,R}\end{array}\right]
\end{array}
$$

\noindent where, $sgn(\mathbf{t_{i,j}})=\left\{\begin{array}{l} 1~~\mathrm{if}~ \mathbf{t_{i,j}}\in\left\{\mathbf{r_{i,j}}, j=1,\dots,R\right\}\\
-1~~\mathrm{if}~\mathbf{t_{i,j}}\in\left\{-\mathbf{r_{i,j}}, j=1,\dots,R\right\}\\
0~~\mathrm{if}~\mathbf{t_{i,j}}=\mathbf{0}\end{array}\right.$ and\\ $\mathbf{\hat{v}_{i,m}}=\left\{\begin{array}{l} \pm\mathbf{v_{i,j}}~\mathrm{if}~i\leq M~\mathrm{and}~\mathbf{t_{i,m}}=\pm\mathbf{r_{i,j}}\\
\pm\mathbf{v_{i,j}}^*~\mathrm{if}~i>M~\mathrm{and}~\mathbf{t_{i,m}}=\pm\mathbf{r_{i,j}}\end{array}\right.$. The $\beta_i$'s are simply scaling factors to account for the correct noise variance due to some zeros in the transmission.

ML decoding of $\mathbf{X_k}$ can be done from \eqref{eqn_sys_model} by choosing that codeword which minimizes $\parallel\Omega^{-\frac{1}{2}}(\mathbf{y_k}-\mathbf{X_k}\mathbf{h_k})\parallel_F^2$, where $\Omega$ is the covariance matrix of $\mathbf{n_k}$. Essentially, the proposed transmission scheme implements a space time code having a special structure in each sub carrier.

\subsubsection{Full diversity four group decodable DSTBCs}
\label{subsubsec4_2_4}
~\\
In this subsection, we analyze the structure of the STBC required for implementing in the proposed transmission scheme. Then we observe that the DSTBCs constructed in Section \ref{sec3} have this structure and hence are applicable in this setting as well. Note from \eqref{eqn_conj_linear} that the conjugate linearity property is required. But conjugate linearity alone is not enough for a space time code to qualify for implementation in the proposed transmission scheme. Note from Table \ref{table_proposed} that time reversal is implemented for certain OFDM symbol durations by all the relay nodes. In other words if one relay node implements time reversal during a particular OFDM symbol duration, then all the other relay nodes should necessarily implement time reversal during that OFDM symbol duration. Observe that this is a property connected with the row structure of a space time code. We now provide a set of sufficient conditions that are required on the row structure of conjugate linear space time codes. First let us partition the complex symbols appearing in the $i$-th row into two sets- one set $P_i$ containing those complex symbols which appear without conjugation and another set $P_i^c$ which contains those complex symbols which appear with conjugation in the $i$-th row. Any conjugate linear STBC satisfying the following sufficient conditions can be implemented in the proposed OFDM based transmission scheme described in the previous subsection.

\begin{equation}
\label{eqn_row_property}
\begin{array}{c}
P_i\cap P_i^c=\phi,~\forall~i=1,\dots,R\\
|P_i|=|P_i^c|,~ \forall~ i=1,\dots,R\\
P_i\cap P_j\in\left\{\phi, P_i, P_j\right\},~\forall~ i\neq j.
\end{array}
\end{equation}

To understand what happens if the above condition is not met, let us see an example of a conjugate linear STBC which cannot be employed in the proposed transmission scheme.

\begin{eg}
Consider the conjugate linear STBC given by
$$
\left[\begin{array}{cccc}
z_{k,1} & z_{k,2} & -z_{k,3}^* & -z_{k,4}^*\\
z_{k,2} & z_{k,3} & -z_{k,4}^* & -z_{k,1}^*\\
z_{k,3} & z_{k,4} & z_{k,1}^* & z_{k,2}^*\\
z_{k,4} & z_{k,1} & z_{k,2}^* & z_{k,3}^*
\end{array}\right]
$$
\noindent for which $P_1=P_3^c=\left\{z_{k,1},z_{k,2}\right\}$, $P_1^c=P_3=\left\{z_{k,3},z_{k,4}\right\}$, $P_2=P_4^c=\left\{z_{k,2},z_{k,3}\right\}$, $P_2^c=P_4=\left\{z_{k,4},z_{k,1}\right\}$. It can be checked that there is no assignment of time reversal OFDM symbol durations together with an appropriate choice of $M$ and relay node processing such that the above conjugate linear STBC form is obtained in every sub carrier at the destination node. This is because the conditions in \eqref{eqn_row_property} are not met by this conjugate linear STBC.
\end{eg}

For the case of the Alamouti code, $P_1=P_2^c=\left\{z_{k,1}\right\}$, $P_2=P_1^c=\left\{z_{k,2}\right\}$ and hence it satisfies the conditions in \eqref{eqn_row_property}.

It is easy to observe that the four group ML decodable DSTBCs constructed in Section \ref{sec3} satisfy all the required conditions as stated in \eqref{eqn_row_property} and are thus suitable for application in the proposed transmission scheme. This is illustrated using the following two examples.

\begin{eg}
\label{eg_asynchronous_4relay}

Let us consider $R=4$ and the DSTBC from extended Clifford algebra $\mathbb{A}_2^2$ for this case has the following structure
$$
\left[\begin{array}{cccc}
z_{k,1} & z_{k,2} & -z_{k,3}^* & -z_{k,4}^*\\
z_{k,2} & z_{k,1} & -z_{k,4}^* & -z_{k,3}^*\\
z_{k,3} & z_{k,4} & z_{k,1}^* & z_{k,2}^*\\
z_{k,4} & z_{k,3} & z_{k,2}^* & z_{k,1}^*
\end{array}\right]
$$
\noindent for which $M=2$, $P_1=P_2=P_3^c=P_4^c=\left\{z_{k,1},z_{k,2}\right\}$ and $P_3=P_4=P_1^c=P_2^c=\left\{z_{k,3},z_{k,4}\right\}$. To arrive at the above structure in every sub carrier, encoding and processing at the relays are done as follows:
$\mathbf{\bar{a}_1}=\mathrm{IDFT}(\mathbf{a_1})$, $\mathbf{\bar{a}_2}=\mathrm{IDFT}(\mathbf{a_2})$, $\mathbf{\bar{a}_3}=\mathrm{DFT}(\mathbf{a_3})$ and $\mathbf{\bar{a}_4}=\mathrm{DFT}(\mathbf{a_4})$.

\begin{table}[h]
\caption{Transmission scheme for $4$ relays}
\label{table_4relay}
\begin{center}
\begin{tabular}{|c|c|c|c|c|}
\hline
OFDM & $U_1$ & $U_2$ & $U_3$ & $U_4$\\
Symbol & & & &\\
\hline
$1$ & $\mathbf{r_{1,1}}$ & $\mathbf{r_{2,2}}$ & $-\mathbf{r_{3,3}}^*$ & $-\mathbf{r_{4,4}}^*$\\
\hline
$2$ & $\mathbf{r_{1,2}}$ & $\mathbf{r_{2,1}}$ &
$-\mathbf{r_{3,4}}^*$ & $-\mathbf{r_{4,3}}^*$\\
\hline
$3$ & $\zeta(\mathbf{r_{1,3}})$ & $\zeta(\mathbf{r_{2,4}})$ &
$\zeta(\mathbf{r_{3,1}}^*)$ & $\zeta(\mathbf{r_{4,2}}^*)$\\
\hline
$4$ & $\zeta(\mathbf{r_{1,4}})$ & $\zeta(\mathbf{r_{2,3}})$ &
$-\zeta(\mathbf{r_{3,2}}^*)$ & $-\zeta(\mathbf{r_{4,1}}^*)$\\
\hline
\end{tabular}
\end{center}
\end{table}

As discussed in Section \ref{sec3}, this DSTBC is single complex symbol decodable and achieves full diversity for appropriately chosen signals sets  .
\end{eg}

\begin{eg}
\label{eg_5relay}
Let us take $R=5$, for which the DSTBC is obtained by taking a DSTBC from PCIOD for $6$ relays and dropping one column. It is given by
$$
\left[\begin{array}{ccccc}
z_{k,1} & -z_{k,2}^* & 0 & 0 & 0\\
z_{k,2} & z_{k,1}^* & 0 & 0 & 0\\
0 & 0 & z_{k,3} & -z_{k,4}^* & 0\\
0 & 0 & z_{k,4} & z_{k,3}^* & 0\\
0 & 0 & 0 & 0 & z_{k_5}\\
0 & 0 & 0 & 0 & z_{k,6}
\end{array}\right]
$$
\noindent for which $P_1=P_2^c=\left\{z_{k,1}\right\}$, $P_2=P_1^c=\left\{z_{k,2}\right\}$, $P_3=P_4^c=\left\{z_{k,3}\right\}$, $P_4=P_3^c=\left\{z_{k,4}\right\}$, $P_5=\left\{z_{k,5}\right\}$, $P_6=\left\{z_{k,6}\right\}$ and $P_5^c=P_6^c=\phi$. At the source, we choose $\mathbf{\bar{a}_1}=\mathrm{IDFT}(\mathbf{a_1})$, $\mathbf{\bar{a}_2}=\mathrm{DFT}(\mathbf{a_2})$,
$\mathbf{\bar{a}_3}=\mathrm{IDFT}(\mathbf{a_3})$,
$\mathbf{\bar{a}_4}=\mathrm{DFT}(\mathbf{a_4})$,
$\mathbf{\bar{a}_5}=\mathrm{IDFT}(\mathbf{a_5})$ and
$\mathbf{\bar{a}_6}=\mathrm{DFT}(\mathbf{a_6})$. The $5$ relays process the received OFDM symbols as shown in Table \ref{table_5relay}.

\begin{table}[h]
\caption{Transmission scheme for $5$ relays}
\label{table_5relay}
\begin{center}
{\footnotesize
\begin{tabular}{|c|c|c|c|c|c|}
\hline
OFDM & $U_1$ & $U_2$ & $U_3$ & $U_4$ & $U_5$\\
Symbol & & & & &\\
\hline
$1$ & $\mathbf{r_{1,1}}$ & $-\mathbf{r_{2,2}}^*$ & $\mathbf{0}$ & $-\mathbf{0}$ & $\mathbf{0}$\\
\hline
$2$ & $\zeta(\mathbf{r_{1,2}})$ & $\zeta(\mathbf{r_{2,1}}^*)$ &
$-\mathbf{0}$ & $-\mathbf{0}$ & $\mathbf{0}$\\
\hline
$3$ & $\mathbf{0}$ & $\mathbf{0}$ & $\mathbf{r_{3,3}}$ & $-\mathbf{r_{4,4}}^*$ & $\mathbf{0}$\\
\hline
$4$ & $\mathbf{0}$ & $\mathbf{0}$ &
$-\zeta(\mathbf{r_{3,2}})$ & $\zeta(\mathbf{r_{4,1}}^*)$ & $\mathbf{0}$\\
\hline
$5$ & $\mathbf{0}$ & $\mathbf{0}$ & $\mathbf{0}$ & $\mathbf{0}$ & $\mathbf{r_{5,5}}$\\
\hline
$6$ & $\mathbf{0}$ & $\mathbf{0}$ & $\mathbf{0}$ & $\mathbf{0}$ &
$-\zeta(\mathbf{r_{5,6}})$\\
\hline
\end{tabular}
}
\end{center}
\end{table}

This code is $3$ real symbol decodable and achieves full diversity for appropriately chosen signal sets.
\end{eg}

Example \ref{eg_5relay} illustrates how the proposed transmission scheme can be employed for odd number of relays as well. Note that the ML decoding complexity of the proposed codes for asynchronous relay networks is significantly less compared to all other distributed space time codes for asynchronous cooperative diversity known in the literature.

Note that the full diversity, $2$-group ML decodable DSTBCs in \cite{KiR2} and the full diversity, $1$-group ML decodable DSTBCs in \cite{EOK,OgH} also satisfy the conditions in \eqref{eqn_row_property} and are applicable in the proposed OFDM based transmission scheme.

\subsection{Transmission Scheme for Noncoherent Asynchronous Relay Networks}
\label{subsec4_3}

In this subsection, it is shown how differential encoding can be combined with the proposed transmission scheme described in Subsection \ref{subsec4_2} and the distributed differential space time codes for noncoherent synchronous relay networks in \cite{RaR6} are proposed for application in this setting.

For the transmission scheme described in Subsection \ref{subsec4_2}, at the end of one transmission frame,  we have in the $k$-th sub carrier $\mathbf{y_k}=\sqrt{\frac{\pi_1\pi_2P^2}{\pi_1P+1}}\mathbf{X_k}\mathbf{h_k}+\mathbf{n_k}$. Note that the channel matrix $\mathbf{h_k}$ depends on $f_i,g_i,\tau_i, i=1,\dots,R$. Thus the destination node needs to have the knowledge of all these values in order to perform ML decoding.

Now using differential encoding ideas which were proposed in \cite{KiR,OgH2,JiJ} for non-coherent communication in synchronous relay networks, we propose to combine them with the proposed asynchronous transmission scheme. Supposing the channel remains approximately constant for two transmission frames, then differential encoding can be done at the source node in each sub carrier $0\leq k\leq N-1$ as follows:
$$
\mathbf{a_k^0}=\left[\begin{array}{cccc}\sqrt{R} & 0 & \dots & 0\end{array}\right]^T,~ \mathbf{a_k^t}=\frac{1}{b_t-1}\mathbf{C_t}\mathbf{a_k^{t-1}}, \mathbf{C_t}\in\mathscr{C}
$$
where, $\mathbf{a_k^{i}}$ denotes the vector of complex symbols transmitted by the source during the $i$-th transmission frame and $\mathscr{C}$ is the codebook used by the source which consists of scaled unitary matrices $\mathbf{C}^H\mathbf{C_t}=b_t^2\mathbf{I}$ such that $\mathrm{E}[b_t^2]=1$. If \mbox{$\mathbf{C}\mathbf{B_i}=\mathbf{B_i}\mathbf{C}, i=1,\dots,M$} and \mbox{$\mathbf{C}\mathbf{B_i}=\mathbf{B_i}\mathbf{C}^*, i=M+1,\dots,R$} for all $\mathbf{C\in\mathscr{C}}$, then we can write
\begin{equation}
\mathbf{y_k^t}=\frac{1}{b_{t-1}}\mathbf{C_t}\mathbf{y_k^{t-1}}+(\mathbf{n_k^t}-\frac{1}{b_t-1}\mathbf{C_{t}}\mathbf{n_k^{t-1}})
\end{equation}
from which $\mathbf{C_t}$ can be decoded as \mbox{$\mathbf{\hat{C}_t}=\arg\min_{\mathbf{C_t\in\mathscr{C}}}\parallel\mathbf{y_k^t}-\frac{1}{b_{t-1}}\mathbf{C_t}\mathbf{y_k^{t-1}}\parallel_F^2$} in each sub carrier $0\leq k\leq N-1$.

Note that this decoder does not require the knowledge of $f_i,g_i,\tau_i, i=1,\dots R$ at the destination although this transmission strategy assumes the knowledge of the maximum of the $\tau$'s since it is needed to decide the length of CP. It turns out that the four group decodable distributed differential space time codes constructed in \cite{RaR6} for synchronous relay networks with power of two number of relays meet all the requirements for use in the proposed transmission scheme as well. Let us see an example to illustrate this fact.

\begin{eg}
\label{eg_4relay_diff}
Let $R=4$. The codebook at the source is given by\\ $\mathscr{C}=\left\{\sqrt{\frac{1}{4}}\left[\begin{array}{cccc}
z_1 & z_2 & -z_3^* & -z_4^*\\
z_2 & z_1 & -z_4^* & -z_3^*\\
z_3 & z_4 & z_1^* & z_2^*\\
z_4 & z_3 & z_2^* & z_1^*
 \end{array}\right]\right\}$ where $\left\{z_{1I},z_{2I}\right\},\left\{z_{1Q},z_{2Q}\right\},\left\{z_{3I},z_{4I}\right\},\left\{z_{3Q},z_{4Q}\right\}\in\mathbb{S}$ and \mbox{$\mathbb{S}=\left\{\left[\begin{array}{c}\frac{1}{\sqrt{3}}\\0\end{array}\right],\left[\begin{array}{c}-\frac{1}{\sqrt{3}}\\0\end{array}\right],\left[\begin{array}{c}0\\ \sqrt{\frac{5}{3}}\end{array}\right],\left[\begin{array}{c}0\\ -\sqrt{\frac{5}{3}}\end{array}\right] \right\}$}. Differential encoding is done at the source node for each sub carrier $0\leq k\leq N-1$ as follows:
$$
\mathbf{a_k^0}=\left[\begin{array}{cccc}\sqrt{R} & 0 & \dots & 0\end{array}\right]^T,~ \mathbf{a_k^t}=\frac{1}{b_t-1}\mathbf{C_t}\mathbf{a_k^{t-1}}, \mathbf{C_t}\in\mathscr{C}.
$$
Once we get $\mathbf{a_k^t},k=0,\dots,N-1$ from the above equation, the $N$ length vectors $\mathbf{z_i}, i=1,\dots,R$ can be obtained. Then IDFT/DFT is applied on these vectors and broadcasted to the relay nodes as shown below: $\mathbf{\bar{a}_1}=\mathrm{IDFT}(\mathbf{a_1})$, $\mathbf{\bar{a}_2}=\mathrm{IDFT}(\mathbf{a_2})$, $\mathbf{\bar{a}_3}=\mathrm{DFT}(\mathbf{a_3})$ and $\mathbf{\bar{a}_4}=\mathrm{DFT}(\mathbf{a_4})$. The relay nodes process the received OFDM symbols as given below:

$$
\begin{array}{|c|c|c|c|c|}
\hline
\mathrm{OFDM} & U_1 & U_2 & U_3 & U_4\\
\mathrm{Symbol} & & & &\\
\hline
1 & \mathbf{r_{1,1}} & \mathbf{r_{2,2}} & -\mathbf{r_{3,3}}^* & -\mathbf{r_{4,4}}^*\\
\hline
2 & \mathbf{r_{1,2}} & \mathbf{r_{2,1}} &
-\mathbf{r_{3,4}}^* & -\mathbf{r_{4,3}}^*\\
\hline
3 & \zeta(\mathbf{r_{1,3}}) & \zeta(\mathbf{r_{2,4}}) &
\zeta(\mathbf{r_{3,1}}^*) & \zeta(\mathbf{r_{4,2}}^*)\\
\hline
4 & \zeta(\mathbf{r_{1,4}}) & \zeta(\mathbf{r_{2,3}}) &
-\zeta(\mathbf{r_{3,2}}^*) & -\zeta(\mathbf{r_{4,1}}^*)\\
\hline
\end{array}
$$

for which $M=2$, $\mathbf{B_1}=\mathbf{I_4}$, \mbox{$\mathbf{B_2}=\left[\begin{array}{cccc}0 & 1 & 0 & 0\\
1 & 0 & 0 & 0\\
0 & 0 & 0 & 1\\
0 & 0 & 1 & 0\end{array}\right]$}, $\mathbf{B_3}=\left[\begin{array}{ccrr}0 & 0 & -1 & 0\\
0 & 0 & 0 & -1\\
1 & 0 & 0 & 0\\
0 & 1 & 0 & 0\end{array}\right]$ and\\
\mbox{$\mathbf{B_4}=\left[\begin{array}{ccrr}0 & 0 & 0 & -1\\
0 & 0 & -1 & 0\\
0 & 1 & 0 & 0\\
1 & 0 & 0 & 0\end{array}\right]$}. It has been proved in \cite{RaR6} that \mbox{$\mathbf{C}\mathbf{B_i}=\mathbf{B_i}\mathbf{C}, i=1,2$} and\\ \mbox{$\mathbf{C}\mathbf{B_i}=\mathbf{B_i}\mathbf{C}^*, i=3,4$} for all $\mathbf{C\in\mathscr{C}}$. At the destination node, decoding for $\left\{z_{1I},z_{2I}\right\}$, $\left\{z_{1Q},z_{2Q}\right\}$, $\left\{z_{3I},z_{4I}\right\}$ and $\left\{z_{3Q},z_{4Q}\right\}$ can be done separately in every sub carrier due to the four group decodable structure of $\mathscr{C}$.
\end{eg}

\section{Simulation Results}
\label{sec5}

In this section, we study the error performance of the DSTBCs proposed in this paper using simulations.  We consider both the synchronous case and the asynchronous case.

For the synchronous case, we compare the performance of the newly proposed DSTBCs from extended Clifford algebras and PCIODs with the DSTBCs from field extensions \cite{OgH,EOK} for a $4$ relay network. The PCIOD taken for simulations is given by
$\left[\begin{array}{rrrr}
z_1 & -z_2^* & 0 & 0\\
z_2 & z_1^* & 0 & 0\\
0 & 0 & z_3 & -z_4^*\\
0 & 0 & z_4 & z_3^*
\end{array}\right]$ where, $\left\{z_{1I},z_{3I}\right\}$, $\left\{z_{1Q},z_{3Q}\right\}$, $\left\{z_{2I},z_{4I}\right\}$, $\left\{z_{2Q},z_{4Q}\right\}$ are allowed to take values from QAM constellation rotated by $31.718^\circ$. The DSTBC from extended Clifford algebras (ECA) is obtained from $\mathbb{A}_2^2$ and is given by $\left[\begin{array}{ccrr}
z_1 & z_2 & -z_3^* & -z_4^*\\
z_2 & z_1 & -z_4^* & -z_3^*\\
z_3 & z_4 & z_1^* & z_2^*\\
z_4 & z_3 & z_2^* & z_1^*
\end{array}\right]$ where, $\left\{z_{1I},z_{2I}\right\}$, $\left\{z_{1Q},z_{2Q}\right\}$, $\left\{z_{3I},z_{3Q}\right\}$, $\left\{z_{4I},z_{4Q}\right\}$ are allowed to take values from QAM constellation rotated by $166.71^\circ$. The DSTBC from field extensions \cite{OgH,EOK} is given by $\left[\begin{array}{cccc}
z_1 & iz_4 & iz_3 & iz_2\\
z_2 & z_1 & iz_4 & iz_3\\
z_3 & z_2 & z_1 & iz_4\\
z_4 & z_3 & z_2 & z_1
\end{array}\right]$ where $z_i,~i=1,2,3,4$ are allowed to take values from regular QAM constellation.

\begin{figure}
\centering
\includegraphics[width=5 in]{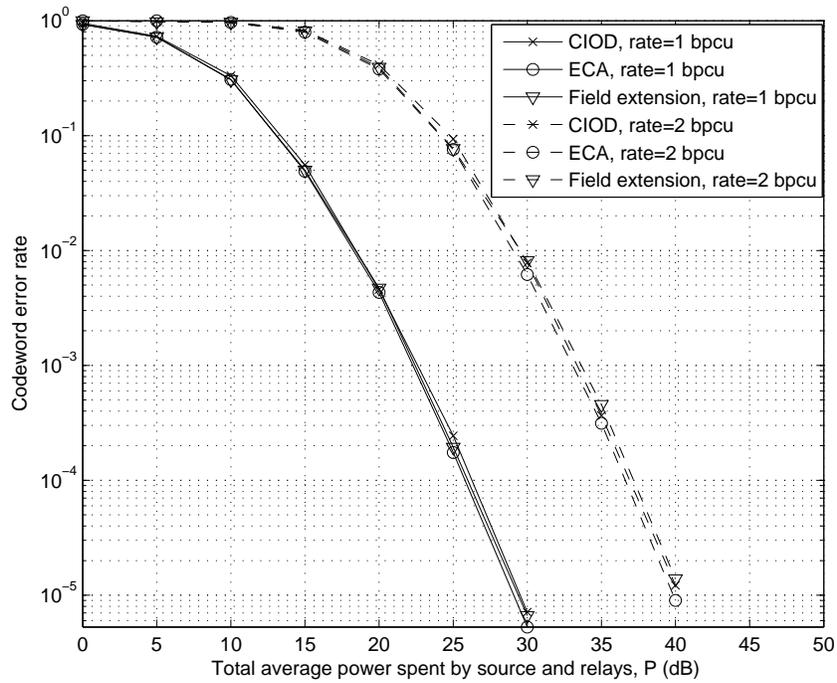}
\caption{Performance comparison of DSTBCs from PCIOD, ECA and field extension}
\label{fig_simulation_syn}
\end{figure}

Fig.\ref{fig_simulation_syn} shows the codeword error rate performance of the proposed DSTBCs ($4$ relays) in comparison to those from field extensions \cite{OgH,EOK} for transmission rates of $1$ bit per channel use (bpcu) and $2$ bpcu. We observe that the error performance of the proposed codes are very similar to the $1$-group ML decodable DSTBC from field extensions \cite{OgH,EOK}. Thus the proposed codes enjoy a good error performance along with reduced ML decoding complexity.

\begin{figure}[h]
\centering
\includegraphics[width=5 in]{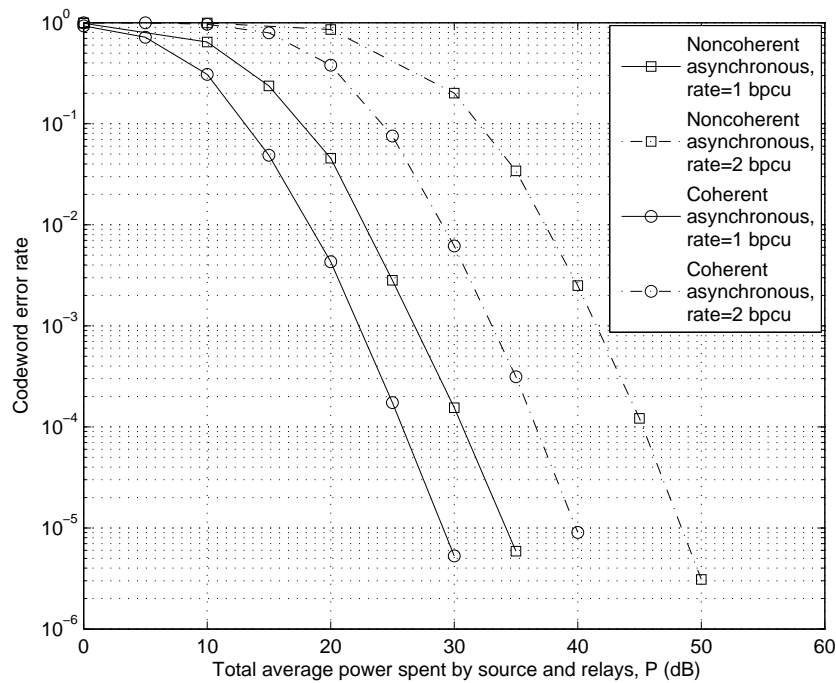}
\caption{Error performance for an asynchronous $4$ relay network with and without channel knowledge}
\label{fig_simulation_asyn}
\end{figure}
For the asynchronous case, we take $R=4$, $N=64$ and the length of CP as $16$. The delay $\tau_i$ at each relay is chosen randomly between $0$ to $15$ with uniform distribution. Two cases are considered for simulation: (i) with channel knowledge at the destination and (ii) without channel knowledge at any node. When channel knowledge is available at the destination, the processing at relay nodes is done as described in Example \ref{eg_asynchronous_4relay} and $166.71^\circ$ rotated QPSK is used as the signal set. Coherent detection is done at the destination in every sub carrier. For the case of no channel knowledge, differential encoding is done at the source as described in Example \ref{eg_4relay_diff} and a modified signal set (as explained in \cite{RaR6} for scaled unitary codewords)  is employed. Also, differential detection is done at the destination in every sub carrier. Simulations are done for transmission rates (neglecting rate loss due to CP) of $1$ bpcu and $2$ bpcu in both the cases.

The error performance curves for both the cases are shown in Fig. \ref{fig_simulation_asyn}. It can be observed that the error performance for the no channel knowledge case performs approximately $5$ dB worse and $8$ dB worse compared to that with channel knowledge for transmission rates of $1$ bpcu and $2$ bpcu respectively. This is partly due to the differential transmission/reception technique (which increases the effective noise seen by destination) and also in part, because of the change in signal set from rotated QAM to some other signal set \cite{RaR6} for scaled unitary codeword matrices. The change in signal set for the sake of scaled unitary codeword matrices results in a reduction of the coding gain.

\section{Discussion}
\label{sec6}

In this paper, the problem of optimal rate-ML decoding complexity within the framework of multi-group decodable STBCs was posed. Then an algebraic framework for studying CUW STBCs was introduced using which the optimal rate-ML decoding complexity tradeoff of CUW STBCs was obtained and several optimal code constructions were also provided. Then the paper focused on constructing multi-group decodable DSTBCs and three new classes of four group decodable DSTBCs were constructed. The OFDM based transmission scheme in \cite{LiX} was extended to a more general transmission scheme for arbitrary number of relays that can achieve full cooperative diversity in the presence of timing errors at the relay nodes. It was then pointed out that the four group decodable DSTBCs constructed in this paper can be applied in the proposed transmission scheme for any number of relay nodes. A drawback of the proposed transmission scheme is that it requires a large coherence interval spanning over multiple OFDM symbol durations. Moreover there is a rate loss due to the use of CP, however this loss can be made negligible by choosing a large enough $N$. Finally, it was shown how differential encoding at the source node can be combined with this OFDM based transmission strategy to arrive at a new transmission strategy than can achieve full cooperative diversity in asynchronous relay networks with no channel knowledge as well as no timing error knowledge. The distributed differential STBCs in \cite{RaR6} were then proposed for application in this setting for power of two number of relays.

Some of the interesting directions for further work are as listed below:

\begin{enumerate}
\item The CUW STBCs are based on sufficient conditions for $g$-group ML decodability. An algebraic framework for $g$-group ML decodable STBCs based on the necessary and sufficient conditions and the optimal rate-ML decoding complexity tradeoff of general $g$-group ML decodable STBCs is an important open problem.
\item How to construct $g$-group decodable DSTBCs for $g\neq 4$? In particular, constructing single symbol decodable DSTBCs for the synchronous as well as asynchronous cases is worth investigating. Some initial results in this direction have been reported in \cite{YiK,HaR}.
\item In the results pertaining to asynchronous relay networks, we have assumed that there are no frequency offsets at the relay nodes. Extending this work to asynchronous relay networks with frequency offsets is an interesting direction for further work. This problem has been addressed in \cite{LiX2} for the case of two relay nodes.
\item In this work, we have constructed DSTBCs with low ML decoding complexity mainly for the two phase amplify and forward based transmission protocols \cite{JiH,EVAK,RaR1}. Constructing low ML decoding complexity codes for the other cooperative diversity protocols in the literature is also an interesting problem.
\end{enumerate}

\section*{Acknowledgement}
The authors thank Prof. Hamid Jafarkhani and Prof. Xiang Gen Xia for sending us preprints of their recent works \cite{JiJ1,LiX,GuX,JiJ,LiX2}. The authors are grateful to the anonymous reviewers for their valuable comments which greatly helped to improve the organization and presentation of the contents.

\bibliographystyle{IEEEtran}

\begin{thebibliography}{10}

\bibitem{TJC} Vahid Tarokh, Hamid Jafarkhani and A. R. Calderbank, ``Space-Time Block Codes from Orthogonal Designs,'' \emph{IEEE Trans. Inf. Theory}, vol. 45, no. 5, pp. 1456-1467, July 1999.

\bibitem{TiH1} Olav Tirkkonen and Ari Hottinen, ``Square-Matrix Embeddable Space-Time Block Codes for Complex Signal Constellations," \emph{IEEE Trans. Inf. Theory}, vol. 48, no. 2, pp. 384-395, Feb. 2002.

\bibitem{Lia} Xue-Bin Liang, "Orthogonal Designs with Maximal Rates," \emph{IEEE Trans. Inf. Theory}, vol.49, no. 10, pp. 2468 - 2503, Oct. 2003.

\bibitem{Jaf} Hamid Jafarkhani, ``A Quasi-Orthogonal Space-Time Block Code,'' \emph{IEEE Trans. Commun.}, vol. 49, no. 1, pp. 1-4, Jan. 2001.

\bibitem{TBH} O. Tirkkonen, A. Boariu and A. Hottinen, ``Minimal Non-Orthogonality Rate 1 Space-Time Block Code for $3+$ Tx Antennas,'' Proceedings of \emph{IEEE International Symposium on Spread-Spectrum Techniques \& Applications}, New Jersey, Sep. 6-8, 2000, pp. 429-432.

\bibitem{YGT} C. Yuen, Y. L. Guan, T. T. Tjhung, ``Quasi-Orthogonal STBC with Minimum Decoding Complexity,'' \emph{IEEE Trans. Wireless Commun.}, vol. 4, no. 5, pp. 2089-2094, Sep. 2005.

\bibitem{WWX} Haiquan Wang, Dong Wang, and X.-G. Xia, ``Optimal Quasi-Orthogonal Space-Time Block Codes with Minimum Decoding Complexity,'' submitted to \emph{IEEE Trans. Inform. Theory}.

\bibitem{DYTGT} D. N. Dao, C. Yuen, C. Tellambura, Y. L. Guan and T. T. Tjhung, ``Four-Group Decodable Space-Time Block Codes,'' \emph{IEEE Trans. Signal Processing}, vol. 56, no. 1, pp. 424-430, Jan. 2008.

\bibitem{SRS1} B. A. Sethuraman, B. Sundar Rajan and V. Shashidhar, ``Full-diversity, High-rate Space-Time Block Codes from Division Algebras,'' \emph{IEEE Trans. Inf. Theory}, vol. 49, no. 10, pp. 2596-2616, Oct. 2003.

\bibitem{SRS2} V. Shashidhar, B. Sundar Rajan and B. A. Sethuraman, ``Information-Lossless Space-Time Block Codes from Crossed-Product Algebras,'' \emph{IEEE Trans. Inf. Theory}, Vol.52, No.9, pp. 3913-3935, Sep. 2006.

\bibitem{KhR} Md. Zafar Ali Khan and B. Sundar Rajan, ``Single-Symbol Maximum-Likelihood Decodable Linear STBCs," \emph{IEEE Trans. Inf. Theory}, vol. 52, no. 5, pp. 2062-2091, May 2006.

\bibitem{KaR1} Sanjay Karmakar, B.Sundar Rajan, ``Minimum-Decoding Complexity, Maximum-rate Space-Time Block Codes from Clifford Algebras," Proc. \emph{IEEE Intnl. Symp. Inform. Theory}, Seattle, July 9-14, 2006, pp. 788-792.

\bibitem{KaR2} Sanjay Karmakar and B. Sundar Rajan, ``High-rate Double-Symbol-Decodable STBCs from Clifford Algebras,'' Proc. \emph{IEEE Globecom 2006}, San Francisco, Nov. 27-Dec. 1, 2006.

\bibitem{KaR3} Sanjay Karmakar and B. Sundar Rajan, ``Multi-group decodable STBCs from Clifford Algebras,'' Proc. \emph{IEEE Information Theory Workshop}, Chengdu, China, Oct. 22-26, 2006, pp. 448-452.

\bibitem{KaR4} Sanjay Karmakar, B.Sundar Rajan, ``High-rate Multi-Symbol-Decodable STBCs from Clifford Algebras,'' Proc. \emph{13th National Conference on Communications(NCC 2007)}, IIT Kanpur, Jan. 26-28, 2007. Available online http://ece.iisc.ernet.in/\~{}bsrajan

\bibitem{HaH} Babak Hassibi and Bertrand M. Hochwald, ``High-Rate Codes That Are Linear in Space and Time,'' \emph{IEEE Trans. Inf. Theory}, vol. 48, no. 7, pp. 1804-1824, July 2002.

\bibitem{ZhT} Lizhong Zheng and David N. C. Tse, ``Diversity and Multiplexing: A Fundamental Tradeoff in Multiple-Antenna Channels,'' \emph{IEEE Trans. Inf. Theory}, vol. 49, no. 5, pp. 1073-1096, May 2003.

\bibitem{TaV} Saurabha Tavildar and Pramod Viswanath, ``Approximately Universal Codes Over Slow-Fading Channels,'' \emph{IEEE Trans. Inf. Theory}, vol. 52, no. 7, pp. 3233-3258, July 2006.

\bibitem{KiR1} T. Kiran and B. Sundar Rajan, ``STBC-Schemes With Nonvanishing Determinant for Certain Number of Transmit Antennas,'' \emph{IEEE Trans. Inf. Theory}, vol. 51, no. 8, pp. 2984-2992, Aug. 2005.

\bibitem{ORBV} F. Oggier, G. Rekaya, J.-C. Belfiore, and E. Viterbo, ``Perfect space-time block codes,'' \emph{IEEE Trans. Inf. Theory}, vol. 52, no. 9, pp. 38853902, Sep. 2006.

\bibitem{EKPK} Petros Elia, K. Raj Kumar, Sameer A. Pawar, P. Vijay Kumar and Hsiao-Feng Lu, ``Explicit Space-Time Codes Achieving the Diversity Multiplexing Gain Tradeoff,'' \emph{IEEE Trans. Inf. Theory}, vol. 52, no. 9, pp. 3869-3884, Sep. 2006.

\bibitem{ViB} Emanuele Viterbo and Joseph Boutros, ``A Universal Lattice Code Decoder for Fading Channels,'' \emph{IEEE Trans. Inf. Theory}, vol. 45, no. 5, pp. 1639-1642, July 1999.



\bibitem{DCB} Oussama Damen, Ammar Chkeif, and Jean-Claude Belfiore, ``Lattice Code Decoder for Space-Time Codes," \emph{IEEE Communication Letters}, vol. 4, no. 5, May 2000.

\bibitem{HaV} Babak Hassibi and Haris Vikalo, ``On the Sphere-Decoding Algorithm I.
Expected Complexity,'' \emph{IEEE Trans. Signal Processing}, vol. 53, no. 8, pp. 2806-2818, Aug. 2005.

\bibitem{JaO} J. Jalden and B. Ottersten, ``An exponential lower bound on the expected complexity of sphere decoding,'' Proc. \emph{IEEE International Conference on Acoustics, Speech, and Signal Processing}, Montreal, May 17-21, 2004, vol. 4, pp. 393-396,

\bibitem{HTW} A. Hottinen, O. Tirkkonen and R. Wichman, ``Multi-antenna Transceiver Techniques for 3G and Beyond,'' WILEY publisher, UK.

\bibitem{Ala} Siavash M. Alamouti, ``A Simple Transmit Diversity Technique for Wireless Communications,'' \emph{IEEE J. Select. Areas Commun.}, vol. 16, no. 8, pp. 1451-1458, Oct. 1998.

\bibitem{BRV} Jean-Claude Belfiore, Ghaya Rekaya and Emanuele Viterbo, ``The Golden Code: A $2\times 2$ Full-Rate SpaceTime
Code With Nonvanishing Determinants,'' \emph{IEEE Trans. Inf. Theory}, vol. 51, no. 4, pp. 1432-1436, April 2005.






\bibitem{BHV} E. Biglieri, Y. Hong and E. Viterbo, ``On Fast-Decodable Space-Time Block Codes,'' submitted to \emph{IEEE Trans. Inform. Theory}. Available in Arxiv cs.IT/0708.2804v1.

\bibitem{PGA} Javier Paredes, Alex B. Gershman and Mohammad Gharavi-Alkhansari, ``A $2\times 2$ Space-Time Code with Non-Vanishing Determinants and Fast Maximum Likelihood Decoding,'' Proc. \emph{IEEE International Conference on Acoustics, Speech and Signal Processing}, Honolulu, USA, April 15-20, pp. II-877 - II-880.

\bibitem{SaF} Michael Samuel and Michael P. Fitz ``Reducing the Detection Complexity by using $2\times 2$ Multi-Strata Space-Time Codes,'' Proc. \emph{IEEE Intnl. Symp. Inform. Theory}, Nice, France, June 24-29, 2007, pp. 1946-1950.

\bibitem{LaW} J.N. Laneman and G.W. Wornell, ``Distributed space-time-coded protocols for exploiting cooperative diversity in wireless networks,'' \emph{IEEE Trans. Inf. Theory}, vol. 49, no. 10, pp. 2415-2425, Oct. 2003.

\bibitem{JiH} Yindi Jing and Babak Hassibi, ``Distributed space-time coding in wireless relay networks," \emph{IEEE Trans. Wireless Commun.}, vol. 5, no. 12, pp. 3524-3536, Dec. 2006.

\bibitem{OgH} Fr\'{e}d\'{e}rique Oggier, Babak Hassibi, ``An Algebraic Family of Distributed Space-Time Codes for Wireless Relay Networks,'' Proc. \emph{IEEE Intnl. Symp. Inform. Theory}, Seattle, July 9-14, 2006, pp. 538-541.

\bibitem{EOK} P. Elia, F. Oggier and P. Vijay Kumar, ``Asymptotically Optimal Cooperative Wireless Networks with Reduced Signaling Complexity,'' \emph{IEEE J. Select. Areas Commun.}, vol. 25, no. 2, pp. 258-267, Feb. 2007.

\bibitem{EVAK} Petros Elia, K. Vinodh, M. Anand, P. Vijay Kumar, ``D-MG Tradeoff and Optimal Codes for a Class of AF and DF Cooperative Communication Protocols,'' \emph{IEEE Trans. Inform. Theory}, to appear. Available in Arxiv cs.IT/0611156.

\bibitem{KiR2} Kiran T. and B. Sundar Rajan, ``Distributed Space-Time Codes with Reduced Decoding Complexity,'' Proc. \emph{IEEE Intnl. Symp. Inform. Theory}, Seattle, July 9-14, 2006, pp. 542-546.

\bibitem{JiJ1} Yindi Jing and Hamid Jafarkhani, ``Using Orthogonal and Quasi-Orthogonal Designs in Wireless Relay Networks,'' \emph{IEEE Trans. Inf. Theory}, vol. 53, no. 11, pp. 4106 - 4118, Nov. 2007.

\bibitem{LiX} Zheng Li and X.-G. Xia, ``A Simple Alamouti Space-Time Transmission Scheme for Asynchronous Cooperative Systems,'' \emph{IEEE Signal Processing Letters}, vol. 14, no. 11, pp. 804-807, Nov. 2007.

\bibitem{GuX} X. Guo and X.-G. Xia, ``A Distributed Space-Time Coding in Asynchronous Wireless Relay Networks,'' \emph{IEEE Trans. Wireless Commun.}, vol. 7, no. 5, pp. 1812-1816, May 2008.

\bibitem{WGV} S. Wei, D.L. Goeckel and M.C. Valenti, ``Asynchronous cooperative diversity,'' \emph{IEEE Trans. Wireless Commun.}, vol. 5, no. 6, pp. 1547-1557, June 2006.

\bibitem{XLi} X. Li, ``Space-time coded multi-transmission among distributed transmitters without perfect synchronization,'' \emph{IEEE Signal Processing Letters}, vol. 11, no. 12, pp. 948-951, Dec. 2004.

\bibitem{YLX} Y. Li and X.-G. Xia, ``Full diversity distributed space-time codes for asynchronous cooperative communications,'' Proc. \emph{IEEE Intnl. Symp. Inform. Theory}, Adelaide, Australia, Sep. 4-9, 2005, pp. 911-915.

\bibitem{ShX} Y. Shang and X.-G. Xia, ``Shift-full-rank matrices and applications in space-time trellis codes for relay networks with asynchronous cooperative diversity,'' \emph{IEEE Trans. Inf. Theory}, vol. 52, pp. 3153-3167, July 2006.

\bibitem{MHSD} Y. Mei, Y. Hua, A. Swami and B. Daneshrad, ``Combating synchronization errors in cooperative relays,'' Proc. \emph{IEEE International Conference on Acoustics, Speech and Signal Processing}, vol. 3, pp. 369-372, March 2005.

\bibitem{LZX} Y. Li, W. Zhang and X.-G. Xia, ``Distributed high-rate full-diversity space-frequency codes for asynchronous cooperative communications,'' Proc. \emph{IEEE Intnl Symp. Inform. Theory}, Seattle, Washington, USA, July 9-14, 2006.

\bibitem{FOV} E. Bayer-Fluckiger, Fr\'{e}d\'{e}rique Oggier, E. Viterbo: "New Algebraic Constructions of Rotated $\mathbb{Z}^n$-Lattice Constellations for the Rayleigh Fading Channel," \emph{IEEE Trans. Inf. Theory}, vol. 50, no. 4, pp.702-714, April 2004.

\bibitem{RTR} G. Susinder Rajan, Anshoo Tandon and B. Sundar Rajan, ``On Four-group ML decodable Distributed Space Time Codes for Cooperative Communication,'' Proc. \emph{IEEE Wireless Communications and Networking Conference}, Hong Kong, March 11-15, 2007.

\bibitem{RaR1} G. Susinder Rajan and B. Sundar Rajan, ``A Non-orthogonal distributed space-time protocol, Part-I: Signal model and design criteria,'' Proc. \emph{IEEE Information Theory Workshop}, Chengdu, China, Oct. 22-26, 2006, pp. 385-389.

\bibitem{RaR2} G. Susinder Rajan and B. Sundar Rajan, ``A Non-orthogonal distributed space-time protocol, Part-II: Code Constructions and DM-G Tradeoff,'' Proc. \emph{IEEE Information Theory Workshop}, Chengdu, China, Oct. 22-26, 2006, pp. 488-492.

\bibitem{RaR3} ----, ``Algebraic distributed space-time codes with low ML decoding complexity,'' Proc. \emph{IEEE Intnl. Symp. Inform. Theory}, Nice, France, June 24-29, 2007, pp. 1516-1520.

\bibitem{RaR4} ----, ``STBCs from Representation of Extended Clifford Algebras,''Proc. \emph{IEEE Intnl. Symp. Inform. Theory}, Nice, France, June 24-29, 2007, pp. 1626-1630.

\bibitem{RaR5} ----, ``OFDM based Distributed Space Time Coding for Asynchronous Relay Networks,'' submitted for possible presentation in \emph{IEEE International Conference on Communications} 2008, Beijing.

\bibitem{RaR6} G. Susinder Rajan and B. Sundar Rajan, ``Algebraic Distributed Differential Space-Time Codes with Low Decoding Complexity,'' \emph{IEEE Trans. Wireless Commun.},  vol. 7, no. 10, pp. 3962-3971, Oct. 2008.

\bibitem{KiR} Kiran T. and B. Sundar Rajan, ``Partially-coherent distributed space-time codes with differential encoder and decoder,'' \emph{IEEE J. Select. Areas Commun.}, vol. 25, no. 2, pp. 426-433, Feb. 2007.

\bibitem{OgH2}  Fr\'{e}d\'{e}rique Oggier, Babak Hassibi, ``Cyclic Distributed Space-Time Codes for Wireless Relay Networks with no Channel Information,'' submitted to \emph{IEEE Trans. Inf. Theory}. Available online http://www.systems.caltech.edu/\~{}frederique/submitDSTCnoncoh.pdf

\bibitem{JiJ} Yindi Jing and H. Jafarkhani, ``Distributed Differential Space-Time Coding for Wireless Relay Networks,'' \emph{IEEE Trans. Commun.}, vol. 56, no. 7, pp. 1092-1100, July 2008.

\bibitem{YGT2} C. Yuen, Y. L. Guan, T. T. Tjhung,``On the Search for High-Rate Quasi-Orthogonal Space-Time Block Code,'' \emph{International Journal of Wireless Information Network}, vol. 13, no. 4, pp. 329-340, October 2006.

\bibitem{Vit} Full Diversity Rotations, \\
http://www1.tlc.polito.it/\~{}viterbo/rotations/rotations.html

\bibitem{YiK} Zhihang Yi, Il-Min Kim, ``Single-symbol ML decodable distributed STBCs for cooperative networks,'' \emph{IEEE Trans. Inf. Theory}, vol. 53, pp. 2977-2985, Aug. 2007.

\bibitem{HaR} J. Harshan, B. Sundar Rajan, ``High Rate Single-Symbol Decodable Precoded DSTBCs for Cooperative Networks,'' technical report TR-PME-2007-08 of DRDO-IISc Programme on Advanced Research in Mathematical Engineering. Available in Arxiv:0708.4214.

\bibitem{LiX2} Zheng Li and X.-G. Xia, ``An Alamouti Coded OFDM Transmission for Cooperative Systems Robust to Both Timing Errors and Frequency Offsets,'' \emph{IEEE Trans. Wireless Commun.}, vol. 7, no. 5, pp. 1839-1844, May 2008.


\bibitem{HoJ} R.A. Horn and C.R. Johnson, Matrix Analysis. Cambridge, U.K.: Cambridge Univ. Press, 1985.

\bibitem{Jac} N.Jacobson, Basic Algebra I, 2nd ed., New York:W.H.Freeman, 1985.

\bibitem{Serre} Jean-Pierre Serre, ``Linear Representations of Finite Groups,'' Springer-Verlag.
\end{thebibliography}


\end{document}